\documentclass[11pt]{amsart}

\usepackage{amsmath}
\usepackage{amssymb}
\usepackage[final]{graphicx}

\newtheorem{thm}{Theorem}
\newtheorem{lem}[thm]{Lemma}
\newtheorem{prop}[thm]{Proposition}
\newtheorem{cor}[thm]{Corollary}

\theoremstyle{definition}
\newtheorem{defn}{Definition}[section]

\theoremstyle{remark}
\newtheorem{remark}{Remark}[thm] 

\theoremstyle{plain}

\numberwithin{equation}{section}

\def\CC{{\mathbb C}}

\def\HH{{\mathbb H}}
\def\NN{{\mathbb N}}

\def\RR{{\mathbb R}}

\def\e{\mathrm{e}}
\def\i{\mathrm{i}}

\def\id{\operatorname{id}}

\def\D{\operatorname{D{}}}
\def\E{\operatorname{E{}}}

\def\G{\operatorname{G{}}}

\def\SL{\operatorname{SL}}

\def\PSL{\operatorname{PSL}}

\def\Vol{\operatorname{Vol}}

\def\scrI{{\mathcal I}}

\def\Re{\operatorname{Re}}
\def\Im{\operatorname{Im}}


\newcommand{\be}{\begin{equation}}
\newcommand{\ee}{\end{equation}}
\newcommand{\ba}{\begin{align}}
\newcommand{\ea}{\end{align}}

\begin{document}

\begin{center}
\title[The trace formula for a point scatterer]{The trace formula for a point scatterer \\ on a hyperbolic surface with one cusp}
\author{Henrik Uebersch\"ar}
\address{The Raymond and Beverly Sackler Schoool of Mathematical Sciences, Tel Aviv University, Tel Aviv 69978,
Israel.}
\email{henrik@math.tau.ac.il}
\date{\today. This work is part of the author's PhD thesis completed at the University of Bristol in January 2011. Support from an EPSRC doctoral training grant is gratefully acknowledged. The author is currently supported by a Minerva Fellowship.}
\maketitle
\end{center}

\begin{abstract}
We prove an exact trace formula for the Laplacian with a delta potential - also known as a point scatterer - on a non-compact hyperbolic surface of finite volume with one cusp. Our formula is an analogue of the Selberg trace formula. We express the difference of perturbed and unperturbed trace as a smooth term plus a sum over combinations of diffractive orbits. 
\end{abstract}

\section{Introduction}

Selberg's trace formula is a central tool in the spectral theory of automorphic forms \cite{Hj2,Hj3,Iw} and has important applications in number theory \cite{Hj1}. It also plays a central role in the theory of quantum chaos \cite{Bz} as an exact analogue of Gutzwiller's celebrated trace formula \cite{CRR,G,PU} which links the distribution of the energy levels of a classically chaotic quantum system to the actions of periodic orbits. Selberg proved his trace formula in 1956 for the Laplacian on weakly symmetric Riemannian spaces \cite{S}. 


Among number theorists point scatterers attracted attention in the early 80s when D. Hejhal gave the mathematical explanation to the appearance of Riemann zeros in a numerical experiment of H. Haas \cite{Ha}. In 1977 Haas had produced zeros of the Riemann zeta function when he computed the eigenvalues of the Laplacian on the modular domain. In 1979 Hejhal demystified \cite{Hj4} this phenomenon when he showed that the apparent eigenfunctions which Haas had discovered featured a logarithmic singularity, which was hard to detect numerically, at the point $z_0=e^{\i\pi/3}$ and thus failed to be eigenfunctions of the Laplacian. The full mathematical explanation of these pseudo-eigenfunctions was given by Colin de Verdi\`ere \cite{CdV} in 1983 by revealing them to be eigenfunctions of rank one perturbations of the Laplacian by a delta potential at the conic singularity. 

Furthermore, point scatterers on flat tori are of interest as models of the transition between integrability and chaos in quantum systems. Whereas their classical dynamics is close to integrable, the associated quantum system is not. In \cite{RU} it is shown that the wave functions of a point scatterer on the torus become generically equidistributed in position space. A similar phenomenon occurs for the Dirichlet or Neumann Laplacian on rational polygons \cite{MR}. The exact trace formula for a point scatterer also has interesting applications \cite{RU2} regarding the investigation of its eigenvalue spacing distribution. 

{\bf Acknowledgements:} The author would like to thank Professor Jens Marklof for his guidance and many helpful discussions during the completion of this work. The author is also grateful to Professors Dennis Hejhal and Andreas Str\"ombergsson for discussions about this work during a visit to the University of Uppsala. Several helpful remarks by Yiannis Petridis and Andrew Booker have led to the improvement of this paper. This research was conducted with the support of an EPSRC doctoral training stipend. The author is currently supported by a Minerva Fellowship.

\section{Background and Results}

\subsection{The Laplacian on a hyperbolic surface}
Denote by $\HH$ the hyperbolic halfplane with Riemannian metric $ds^2=y^{-2}(dx^2+dy^2)$ and volume element $d\mu(z)=y^{-2}dx\,dy$. The Laplace-Beltrami operator on $\HH$ is given by $$\Delta=y^2\left(\frac{\partial^2}{\partial x^2}+\frac{\partial^2}{\partial y^2}\right).$$ Let $\PSL(2,\RR)=\SL(2,\RR)/\lbrace\id,-\id\rbrace$.
The group of orientation preserving isometries of $\HH$ is given by the M\"obius transformations
\begin{equation}
	 z\to\frac{az+b}{cz+d},\qquad
	 \begin{pmatrix}
   a & b\\
   c & d
   \end{pmatrix}
   \in\PSL(2,\RR).
\end{equation}

Let $\Gamma\subset\PSL(2,\RR)$ be a discrete subgroup such that the quotient $X=\Gamma\backslash\HH$ has one cusp and finite volume. We may assume that the cusp is at $\infty$. Denote the stabiliser of the cusp by $\Gamma_\infty$. For $s\in\CC$, $\Re s>1$, and $z\in X$ we define the Eisenstein series on $X$ by
\begin{equation}\label{eisen11}
E(z,s)=\sum_{\gamma\in\Gamma_\infty\backslash\Gamma}(\Im(\gamma z))^s.
\end{equation}
The function $E(z,s)$ extends meromorphically in $s$ to all of $\CC$ and satisfies the functional equation
\begin{equation}
E(z,s)=\varphi(s)E(z,1-s)
\end{equation}
where $\varphi(s)$ denotes the scattering coefficient, and in the cusp, as $y=\Im z\to+\infty$ we have the asymptotics $$E(z,s)\sim y^s+\varphi(s)y^{1-s}.$$

We define the automorphic Green's function $G^\Gamma_s(z,w)$ as the resolvent kernel of the Laplacian on $X$ which satisfies
\begin{equation}
(\Delta+s(1-s))G^\Gamma_s(\cdot,w)=\delta_w.
\end{equation}
In particular (cf. \cite{Iw}, p. 105, eq. (7.19)) it satisfies the functional equation
\begin{equation}\label{Greenfe}
G_{1-s}(z,w)=G_{s}(z,w)-\frac{E(z,s)E(w,1-s)}{1-2s}.
\end{equation}

The spectrum of the Laplacian on $L^2(X)$ consists of a discrete part (Maass forms) and a continous part (Eisenstein series). We have three types of eigenfunctions.
\begin{itemize}
\item[(i)] Residual Maass forms; $(\Delta+\lambda_{k})\varphi_{k}=0$, $\lambda_{k}\in[0,\tfrac{1}{4}]$ which arise as residues of Eisenstein series,\\
\item[(ii)] Maass cusp forms; $(\Delta+\lambda_{k})\varphi_{k}=0$, $\lambda_{k}>0$, and in the cusp as $y\to+\infty$, $\varphi_k(x+\i y)=O(e^{-2\pi y})$,\\
\item[(iii)] Eisenstein series; $(\Delta+\tfrac{1}{4}+r^2)E(z,\tfrac{1}{2}+\i r)=0$, $r\geq0$.
\end{itemize}

\subsection{Definition of the point scatterer}

In this section we will merely review the basic set-up. For a more detailed introduction see section 3.3 in \cite{U} or section 2 in \cite{RU}. For additional background reading consult \cite{Al}, Chapter 1, and \cite{Si}, Chapter 10. 

Denote by $\delta_{z_0}$ the Dirac mass at $z_0\in X$. Formally we associate a point scatterer with the operator $\Delta_{\alpha,z_0}=\Delta+\alpha\delta_{z_0}(\delta_{z_0},\cdot)$, where $\alpha\in\RR_*=\RR\setminus\lbrace0\rbrace$ is a coupling constant. Von Neumann's self-adjoint extension theory allows us to realise the formal operator $-\Delta_{\alpha,z_0}$ as a self-adjoint extension (SAE) of the positive symmetric operator $-\Delta|_{D_0}$, the positive Laplacian restricted to functions which vanish at $z_0$, where $$D_{0}=C^\infty_0(X\setminus\lbrace z_0\rbrace)\cap L^2(X).$$

The symmetric operator $-\Delta|_{D_0}$ has deficiency indices $(1,1)$. The deficiency elements are the automorphic Green functions $G^\Gamma_t(\cdot,z_{0})$, $G^\Gamma_{\bar{t}}(\cdot,z_0)$, where $t(1-t)=\i$, $\Re t>\tfrac{1}{2}$. It therefore admits a one-parameter family of self-adjoint extensions which we will denote by $\lbrace-\Delta_{\varphi}\rbrace_{\varphi\in(-\pi,\pi)}$. In fact there is a one-to-one correspondence between the coupling constant $\alpha\in\RR_*$ and the parameter $\varphi=\varphi(\alpha)$ of the corresponding SAE (cf. \cite{Al}, p. 34. Thm. 1.3.2)
\begin{equation}\label{ext}
-\alpha \lim_{z\to z_0}\Im G^\Gamma_t(z,z_0)=\cot\frac{\varphi}{2}.
\end{equation} 

The domain associated with the SAE $-\Delta_\varphi$ is given by
\begin{equation}
D_\varphi=\lbrace f=g+cG^\Gamma_t+ce^{\i\varphi}G^\Gamma_{\bar{t}}\mid (g,c)\in D_0\times\CC \rbrace
\end{equation}
and it acts as $$-\Delta_\varphi f=-\Delta g+c\i G^\Gamma_t-c\i e^{\i\varphi}G^\Gamma_{\bar{t}}.$$

\subsection{The spectrum}

Fix real $\alpha\in\RR_*$ and $z_0\in X$. We are interested in solutions to the equation
\begin{equation}\label{spec}
(\Delta_{\varphi}+\lambda)\psi_\lambda=0.
\end{equation} 
If $\psi_\lambda\in L^2(X)$, the solutions fall into two groups:

\begin{itemize}
\item[(i)] Old eigenfunctions; $\lambda\in\sigma_d(-\Delta)$ and either has multiplicity $\geq2$ or the old eigenfunction vanishes at $z_0$
\item[(ii)] New eigenfunctions; $\lambda\notin\sigma_d(-\Delta)$, the corresponding eigenfunctions are automorphic Green functions, the new eigenvalues are obtained from a spectral equation, $$\frac{1}{\alpha}=\lim_{z\to z_0}\lbrace G_{1/2+\i\rho}(z,z_0)-\Re G_t(z,z_0)\rbrace,\qquad \lambda=\tfrac{1}{4}+\rho^2.$$ We will show in Theorem \ref{prop} that for $\lambda\geq\tfrac{1}{4}$ this spectral equation is equivalent to $$\frac{1}{\alpha}=\lim_{z\to z_0}\lbrace \Re G_{1/2+\i\rho}(z,z_0)-\Re G_t(z,z_0)\rbrace,$$ $$E(z_0,\tfrac{1}{2}+\i r)=0.$$
\end{itemize}
For more information about the discrete spectrum see also section 2.1 in \cite{U}. The generalised solutions to \eqref{spec} are analogues $E^\alpha(z,\tfrac{1}{2}+\i r)$ of the Eisenstein series \eqref{eisen11} with eigenvalue $\lambda=\tfrac{1}{4}+r^2$, $r\in\RR$. They will be derived in section 4.

\subsection{Results}
The spectrum of $-\Delta_{\varphi}$ on $X$ has a continuous part $[\tfrac{1}{4},\infty)$ which corresponds to perturbed Eisenstein series and a discrete part (cf. \cite{CdV}, Thm. 3) consisting of perturbed cusp forms and a finite number of perturbed residual Maass forms. The perturbed Eisenstein series play the analogous role of the scattering solutions in the unperturbed case. They satisfy a functional equation similar to that satisfied by classical Eisenstein series. We give the proof of the following Theorem in section 4.1.
\begin{thm}\label{thm2}
The generalised eigenfunctions of $-\Delta_{\varphi}$ are the perturbations of the Eisenstein series
\begin{equation}\label{eisen}
E^\alpha(z,s)=E(z,s)-\frac{E(z_{0},s)}{S_\alpha(s)}\,G^{\Gamma}_{s}(z,z_{0})
\end{equation}
where $S_\alpha(s)$ denotes the spectral function of $-\Delta_\varphi$ (see section 3). The perturbed Eisenstein series satisfy
\begin{equation}
(\Delta_\varphi+s(1-s))E^\alpha(z,s)=0.
\end{equation}
In the cusp they have the asymptotics
\begin{equation}
E^\alpha(x+\i y,s)=y^{s}+\varphi_\alpha(s)y^{1-s}+O(\e^{-2\pi y}), \qquad y\to\infty,
\end{equation}
where $\varphi_\alpha(s)$ is the scattering coefficient for the perturbed Laplacian $-\Delta_{\varphi}$. They satisfy the funtional equation
\begin{equation}
E^\alpha(z,s)=\varphi_\alpha(s)E^\alpha(z,1-s).
\end{equation}
\end{thm}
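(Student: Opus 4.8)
The plan is to verify directly that the function $E^\alpha(z,s)$ defined by \eqref{eisen} lies in the domain $D_\varphi$, satisfies the eigenvalue equation, and then read off the cusp asymptotics and functional equation from those of $E(z,s)$ and $G^\Gamma_s(z,z_0)$. First I would observe that away from $z_0$ the function $E^\alpha(z,s)$ is manifestly a solution of $(\Delta+s(1-s))E^\alpha=0$, since both $E(z,s)$ and $G^\Gamma_s(z,z_0)$ satisfy that equation there (the Green's function by definition away from its pole). The content is therefore entirely in the behaviour at $z_0$: one must check that $E^\alpha$ has a representation $g + cG^\Gamma_t + ce^{\i\varphi}G^\Gamma_{\bar t}$ with $g\in D_0$, i.e. that the singularity of $-\frac{E(z_0,s)}{S_\alpha(s)}G^\Gamma_s(z,z_0)$ at $z_0$ is precisely cancelled, up to a smooth remainder, by the correct combination of the deficiency elements $G^\Gamma_t, G^\Gamma_{\bar t}$. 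This is where the definition of the spectral function $S_\alpha(s)$ enters: $S_\alpha(s)$ should be defined (in section 3) exactly so that this matching holds, so the verification amounts to comparing the $z\to z_0$ expansions of $G^\Gamma_s$, $G^\Gamma_t$, $G^\Gamma_{\bar t}$ — all of which share the same logarithmic principal part $-\frac{1}{2\pi}\log|z-z_0|$ inherited from the free Green's function on $\HH$ — and matching the finite parts using \eqref{ext}.

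Concretely, the key steps are: (1) recall from section 3 the precise formula for $S_\alpha(s)$, which should read $S_\alpha(s)=\frac{1}{\alpha}+\lim_{z\to z_0}\{\Re G_t(z,z_0)-G^\Gamma_s(z,z_0)\}$ or an equivalent renormalised limit, so that the coefficient $c=-E(z_0,s)/S_\alpha(s)$ in front of $G^\Gamma_s$ is tied to $\alpha$ and $\varphi$ via \eqref{ext}; (2) write $G^\Gamma_s(z,z_0) = aG^\Gamma_t(z,z_0) + be^{\i\varphi}G^\Gamma_{\bar t}(z,z_0) + (\text{function regular at }z_0)$ for suitable scalars $a,b$, using that the difference of two Green's functions at different spectral parameters is regular at the coincidence point, and that $G^\Gamma_t$ and $G^\Gamma_{\bar t}$ together span the deficiency space; (3) conclude $E^\alpha(\cdot,s)\in D_\varphi$ and apply $-\Delta_\varphi$ termwise using the stated action $-\Delta_\varphi f = -\Delta g + c\i G^\Gamma_t - c\i e^{\i\varphi}G^\Gamma_{\bar t}$, checking that the result equals $-s(1-s)E^\alpha(z,s)$; here the identity $t(1-t)=\i$, $\bar t(1-\bar t)=-\i$ makes the $G^\Gamma_t$, $G^\Gamma_{\bar t}$ contributions collapse correctly. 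For the cusp asymptotics, I would use $E(x+\i y,s)\sim y^s+\varphi(s)y^{1-s}$ together with the fact that $G^\Gamma_s(z,z_0)$ decays like a single power $c(s)y^{1-s}$ (up to $O(e^{-2\pi y})$) as $y\to\infty$, since the Green's function has no $y^s$ growth term; collecting terms gives $\varphi_\alpha(s) = \varphi(s) - \frac{E(z_0,s)}{S_\alpha(s)}\cdot(\text{the }y^{1-s}\text{-coefficient of }G^\Gamma_s)$. Finally, the functional equation for $E^\alpha$ follows by substituting $s\mapsto 1-s$ into \eqref{eisen} and invoking $E(z,s)=\varphi(s)E(z,1-s)$ together with the Green's-function functional equation \eqref{Greenfe}; the correction term $-\frac{E(z,s)E(w,1-s)}{1-2s}$ in \eqref{Greenfe} is itself a multiple of $E(z,s)$ in the $z$-variable, so it gets absorbed into the $E(z,s)$ term and, after using the corresponding transformation law of $S_\alpha(s)$ under $s\mapsto 1-s$, one recovers $E^\alpha(z,s)=\varphi_\alpha(s)E^\alpha(z,1-s)$.

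The main obstacle I anticipate is step (2)–(3): making the domain membership and the action of $-\Delta_\varphi$ rigorous rather than formal. One has to be careful that the ``regular part'' $g$ genuinely lies in $D_0$ (smooth, compactly supported modulo the $L^2$ tail, vanishing at $z_0$), and that the coefficient $c$ extracted is consistent with the single complex scalar appearing in the definition of $D_\varphi$ — i.e. that the combination $cG^\Gamma_t + ce^{\i\varphi}G^\Gamma_{\bar t}$ really is forced, with the same $c$, rather than two independent coefficients. This is exactly the point where the specific normalisation of $S_\alpha(s)$ and the relation \eqref{ext} between $\alpha$ and $\varphi$ must be used in full; getting the constants right (including the role of $\Im G^\Gamma_t(z_0,z_0)$ and of the $1/(1-2s)$-type factors) is the delicate bookkeeping. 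A secondary subtlety is justifying the termwise behaviour in the cusp — that subtracting a multiple of $G^\Gamma_s(z,z_0)$ does not introduce an extra $y^s$ term — which requires knowing the constant term of the Green's function's Fourier expansion at the cusp; I would quote this from the standard theory (e.g. Iwaniec) rather than re-derive it.
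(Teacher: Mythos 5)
Your proposal is correct in outline and takes essentially the same route as the paper: the paper likewise decomposes $G^{\Gamma}_{s}(z,z_{0})$ into the deficiency elements $G^{\Gamma}_{t},G^{\Gamma}_{\bar t}$ (with coefficients $\tfrac{1}{1+\e^{\i\varphi}}$ and $\tfrac{\e^{\i\varphi}}{1+\e^{\i\varphi}}$) plus a regular part $S_\alpha(z,s)$ whose limit at $z_{0}$ is exactly $S_\alpha(s)$, applies $\Delta_\varphi$ termwise (handling the cross terms via the iterated resolvent identity and $t(1-t)=\i$), and derives the functional equation from \eqref{Greenfe} together with $S_\alpha(s)=S_\alpha(1-s)-E(z_{0},s)E(z_{0},1-s)/(1-2s)$, which is your step on the transformation law of $S_\alpha$. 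The only adjustment is that, since $E^\alpha(\cdot,s)$ is not square-integrable, the paper verifies membership not in $D_\varphi$ but in the enlarged domain $D^{*}_{\varphi}$ obtained by dropping the $L^{2}$ condition; with that substitution your plan matches the paper's proof.
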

The cuspidal part of the discrete spectrum includes possible degenerate eigenvalues (with multiplicity $>1$) of the Laplacian, but may also include new eigenvalues. To simplify notation we will only list the new eigenvalues which have multiplicity one
\begin{equation}
\lambda^{\alpha}_{-M}<\lambda^{\alpha}_{-M+1}<\cdots<\tfrac{1}{4}\leq\lambda^{\alpha}_{0}<\lambda^{\alpha}_{1}<\cdots<\lambda^{\alpha}_{j}<\cdots.
\end{equation}
The small eigenvalues $\lbrace\lambda^{\alpha}_{j}\rbrace_{j=-M}^{-1}$ correspond to eigenfunctions which are either residues of $E^\alpha(z,s)$ or perturbed cusp forms. We will refer to the residues as residual Maass forms. We will sometimes refer to the small eigenvalues as the residual spectrum, although some of them may be associated with cusp forms instead of residues of the Eisenstein series. The remaining eigenvalues $\lbrace\lambda^{\alpha}_{j}\rbrace_{j\geq0}$ correspond to eigenfunctions which decay exponentially in the cusp and have a logarithmic singularity at $z_{0}$. We will refer to such eigenfunctions as pseudo cusp forms and the associated eigenvalues as the pseudo cuspidal spectrum. For a detailed discussion of the perturbed discrete spectrum see section 2.4. 

We now turn to our second main result, the trace formula for surfaces with one cusp, which we will prove in section 5. We interpret the sum over closed geodesics in the trace formula as a sum over diffractive orbits, where we sum over the number of visits to $z_{0}$ and all combinations of orbits connecting $z_{0}$ to itself on $X$.

\begin{defn}
Let $\sigma\geq\tfrac{1}{2}$ and $\delta>0$. We define $H_{\sigma,\,\delta}$ to be the space of functions $h:\CC\to\CC$, s. t.
\begin{enumerate}
\item[(i)] $h(\rho)=h(-\rho)$ ,
\item[(ii)] $h$ is analytic in the strip $\left|\Im{\rho}\right|\leq\sigma$,
\item[(iii)] $|h(\rho)|\ll(1+|\Re\rho|)^{-2-\delta}$ uniformly in the strip $\left|\Im\rho\right|\leq\sigma$.\\
\end{enumerate}
\end{defn}

Let $m_{\Gamma}=|\scrI|$, where
\begin{equation}
\scrI=\lbrace\gamma\in\Gamma\mid\gamma z_{0}=z_{0}\rbrace.
\end{equation}
Let $\psi(s)=\tfrac{1}{2\pi}\Gamma'(s)/\Gamma(s)$, where $\Gamma(s)$ denotes the Gamma function. For $h\in H_{\sigma,\delta}$, $\beta\in\RR$ and $k\in\NN$ we define the following integral transform of $h$
\begin{equation}\label{trans}
g_{\beta,k}(t)=\frac{(-1)^{k}}{2\pi\i k}\int_{-\i\nu-\infty}^{-\i\nu+\infty}\frac{h'(\rho)e^{-\i\rho t}d\rho}{(1+m_{\Gamma}\beta\psi(\tfrac{1}{2}+\i\rho))^{k}},
\end{equation}
where $\nu>v_{\beta}\in(0,\sigma)$ if $1+m_{\Gamma}\beta\psi(\tfrac{1}{2}+\i\rho)$ has a zero $-\i v_{\beta}$ in the interval $(0,-\i\sigma)$, and $\nu=0$ otherwise. In fact there is at most one zero of $1+m_{\Gamma}\beta\psi(\tfrac{1}{2}+\i\rho)$ in the halfplane $\Im\rho<0$ and it lies on the imaginary axis (for a proof see \cite{U}, p. 4). 

The following trace formula is our main result.
\begin{thm}\label{thm3}
Suppose $\Gamma\subset\PSL(2,\RR)$ is a discrete subgroup, such that $X=\Gamma\backslash\HH$ has one cusp and $\Vol(X)<+\infty$. Let $(\alpha,z_{0})\in\RR_*\times X$. Let $\sigma>\tfrac{1}{2}$ satisfy 
\begin{equation}\label{condition}
(\tfrac{1}{2}+\sigma)^{1/2}\log(\tfrac{1}{2}+\sigma)>C(\Gamma,\alpha,z_{0})
\end{equation}
and also choose $\nu$ as above. 
Let
\begin{equation}\label{renorm}
\beta=\beta(\alpha)=
\begin{cases}
\frac{\alpha}{1-c_0\alpha},\quad \alpha\in\RR\setminus\lbrace-\frac{1
}{c_0}\rbrace\\
\\
-\frac{1}{c_0},\quad \alpha=\pm\infty
\end{cases}
\end{equation}
where
\begin{equation}
c_{0}=m_\Gamma\Re\psi(t)+\Re\sum_{\gamma\in\Gamma\setminus\scrI}G_t(z_{0},\gamma z_{0}).
\end{equation}
For any $\delta>0$ and $h\in H_{\sigma,\,\delta}$ we have the identity
\begin{equation}\label{trace}
\begin{split}
&\sum_{j\geq-M}h(\rho^{\alpha}_{j})-\sum_{j\geq-M}h(\rho_{j})\\
=\;&\frac{1}{2\pi}\int_{-\i\nu-\infty}^{-\i\nu+\infty}h(\rho)\frac{m_\Gamma\beta\psi'(\tfrac{1}{2}+\i\rho)}{1+m_\Gamma\beta\psi(\tfrac{1}{2}+\i\rho)}d\rho+\tfrac{1}{2}\delta_\Gamma h(0)\\
&+\frac{1}{4\pi}\int_{-\infty}^{\infty}h(\rho)\left\{\frac{\varphi_\alpha'}{\varphi_\alpha}(\tfrac{1}{2}+\i\rho)-\frac{\varphi'}{\varphi}(\tfrac{1}{2}+\i\rho)\right\}d\rho\\
&+\sum_{k=1}^{\infty}\,\beta^{k}\sum_{\gamma_{1},\cdots,\gamma_{k}\in\Gamma\setminus\scrI}
\int_{l_{\gamma_{1},z_{0}}}^{\infty}\cdots\int_{l_{\gamma_{k},z_{0}}}^{\infty}\frac{g_{\beta,k}(t_{1}+...+t_{k})\prod_{n=1}^{k}dt_{n}}{\prod_{n=1}^{k}\sqrt{\cosh t_{n}-\cosh l_{\gamma_{n},z_{0}}}}
\end{split}
\end{equation}
where
\begin{equation}
\delta_\Gamma=
\begin{cases}
1,\qquad\text{if $\tfrac{1}{4}$ is not an eigenvalue of the Laplacian}\\
\\
0,\qquad\text{otherwise.}
\end{cases}
\end{equation}
\end{thm}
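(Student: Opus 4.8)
The plan is to work with the resolvent. As $-\Delta_\varphi$ and $-\Delta$ are two self-adjoint realizations of the same symmetric operator $-\Delta|_{D_0}$, the perturbation is rank one in the resolvent sense, and the domain description of Section~2.2 yields a Krein-type formula: writing $R_s=(-\Delta-s(1-s))^{-1}$ and $R^\varphi_s=(-\Delta_\varphi-s(1-s))^{-1}$, one has for $\Re s>1$
\[
R^\varphi_s-R_s=-\frac{1}{S_\alpha(s)}\,\big\langle\,\cdot\,,\overline{G^\Gamma_{\bar s}(\cdot,z_0)}\big\rangle\,G^\Gamma_s(\cdot,z_0),
\]
with $S_\alpha$ the spectral function of Section~3. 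This is a rank-one, hence trace-class, operator for $\Re s>1$; taking its trace and invoking the resolvent identity to relate $\int_X|G^\Gamma_s(z,z_0)|^2\,d\mu(z)$ to the regularized diagonal value of $G^\Gamma$ (the hyperbolic singularity removed exactly as in the construction of $S_\alpha$) one arrives, after analytic continuation, at the basic identity
\[
\operatorname{Tr}(R^\varphi_s-R_s)=\frac{d}{d\lambda}\log S_\alpha(s),\qquad\lambda=s(1-s).
\]
Writing the regularized diagonal value as a $\psi(s)$-term from the $m_\Gamma=|\scrI|$ elements fixing $z_0$ plus $\sum_{\gamma\in\Gamma\setminus\scrI}G_s(z_0,\gamma z_0)$, and subtracting the $t$-value built into $S_\alpha$ — which is precisely $c_0$ — the renormalization \eqref{renorm} puts $S_\alpha$ in the form $\beta^{-1}S_\beta(s)$ with $S_\beta(s)=1+m_\Gamma\beta\psi(s)+\beta\sum_{\gamma\in\Gamma\setminus\scrI}G_s(z_0,\gamma z_0)$, and the constant $\beta^{-1}$ drops out of $\tfrac{d}{d\lambda}\log$.

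Next I pass from the resolvent to the spectral sum. Integrating $\frac{1}{2\pi\i}\oint h(\rho)\,\tfrac{d}{d\lambda}\log S_\beta(\tfrac12+\i\rho)\,d\lambda$ over a contour hugging the spectrum $[\tfrac14,\infty)$ and applying the argument principle: the zeros of $S_\beta$ in the region enclosed are the spectral parameters of the new eigenvalues, while $S_\beta$ inherits from the regularized resolvent kernel simple poles, of residue $-1$, at the unperturbed $\rho_j$ with $\varphi_j(z_0)\neq0$; by Colin de Verdi\`ere's description (\cite{CdV}, Thm.~3) of the perturbed discrete spectrum, the resulting additions and cancellations make this contour integral equal to $\sum_{j\geq-M}h(\rho^\alpha_j)-\sum_{j\geq-M}h(\rho_j)$ up to three corrections, which emerge upon deforming the contour: the branch point $\rho=0$ contributes the half-residue $\tfrac12\delta_\Gamma h(0)$ when $\tfrac14$ is not a Laplace eigenvalue; the continuous spectrum contributes $\frac{1}{4\pi}\int_{-\infty}^{\infty}h(\rho)\{\tfrac{\varphi_\alpha'}{\varphi_\alpha}-\tfrac{\varphi'}{\varphi}\}(\tfrac12+\i\rho)\,d\rho$, via the relation between $S_\beta$ on the critical line and the scattering coefficients $\varphi,\varphi_\alpha$ read off from the cusp asymptotics and functional equation of $E^\alpha$ in Theorem~\ref{thm2}; and the unique zero $-\i v_\beta$ of $1+m_\Gamma\beta\psi(\tfrac12+\i\rho)$ in the lower half-plane (see \cite{U}) forces the remaining horizontal piece of the contour to run along $\Im\rho=-\nu$ with $\nu>v_\beta$, which is where the geometric side lives.

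On that line I factor $\tfrac{d}{d\lambda}\log S_\beta=\tfrac{d}{d\lambda}\log(1+m_\Gamma\beta\psi(s))+\tfrac{d}{d\lambda}\log\bigl(1+\beta\sum_{\gamma\in\Gamma\setminus\scrI}G_s(z_0,\gamma z_0)/(1+m_\Gamma\beta\psi(s))\bigr)$. Integrated against $h$, and after converting the $\lambda$-integral into a $\rho$-integral, the first term is exactly the first line of \eqref{trace}. For the second I expand $\log(1+u)=\sum_{k\geq1}\tfrac{(-1)^{k-1}}{k}u^k$, integrate by parts once in $\rho$ to move a derivative onto $h$ (the boundary terms vanishing by the decay in $H_{\sigma,\delta}$), and substitute into each of the $k$ factors the classical Legendre-function integral representation of $G_s(z_0,\gamma z_0)$, whose lower limit is $l_{\gamma,z_0}=d(z_0,\gamma z_0)$ and whose weight is $(\cosh t-\cosh l_{\gamma,z_0})^{-1/2}$; since $s-\tfrac12=\i\rho$ the exponential factors combine to $\e^{-\i\rho(t_1+\cdots+t_k)}$, the remaining $\rho$-integral is precisely the transform $g_{\beta,k}(t_1+\cdots+t_k)$ of \eqref{trans}, and the sum over $(\gamma_1,\dots,\gamma_k)\in(\Gamma\setminus\scrI)^k$ and over $k\geq1$ with weight $\beta^k$ reproduces the last line of \eqref{trace}.

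The crux is convergence. The series $\sum_{\gamma\in\Gamma\setminus\scrI}G_s(z_0,\gamma z_0)$ converges absolutely only for $\Re s>1$, while it — and all of its powers summed against $\beta^k$ — must be controlled on $\Im\rho=-\nu$, and each interchange of summation with integration must hold throughout $|\Im\rho|\leq\sigma$. Here one exploits that $h\in H_{\sigma,\delta}$ is analytic in the strip of half-width $\sigma$: this lets me push the contour in \eqref{trans} down to $\Im\rho=-\sigma$ — the fact, quoted from \cite{U}, that $1+m_\Gamma\beta\psi$ has at most one zero in the lower half-plane, lying on the imaginary axis, guarantees that $\nu$ can stay below it with no pole crossed — whereupon $g_{\beta,k}(t)\ll\e^{-\sigma t}$, and its $k$-th instance acquires a further factor $M_\sigma^{-k}$ with $M_\sigma=\min_{\Im\rho=-\sigma}|1+m_\Gamma\beta\psi(\tfrac12+\i\rho)|\gg|\beta|\log(\tfrac12+\sigma)$. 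Then each radial integral is $O\bigl(\e^{-\sigma l_{\gamma_n,z_0}}(\sinh l_{\gamma_n,z_0})^{-1/2}\bigr)$ up to lower-order factors, and since $\#\{\gamma\in\Gamma\setminus\scrI:\,l_{\gamma,z_0}\leq L\}\ll\e^{L}$ the sum over each $\gamma_n$ converges for $\sigma>\tfrac12$ with a bound that decays as $\sigma$ grows; the geometric series in $k$ then converges once this bound, divided by $M_\sigma$, is below $1$. Pinning down that threshold — together with the requirement that the finitely many perturbed small (possibly negative) eigenvalues also lie in the strip $|\Im\rho|\leq\sigma$ — is what produces the explicit condition \eqref{condition}, the shape $(\tfrac12+\sigma)^{1/2}\log(\tfrac12+\sigma)>C(\Gamma,\alpha,z_0)$ reflecting the $\log$-growth of $\psi(\tfrac12+\i\rho)$ together with the square-root relation between an eigenvalue and its spectral parameter. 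This convergence analysis, which also validates the interchanges above and the convergence of the spectral sums themselves (Weyl's law together with the decay of $h$), is the main obstacle; the remaining bookkeeping — the exact Jacobians and normalizing constants, the branch-point contribution and the critical-line computation — is routine.
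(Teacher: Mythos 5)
Your outline reproduces the correct skeleton (contour integration of $h$ against $\tfrac{d}{d\rho}\log S_\alpha(\tfrac12+\i\rho)$, splitting off the $1+m_\Gamma\beta\psi$ part, expanding the logarithm and inserting the integral representation of $G_s(z_0,\gamma z_0)$ to get the diffractive sum, with the condition \eqref{condition} coming from $|G^{\Gamma\setminus\scrI}_{1/2+\i\rho}(z_0,z_0)|\ll\sigma^{-1/2}$ against $|1+m_\Gamma\beta\psi|\gg|\beta|\log(\tfrac12+\sigma)$), but it dismisses as ``routine bookkeeping'' precisely the step that is the heart of the one-cusp case and that your argument cannot do without: justifying the closing/deformation of the contour at $|\Re\rho|=T\to\infty$. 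Your test functions only decay polynomially, $|h(\rho)|\ll(1+|\Re\rho|)^{-2-\delta}$, while inside the strip $0\le-\Im\rho\le\sigma$ the function $S_\alpha(\tfrac12+\i\rho)$ is only known to be bounded by $e^{cT^2\log T}$ (because the pointwise bound on $E(z_0,s)$ off the critical line is exponential and the meromorphic continuation drags in the scattering coefficient $\varphi$ and the resonances $\eta_j\pm\i\gamma_j$), and worse, $\log|S_\alpha|$ has no a priori lower bound near the zeros of $S_\alpha$, so the vertical boundary integrals are not controlled by any naive estimate. The paper spends Lemmas \ref{scatbound} and \ref{Ebound}, Proposition \ref{seqbound}, Proposition \ref{prop24} and Theorem \ref{thm25} on exactly this: choosing, via the resonance/eigenvalue counting $\#\{\kappa_l\le R\}\ll R^2$ and a mean-value argument for the Eisenstein series, a sequence $T_N$ whose vertical segments stay at distance $\gg T_N^{-1}$ from all poles and resonances, proving the $e^{C_2T_N^2\ln T_N}$ upper bound there, and then using a specially constructed test function $h_\epsilon$ with $|\Re h_\epsilon'|\gg T_{N(j)}^{-2-\epsilon}$ on those segments to convert convergence of the boundary term into the bound $\int|\log|S_\alpha||\,|d\rho|\ll T_{N(j)}^{2+\epsilon}$, which is what finally kills the boundary contribution for general $h$. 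Nothing in your proposal substitutes for this chain, and without it the passage from the truncated identity to \eqref{trace} is unproved.

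Two further points are asserted rather than established. First, the absolute convergence of the scattering term $\frac1{4\pi}\int h\,\{\tfrac{\varphi_\alpha'}{\varphi_\alpha}-\tfrac{\varphi'}{\varphi}\}$ is not automatic: in the paper it rests on the identity $\varphi_\alpha/\varphi=\theta_\alpha=S_\alpha(1-s)/S_\alpha(s)$ together with the bound $|\arg S_\alpha(\tfrac12+\i\rho)|\le\pi$ in the lower strip, which in turn comes from the positivity of $\Im S_\alpha$ there (Proposition \ref{lim}); this is also what lets one handle the perturbed and unperturbed resonances in $0<\Im\rho<\sigma$ (Lemma \ref{formula2}) and, crucially, what allows the final extension from generic $\alpha$ (only finitely many new cuspidal eigenvalues, Proposition \ref{finite}) to all $\alpha\in\RR_*$ by continuity. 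Second, your identification of the jump across the continuous spectrum with the scattering term, and of the residues at $\rho=0$ and at the small eigenvalues, needs the precise zero/pole inventory of Theorem \ref{prop} (order of the pole at $s=\tfrac12$, simplicity of zeros on the critical line via the spectral expansion, the role of $E(z_0,\tfrac12+\i\rho)=0$); sketching it via ``Colin de Verdi\`ere's description'' is not enough to get the factor $\tfrac12\delta_\Gamma h(0)$ and the correct multiplicities. The Krein-formula opening and the renormalisation $\alpha\mapsto\beta$ are fine and equivalent to the paper's use of the explicit expression for $S_\alpha$ in $\Re s>1$.
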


\section{The spectral function}

We begin with the definition of the spectral function which contains information about new eigenvalues and resonances of the operator $-\Delta_\varphi$ in form of its zeros, as well as information about the eigenvalues of the Laplacian in form of its poles.
\begin{defn}
Let $\alpha\in\RR_*$ and $t(1-t)=\i$. We define the following meromorphic function as the {\em spectral function}
\begin{equation}
S_\alpha(s)=\frac{1}{\alpha}+\lim_{z\to z_0}\lbrace G^\Gamma_s(z,z_0)-\Re G^\Gamma_t(z,z_0)\rbrace.
\end{equation}
And for $\Re s>1$ we have the expression (cf. \cite{U}, Prop. 4 which holds for any discrete subgroup)
\begin{equation}
S_\alpha(s)=\frac{1}{\beta}+m_\Gamma\psi(s)+\sum_{\gamma\in\Gamma\setminus\scrI}G_s(z_0,\gamma z_0)
\end{equation}
\end{defn}

We proceed with a theorem which locates the zeros and poles of the spectral function and states their interpretation as eigenvalues or resonances of the operators $-\Delta_\varphi$ and $-\Delta$.
\begin{thm}\label{prop}
$S_\alpha(s)$ has the following zeros and poles.
\begin{itemize}
\item[(i)] $\Re s>1$: There is a zero of order one at $\tfrac{1}{2}+(\tfrac{1}{4}-\lambda_{-M}^{\alpha})^{1/2}$ provided the lowest perturbed eigenvalue satisfies $\lambda_{-M}^{\alpha}<0$.
\item[(ii)] $\tfrac{1}{2}<\Re s\leq1$: We have zeros of order one and simple poles in $\left(\tfrac{1}{2},1\right]$ corresponding to perturbed and unperturbed eigenvalues $<\tfrac{1}{4}$.
\item[(iii)] $\Re s=\tfrac{1}{2}$: There are simple poles corresponding to unperturbed eigenvalues $\geq\tfrac{1}{4}$ and a pole at $s=\tfrac{1}{2}$ which is of order 2 if $\lambda=\tfrac{1}{4}$ is in the discrete spectrum, otherwise it is simple. We also have zeros of order one corresponding to simultaneous solutions of $E(z_{0},\tfrac{1}{2}+\i t)=0$ and $\Re\left\{S_\alpha(\tfrac{1}{2}+\i t)\right\}=0$, i. e. new eigenvalues.
\item[(iv)] $\Re s<\tfrac{1}{2}$: We have zeros corresponding to perturbed resonances and poles corresponding to unperturbed resonances. There are also simple poles in $[0,\tfrac{1}{2})$ which correspond to the small eigenvalues as in $(ii)$.
\end{itemize}
\end{thm}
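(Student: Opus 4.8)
The plan is to analyze the meromorphic continuation of $S_\alpha(s)$ by relating it, via the resolvent identity for the point scatterer, to the spectral resolution of $-\Delta_\varphi$ and to the known analytic structure of the automorphic Green function $G^\Gamma_s(z,w)$. The starting point is that $S_\alpha(s)$ appears (up to a nonvanishing factor) as the denominator in the Krein-type formula for the perturbed resolvent: schematically,
\begin{equation*}
(\Delta_\varphi + s(1-s))^{-1} = (\Delta + s(1-s))^{-1} - \frac{1}{S_\alpha(s)}\,\langle G^\Gamma_{\bar s}(\cdot,z_0),\,\cdot\,\rangle\, G^\Gamma_s(\cdot,z_0).
\end{equation*}
Hence a zero of $S_\alpha(s)$ with $s(1-s)\notin\sigma(-\Delta_\varphi$-continuous) produces a pole of the perturbed resolvent, i.e.\ a new eigenvalue or resonance of $-\Delta_\varphi$; a pole of $S_\alpha(s)$ reflects a pole of the unperturbed resolvent, i.e.\ an eigenvalue or resonance of $-\Delta$. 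The functional equation \eqref{Greenfe} for $G^\Gamma_s$ together with the definition will be used to control the behavior under $s\mapsto 1-s$ and across $\Re s = \tfrac12$.

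I would organize the argument region by region, matching the four cases of the statement. For (i), $\Re s>1$: here $S_\alpha(s)$ is given by the absolutely convergent series $\tfrac{1}{\beta}+m_\Gamma\psi(s)+\sum_{\gamma\in\Gamma\setminus\scrI}G_s(z_0,\gamma z_0)$, which is analytic; a zero there must correspond to an $L^2$ eigenfunction $G^\Gamma_s(\cdot,z_0)$ of $-\Delta_\varphi$ with $s(1-s)<0$, i.e.\ $s\in(1,\infty)$ real, and one checks by a monotonicity/sign argument on the real axis (using that $\psi$ and $G_s$ are real and monotone there, as in \cite{U}) that there is at most one such zero, of order one, occurring exactly when $\lambda^\alpha_{-M}<0$. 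For (ii) and (iv), the strips $\tfrac12<\Re s\le 1$ and $\Re s<\tfrac12$: the poles of $S_\alpha(s)$ come from the poles of $G^\Gamma_s(z,z_0)$ (hence from residual/cuspidal eigenvalues $<\tfrac14$ and, after continuation, from resonances), each simple because the relevant eigenvalues are simple as poles of the resolvent; the zeros are identified with perturbed eigenvalues/resonances by the Krein formula, and orders are read off from the rank-one nature of the perturbation. The symmetry between (ii) and (iv) follows from \eqref{Greenfe}.

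The delicate case is (iii), the critical line $\Re s=\tfrac12$. Writing $s=\tfrac12+\i t$ and using \eqref{Greenfe} in the form $G_{1-s}-G_s = -E(z,s)E(w,1-s)/(1-2s)$, one decomposes $S_\alpha(\tfrac12+\i t)$ into its real and imaginary parts; the imaginary part is proportional to $|E(z_0,\tfrac12+\i t)|^2/(2t)$ plus spectral-measure contributions, so $S_\alpha$ vanishes on the line precisely when \emph{both} $\Re S_\alpha(\tfrac12+\i t)=0$ and $E(z_0,\tfrac12+\i t)=0$ — this is the source of the two conditions in (iii) and reconciles the two forms of the spectral equation quoted in the introduction (the one before Theorem \ref{prop} and the one in (ii) of the spectrum discussion). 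The poles on the line are the unperturbed eigenvalues $\ge\tfrac14$ embedded in the continuum, which are simple; the special point $s=\tfrac12$ requires separate care because $G^\Gamma_s(z,w)$ has a pole there coming from the constant function / continuous spectrum edge, and the functional equation shows this contributes a pole of order $2$ exactly when $\tfrac14$ is a genuine eigenvalue (the residual Maass form at $\lambda=\tfrac14$ doubling the simple pole of the resolvent kernel), and order $1$ otherwise — this accounting at $s=\tfrac12$, and the simultaneous vanishing condition, is the main obstacle and the part I would write out most carefully.
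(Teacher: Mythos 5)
Your overall route is the same as the paper's: interpret $S_\alpha$ as the Krein denominator (zeros $=$ perturbed eigenvalues/resonances, poles inherited from the automorphic Green's function, i.e.\ unperturbed eigenvalues/resonances, with the order-$2$ versus order-$1$ pole at $s=\tfrac12$ according to whether $\lambda=\tfrac14$ is an eigenvalue), and use the functional equation \eqref{FE} on the critical line to split $S_\alpha(\tfrac12+\i t)$ into real and imaginary parts and deduce the two simultaneous conditions $E(z_0,\tfrac12+\i t)=0$ and $\Re S_\alpha(\tfrac12+\i t)=0$. Two corrections to that part: since $S_\alpha$ is real on the real axis, Schwarz reflection plus \eqref{FE} gives \emph{exactly} $\Im S_\alpha(\tfrac12+\i t)=\lvert E(z_0,\tfrac12+\i t)\rvert^{2}/(4t)$, with no further ``spectral-measure contributions''; if there were additional terms of indefinite sign your ``vanishes iff both conditions hold'' conclusion would not follow as stated. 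Also, the exclusion of non-real zeros should be stated for all of $\Re s\ge\tfrac12$ (self-adjointness: a zero there produces an $L^2$ eigenfunction $G^\Gamma_s(\cdot,z_0)$ with eigenvalue $s(1-s)$, which must be real), not only in region (i).

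The genuine gap is the order-one claims in (ii) and especially (iii). ``Read off from the rank-one nature of the perturbation'' does not give the order of vanishing of $S_\alpha$: simplicity of a new eigenvalue of $-\Delta_\varphi$ is not the same statement as simplicity of the corresponding zero of the scalar function $S_\alpha$, and on the critical line an embedded zero could a priori be degenerate. The paper settles this with the spectral expansion of $G^\Gamma_s(z_0,z_0)$: at a critical-line zero $\rho_\alpha$ one uses $E(z_0,\tfrac12+\i\rho_\alpha)=0$ to see that $\frac{d}{d\rho}\Re S_\alpha(\tfrac12+\i\rho)\big|_{\rho_\alpha}$ is a strictly positive combination of $\lvert E(z_0,\tfrac12+\i\rho')\rvert^2$ and $\lvert\varphi_j(z_0)\rvert^2$ terms (note the hypothesis $E(z_0,\cdot)=0$ at the point is what makes the naive termwise differentiation legitimate, since otherwise the integrand is singular), hence the zero is simple; and on the real segment the same expansion shows $\frac{d}{dv}S_\alpha(\tfrac12+v)<0$, so $S_\alpha$ is strictly monotone between its poles, which yields simple zeros interlacing with the poles, the zero in (i) when $\lambda^\alpha_{-M}<0$, and finiteness of the zeros in $\Re s>\tfrac12$. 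Some version of this positivity/monotonicity input (or an equivalent Herglotz-type argument) is indispensable and is missing from your sketch; your monotonicity remark for region (i) based on the series for $\Re s>1$ does not cover $\tfrac12<\Re s\le 1$ or the critical line. By contrast, your treatment of (iv) (zeros as resonances when $E(z_0,s)\ne0$, and the reflection $S_\alpha(s)=S_\alpha(1-s)$ via \eqref{FE} when $E(z_0,s)=0$) matches the paper.
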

\begin{proof}
It can be seen from its meromorphic continuation that the regularised automorphic Green's function $(G^{\Gamma}_{s}-G^{\Gamma}_{t})(z,w)$ has simple poles corresponding to the eigenvalues of the Laplacian on the critical line and the interval $\left[0,1\right]$, as well as poles corresponding to the resonances of the Laplacian in $\Re s<\tfrac{1}{2}$. If $\lambda=\tfrac{1}{4}$ is an eigenvalue then it follows that $\G^{\Gamma}_{s}(z,w)$ has a pole of order 2 at $s=\tfrac{1}{2}$ and otherwise a simple pole (cf. \cite{Iw}, p. 106 bottom). The function $S_\alpha(s)$ inherits these poles. See for instance Thm. 3.5, p. 250, in \cite{Hj3}.

We proceed with locating the zeros of $S_\alpha(s)$. The zeros of $S_\alpha(s)$ correspond to eigenfunctions of $\Delta_\varphi$ (cf. \cite{U}, p. 10, Prop. 3). If $\Re s\geq\tfrac{1}{2}$ the zeros correspond to new eigenvalues and the corresponding eigenfunctions are in $L^{2}(X)$ and so self-adjointness of $\Delta_{\varphi}$ rules out any zeros $s$ in $\Re s\geq\tfrac{1}{2}$, $\Im s\neq0$. However, if $\Re s<\tfrac{1}{2}$ the zeros correspond to resonances and generalised non-$L^{2}$ eigenfunctions which may have $\Im s\neq0$.

Let us consider the critical line. We recall that $S_\alpha(s)$ satisfies the functional equation
\begin{equation}\label{FE}
S_\alpha(s)=S_\alpha(1-s)-\frac{E(z_{0},s)E(z_{0},1-s)}{1-2s}.
\end{equation}
Now letting $s=\tfrac{1}{2}+\i t$, $t\in\RR_*$, we compute the imaginary part of $S_\alpha(\tfrac{1}{2}+\i t)$
\begin{equation}
\begin{split}
\Im\left\{S_\alpha(\tfrac{1}{2}+\i t)\right\}
=&\frac{1}{2\i}\left\{S_\alpha(\tfrac{1}{2}+\i t)-S_\alpha(\tfrac{1}{2}-\i t)\right\}\\
=&\frac{\left|E(z_{0},\tfrac{1}{2}+\i t)\right|^{2}}{4t}
\end{split}
\end{equation}
and write
\begin{equation}
S_\alpha(\tfrac{1}{2}+\i t)=\Re\left\{S_\alpha(\tfrac{1}{2}+\i t)\right\}+\i\frac{\left|E(z_{0},\tfrac{1}{2}+\i t)\right|^{2}}{4t}
\end{equation}
so, since $\alpha$ is real, $S_\alpha(s)$ has a zero $\tfrac{1}{2}+\i t\neq\tfrac{1}{2}$ on the critical line if and only if
\begin{equation}\label{speccond}
E(z_{0},\tfrac{1}{2}+\i t)=0, \qquad \Re\left\{S_\alpha(\tfrac{1}{2}+\i t)\right\}=0.
\end{equation}
Let $\rho_\alpha$ be a solution of the above equation. We recall the spectral expansion of the automorphic Green's function for $\Re s>\tfrac{1}{2}$
\begin{equation}
\begin{split}
G^{\Gamma}_{\frac{1}{2}+\i\rho}(z,w)=&\frac{1}{2\pi}\int_{0}^{\infty}\frac{E(z,\tfrac{1}{2}+\i\rho')E(w,\tfrac{1}{2}-\i\rho')}{\rho'^{2}-\rho^{2}}d\rho'\\
&+\sum_{j=-M}^{\infty}\frac{\varphi_{j}(z)\overline{\varphi_{j}(w)}}{\rho_{j}^{2}-\rho^{2}},
\end{split}
\end{equation}
where $\lbrace\varphi_{j}\rbrace_{j=-M}^{\infty}$ is an orthonormal basis of Maass forms. Since $E(z_{0},\tfrac{1}{2}+\i\rho_\alpha)=0$, we have
\begin{equation}
\begin{split}
&\frac{d}{d\rho}\Bigg|_{\rho=\rho_\alpha}\Re\left\{S_\alpha(\tfrac{1}{2}+\i\rho)\right\}\\
=\;&\frac{\rho_\alpha}{2\pi}\int_{-\infty}^{\infty}\frac{|E(z_{0},\tfrac{1}{2}+\i\rho')|^{2}}{(\rho'^{2}-{\rho_\alpha}^2)^{2}}d\rho'
+2\rho_\alpha\sum_{j=-M}^{\infty}\frac{|\varphi_{j}(z_{0})|^{2}}{(\rho_{j}^{2}-\rho_\alpha^{2})^{2}},
\end{split}
\end{equation}
which shows that $\rho_\alpha,$ must be a zero of order one, since $\rho_\alpha\neq0$.

Now let us look at zeros on the real line. Let $\rho=-\i v$, $v>0$ to ensure $s=\tfrac{1}{2}+v>\tfrac{1}{2}$. We obtain for the derivative of $S_\alpha(\frac{1}{2}+v)$ with respect to the variable $v$
\begin{equation}
\begin{split}
\frac{d}{dv}S_\alpha(\frac{1}{2}+v)=&-\frac{v}{\pi}\int_{0}^{\infty}\frac{\left|E(z_{0},\tfrac{1}{2}+\i\rho')\right|^{2}}{(\rho'^{2}+v^{2})^{2}}d\rho'\\
&-2v\sum_{j=-M}^{\infty}\frac{\left|\varphi_{j}(z_{0})\right|^{2}}{(\rho_{j}^{2}+v^{2})^{2}}.
\end{split}
\end{equation}
Since $S_\alpha(\frac{1}{2}+v)$ is a real function on the real line, has poles in the interval $[0,\tfrac{1}{2}]$ and is monotonic in between these poles, we conclude that it must take its zeros in between these poles. If $\lim_{v\to+\infty}|S_\alpha(\tfrac{1}{2}+v)|=\infty$, then there is a zero of order one at $\rho^{\alpha}_{-M}>\rho_{-M}$. Because of monotonicity of $S_\alpha(\tfrac{1}{2}+v)$ on the half-line $(\tfrac{1}{2},\infty)$ we conclude that there are only finitely many zeros in $\Re s>\tfrac{1}{2}$. They correspond to perturbed eigenvalues less than $\tfrac{1}{4}$. As we will see later the corresponding eigenfunctions, which are automorphic Green's functions, are residues of the perturbed Eisenstein series unless $E(z_{0},\tfrac{1}{2}+v)=0$.

With regard to zeros in $(-\infty,\tfrac{1}{2})$ no such argument works, since the spectral expansion is not valid. Such zeros correspond to perturbed resonances if $E(z_{0},s)\neq0$ since the associated Green's functions fail to be in $L^{2}(X)$. If $E(z_{0},s)=0$ the zero corresponds to a zero $1-s\in(\tfrac{1}{2},\infty)$ since the functional equation for $S_\alpha(s)$ implies $S_\alpha(s)=S_\alpha(1-s)$.
\end{proof}

Colin de Verdiere observes in \cite{CdV} that for almost all pairs $(\alpha,z_{0})\in\RR_*\times X$ the new cuspidal part of the perturbed discrete spectrum is empty. We easily confirm this from the particular form of the spectral function $S_\alpha(s)$ on the critical line.
\begin{prop}\label{finite}
The equation
\begin{equation}\label{cond}
E(z_{0},\tfrac{1}{2}+\i r)=0,\qquad \Re\left\{S_\alpha(\tfrac{1}{2}+\i r)\right\}=0,\qquad r\in\RR,
\end{equation}
has no solutions for almost all $\alpha\in\RR_*$.
\end{prop}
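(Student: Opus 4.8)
The key observation is that the spectral function on the critical line decomposes as $S_\alpha(\tfrac12+\i r)=\Re\{S_\alpha(\tfrac12+\i r)\}+\i|E(z_0,\tfrac12+\i r)|^2/(4r)$, so the condition \eqref{cond} is precisely the requirement that $S_\alpha(\tfrac12+\i r)=0$. From the expression $S_\alpha(s)=\tfrac1\beta+m_\Gamma\psi(s)+\sum_{\gamma\in\Gamma\setminus\scrI}G_s(z_0,\gamma z_0)$, and using the relation \eqref{renorm} between $\beta$ and $\alpha$ together with $c_0=m_\Gamma\Re\psi(t)+\Re\sum G_t(z_0,\gamma z_0)$, I would write
\begin{equation}
\Re\{S_\alpha(\tfrac12+\i r)\}=\frac1\alpha+F(r),
\end{equation}
where $F(r)=m_\Gamma\Re\psi(\tfrac12+\i r)+\Re\sum_{\gamma\in\Gamma\setminus\scrI}G_{1/2+\i r}(z_0,\gamma z_0)-c_0$ is a function depending only on $(\Gamma,z_0)$ and not on $\alpha$. (One must be slightly careful at the poles of $\psi$ and of $G_s$, but these form a discrete set of $r$ and can be excluded.)

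The plan is then to consider the set $Z=\{r\in\RR:E(z_0,\tfrac12+\i r)=0\}$, which does not depend on $\alpha$. Since $E(z_0,s)$ is a meromorphic function of $s$ which is not identically zero (its asymptotics in the cusp show $E(z,s)\sim y^s+\varphi(s)y^{1-s}$, hence it cannot vanish identically), the set $Z$ is discrete, hence at most countable. For each fixed $r\in Z$ with $r$ not a pole of $F$, the second equation in \eqref{cond} reads $\tfrac1\alpha=-F(r)$, which determines $\alpha$ uniquely (when $F(r)\ne0$; if $F(r)=0$ it has no solution $\alpha\in\RR_*$ at all). Therefore the set of "bad" coupling constants,
\begin{equation}
B=\{\alpha\in\RR_*:\eqref{cond}\text{ has a solution }r\},
\end{equation}
is contained in $\{-1/F(r):r\in Z,\ F(r)\neq0\}$, a countable set. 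A countable subset of $\RR_*$ has Lebesgue measure zero, which is exactly the claim.

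The only genuinely delicate point is ruling out the possibility that $E(z_0,\tfrac12+\i r)=0$ identically in $r$; this would fail precisely if $z_0$ is a common zero of all Eisenstein series on the critical line. But this cannot happen: $E(z,s)$ is real-analytic in $z$ and not identically zero as a function of $z$ for generic $s$ with $\Re s>1$ (it is a convergent sum of positive terms there, hence strictly positive), so by meromorphic continuation $E(\cdot,s)\not\equiv 0$, and the vanishing locus of $E(z_0,\cdot)$ for fixed $z_0$ is at most discrete unless $z_0$ lies in the intersection of the zero sets $\{z:E(z,\tfrac12+\i r)=0\}$ over all $r$, an intersection of proper real-analytic varieties which is nowhere dense; for a fixed $z_0$ one simply notes that $s\mapsto E(z_0,s)$ is meromorphic and its restriction to the critical line, if not identically zero, has isolated zeros. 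Since $E(z_0,s)\not\equiv 0$ (for instance because $\Re E(z_0,s)\to +\infty$ is forced by the cusp asymptotics as $s\to+\infty$ along the reals, so it is a nonzero meromorphic function), $Z$ is discrete. With that in hand the argument above is just bookkeeping, and I expect no further obstacle.
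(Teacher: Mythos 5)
Your argument is essentially the paper's: write $\Re\{S_\alpha(\tfrac12+\i r)\}$ as $\tfrac1\alpha$ plus an $\alpha$-independent function of $r$, note that the zero set of $r\mapsto E(z_0,\tfrac12+\i r)$ is countable because $E(z_0,\cdot)$ is meromorphic and not identically zero (positivity of the series for real $s>1$), and conclude that the bad couplings form a countable, hence null, set. The one technical slip is your definition of $F(r)$: the expansion $S_\alpha(s)=\tfrac1\beta+m_\Gamma\psi(s)+\sum_{\gamma\in\Gamma\setminus\scrI}G_s(z_0,\gamma z_0)$ is only valid for $\Re s>1$, and the sum over $\Gamma\setminus\scrI$ need not converge on the critical line, so $F$ is not well defined as written; the paper instead takes the $\alpha$-independent part directly from the definition of $S_\alpha$, namely $g(r)=\Re\lim_{z\to z_0}\{G^{\Gamma}_{1/2+\i r}-\tfrac12 G^{\Gamma}_t-\tfrac12 G^{\Gamma}_{\bar t}\}(z,z_0)$, which is finite away from the (countably many) poles. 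Replacing your $F$ by this $g$ repairs the step and the rest of your bookkeeping goes through unchanged.
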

\begin{proof}
Fix $z_{0}\in X$. Let 
\begin{equation}
S_{z_0}=\lbrace r\in\RR\mid E(z_{0},\tfrac{1}{2}+\i r)=0\rbrace
\end{equation}
which is countable since $E(z_{0},s)$ is meromorphic in $s$. Also let
\begin{equation}
g(r)=\Re\lim_{z\to z_{0}}\lbrace G^{\Gamma}_{1/2+\i r}-\tfrac{1}{2}G^{\Gamma}_{t}-\tfrac{1}{2}G^{\Gamma}_{\bar{t}}\rbrace(z,z_{0}).
\end{equation}
We define
\begin{equation}
A_{z_{0}}=\lbrace-g(r)^{-1}\mid r\in S_{z_{0}},\, g(r)\neq0\rbrace,
\end{equation}
and claim that for any $\alpha\notin A_{z_{0}}\cup\lbrace0\rbrace$ equation \eqref{cond} has no solution. To see this suppose the contrary. So there exists $r_{0}\in\RR$ such that $E(z_{0},\tfrac{1}{2}+\i r_{0})=0$ and $g(r_{0})=-\alpha^{-1}\neq0$ which leads to a contradiction since $\alpha\notin A_{z_0}$. The result follows since $A_{z_{0}}$ is countable.
\end{proof}

\section{Perturbed Eisenstein series}

The Eisenstein series $E(z,s)$ are generalised eigenfunctions of the Laplacian on such a hyperbolic surface with one cusp. In analogy with this definition we introduce an analogue of the Eisenstein series for the operator $\Delta_\varphi$. In order to admit non-$L^{2}$ eigenfunctions we enrich the domain $D_{\varphi}$ of $\Delta_{\varphi}$ by introducing the larger space
\begin{equation}
D^{*}_{0}=C^\infty_0(X\setminus\lbrace z_0\rbrace)
\end{equation}
where we have dropped the condition of square-integrability. We introduce the larger domain
\begin{equation}
D^{*}_{\varphi(\alpha)}=\lbrace g+cG^{\Gamma}_{\,t}(\cdot,z_{0})+c\,\e^{\i\varphi(\alpha)}G^{\Gamma}_{\,\bar{t}}(\cdot,z_{0})|(g,c)\in\D^{*}_{0}\times\CC\rbrace
\end{equation}
As before $\Delta_{\varphi}$ acts on $f\in D^{*}_{\varphi}$ as
\begin{equation}
\Delta_{\varphi}f=\Delta g - c\i G^{\Gamma}_{t}(z,z_{0})
+c\i\e^{\i\varphi}G^{\Gamma}_{\bar{t}}(z,z_{0}).
\end{equation}

We define the perturbed Eisenstein series on $X$ to be the solution to 
\begin{equation}\label{Eisenstein}
(\Delta_\varphi+s(1-s))E^\alpha(z,s)=0,
\end{equation}
for any $s\in\CC$ with
\begin{equation}\label{dom}
E^\alpha(\cdot,s)\in D^{*}_\varphi.
\end{equation}
In the cusp we have the asymptotics
\begin{equation}\label{asmpt}
E^\alpha(x+\i y,s)=y^{s}+\varphi_\alpha(s)y^{1-s}+O(\e^{-2\pi y}),\qquad y\to\infty,
\end{equation}
for some $\varphi_\alpha(s)\in\CC$ which will be a meromorphic function in $s$.

We now derive $E^\alpha(z,s)$ and prove a perturbative analogue of the functional equation \eqref{Greenfe} which explains the connection between $E^\alpha(z,s)$ and the automorphic Green's function.
\begin{lem}
The perturbed Eisenstein series is given by
\begin{equation}\label{eisen}
E^\alpha(z,s)=E(z,s)-\frac{E(z_{0},s)}{S_\alpha(s)}\,G^{\Gamma}_{s}(z,z_{0})
\end{equation}
and satisfies 
\begin{equation}\label{Greenfe2}
G^{\Gamma}_{1-s}(z,z_{0})=\theta_\alpha(s)G^{\Gamma}_{s}(z,z_{0})-\frac{E(z_{0},s)E^\alpha(z,1-s)}{1-2s},
\end{equation}
where
\begin{equation}
\theta_\alpha(s)=\frac{S_\alpha(1-s)}{S_\alpha(s)}.
\end{equation}
\end{lem}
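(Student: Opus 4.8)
The plan is to establish the two assertions in turn: first the explicit formula \eqref{eisen}, and then the functional equation \eqref{Greenfe2}, which will follow formally from \eqref{eisen} together with the functional equation \eqref{Greenfe} of the automorphic Green's function and the functional equation \eqref{FE} of the spectral function.

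To derive \eqref{eisen} I would start from the defining requirements of the perturbed Eisenstein series: $E^\alpha(\cdot,s)\in D^*_\varphi$, $(\Delta_\varphi+s(1-s))E^\alpha=0$, and leading cusp term $y^s$. This forces (for generic $s$, the rest following by meromorphic continuation) the ansatz $E^\alpha(z,s)=E(z,s)+\mu(s)G^\Gamma_s(z,z_0)$, since $E(z,s)$ supplies the $y^s$ while $G^\Gamma_s(\cdot,z_0)$ is, up to scaling, the only solution of the eigenvalue equation on $X\setminus\{z_0\}$ with a logarithmic singularity at $z_0$ and subordinate behaviour in the cusp. Away from $z_0$ the ansatz automatically solves $(\Delta+s(1-s))E^\alpha=0$, and on all of $X$ the relation $(\Delta+s(1-s))G^\Gamma_s(\cdot,z_0)=\delta_{z_0}$ gives $(\Delta+s(1-s))E^\alpha=\mu(s)\delta_{z_0}$; since $\mu(s)$ is exactly the coefficient of the logarithmic singularity of $E^\alpha$ at $z_0$, and since the action of $\Delta_\varphi$ on $D^*_\varphi$ together with $\Delta G^\Gamma_t=\delta_{z_0}-\i G^\Gamma_t$, $\Delta G^\Gamma_{\bar t}=\delta_{z_0}+\i G^\Gamma_{\bar t}$ shows that $\Delta_\varphi$ is precisely $\Delta$ with the $\delta_{z_0}$-contribution proportional to that coefficient stripped off, we obtain $(\Delta_\varphi+s(1-s))E^\alpha=0$ on $X$.

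It then remains to impose $E^\alpha(\cdot,s)\in D^*_\varphi$. Writing the local expansion $G^\Gamma_s(z,z_0)=-\tfrac1{2\pi}\log d(z,z_0)+\kappa_s+o(1)$ with $\kappa_s$ the regularised Green's value (so that $\kappa_t,\kappa_{\bar t}$ carry the same singularity and $\kappa_{\bar t}=\overline{\kappa_t}$), membership in $D^*_\varphi$ is equivalent to prescribing the ratio of the constant term of $E^\alpha$ at $z_0$ to its logarithmic coefficient; using the correspondence \eqref{ext} between $\varphi$ and $\alpha$ one finds this prescribed ratio equals $\Re\kappa_t-\tfrac1\alpha$. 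Since the ansatz has logarithmic coefficient $\mu(s)$ and constant term $E(z_0,s)+\mu(s)\kappa_s$, this forces $\mu(s)\bigl(\tfrac1\alpha+\kappa_s-\Re\kappa_t\bigr)=-E(z_0,s)$, and recognising the bracket as $S_\alpha(s)$ yields $\mu(s)=-E(z_0,s)/S_\alpha(s)$, i.e. \eqref{eisen}. Feeding in the known cusp asymptotics of $G^\Gamma_s(\cdot,z_0)$ (constant term a fixed multiple of $E(z_0,s)\,y^{1-s}$, the remaining Fourier modes exponentially small) then confirms \eqref{asmpt} and produces $\varphi_\alpha(s)$, and uniqueness of the solution to the defining conditions follows in the usual way.

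For \eqref{Greenfe2} I would specialise \eqref{Greenfe} to $w=z_0$ and substitute $E(z,s)=E^\alpha(z,s)+\tfrac{E(z_0,s)}{S_\alpha(s)}G^\Gamma_s(z,z_0)$ from \eqref{eisen}. Collecting the terms proportional to $G^\Gamma_s(z,z_0)$, their coefficient is $1-\dfrac{E(z_0,s)E(z_0,1-s)}{(1-2s)S_\alpha(s)}$, which by \eqref{FE} collapses to $S_\alpha(1-s)/S_\alpha(s)=\theta_\alpha(s)$; rearranging the remaining Eisenstein factors with the aid of the functional equation $E(\cdot,s)=\varphi(s)E(\cdot,1-s)$ then yields \eqref{Greenfe2}. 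The only genuinely substantive point I anticipate is the domain-matching step in the derivation of \eqref{eisen}: translating abstract membership in the self-adjoint extension domain $D^*_\varphi$ into the concrete linear relation between the constant term and the logarithmic coefficient at $z_0$, and tracking the regularised Green's constant and the normalisation of \eqref{ext} carefully enough that the bracket comes out exactly $S_\alpha(s)$. Once \eqref{eisen} is in hand, the deduction of \eqref{Greenfe2} is a routine manipulation of the three functional equations.
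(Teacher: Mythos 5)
Your derivation of \eqref{eisen} is correct, but it runs in the opposite direction to the paper's argument. The paper treats \eqref{eisen} as a candidate and verifies the two defining properties: membership in $D^{*}_\varphi$ is shown by decomposing $G^{\Gamma}_{s}(\cdot,z_{0})$ through the auxiliary function $S_\alpha(z,s)=\frac{1}{1+\e^{\i\varphi}}(G^\Gamma_s-G^{\Gamma}_{t})(z,z_0)+\frac{\e^{\i\varphi}}{1+\e^{\i\varphi}}(G^\Gamma_s-G^{\Gamma}_{\bar t})(z,z_0)$, whose limit at $z_{0}$ is $S_\alpha(s)$, so that the regular part of $E^\alpha$ visibly vanishes at $z_{0}$; the eigenvalue equation is then checked with the iterated resolvent identity. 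You instead solve for the coefficient $\mu(s)$ by translating membership in $D^{*}_\varphi$ into the boundary condition at $z_{0}$ (constant term)/(logarithmic coefficient) $=\Re\kappa_t-\tfrac1\alpha$ via \eqref{ext}, and recognising $\tfrac1\alpha+\kappa_s-\Re\kappa_t=S_\alpha(s)$; your bookkeeping here is right, and your description of $\Delta_\varphi$ as the distributional $\Delta$ with the $\delta_{z_0}$-term weighted by the logarithmic coefficient removed is equivalent to the paper's resolvent-identity check. Your route buys a genuine derivation (and, for generic $s$, uniqueness) which the paper leaves implicit, at the cost of invoking the local expansion of the Green's function; the paper's route is a pure verification inside the stated domain decomposition. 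Both treat the enlarged domain $D^{*}_\varphi$ with the same degree of informality, so neither is more rigorous on that point.

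For \eqref{Greenfe2} your substitution is the same computation as the paper's and correctly yields
\[
G^{\Gamma}_{1-s}(z,z_{0})=\theta_\alpha(s)\,G^{\Gamma}_{s}(z,z_{0})-\frac{E(z_{0},1-s)\,E^\alpha(z,s)}{1-2s},
\]
provided you collapse the coefficient using the version of \eqref{FE} that is consistent with \eqref{Greenfe} and the definition of $S_\alpha$, namely $S_\alpha(s)-S_\alpha(1-s)=\lim_{z\to z_{0}}(G^{\Gamma}_{s}-G^{\Gamma}_{1-s})(z,z_{0})$; this is what the paper's own proof uses (the sign in the displayed \eqref{FE} is the opposite one, and with it the coefficient would come out as $(2S_\alpha(s)-S_\alpha(1-s))/S_\alpha(s)$). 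The flaw is your final sentence: the derived identity cannot be ``rearranged'' into the printed \eqref{Greenfe2} using only $E(\cdot,s)=\varphi(s)E(\cdot,1-s)$, because $E(z_{0},1-s)E^\alpha(z,s)$ and $E(z_{0},s)E^\alpha(z,1-s)$ differ by the factor $\theta_\alpha(s)$ (seeing even this requires the perturbed functional equation, which is only proved afterwards). The resolution is that \eqref{Greenfe2} as printed has its arguments swapped: the identity the paper's proof actually establishes, and the one fed into the corollary giving $\varphi_\alpha(s)=\varphi(s)S_\alpha(1-s)/S_\alpha(s)$, is the displayed one above. You should therefore stop at the derived identity rather than claim a rearrangement that does not exist.
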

\begin{proof}
Let $s\in\CC$ and $S_\alpha(s)\neq0$. First of all we show that $E^\alpha(z,s)$ as given by \eqref{eisen} is in $D^{*}_\varphi$. We define
\begin{equation}
S_\alpha(z,s)=\frac{1}{1+\e^{\i\varphi}}\lbrace G_s^\Gamma-G^{\Gamma}_{t}\rbrace(z,z_{0})
+\frac{\e^{\i\varphi}}{1+\e^{\i\varphi}}\lbrace G_s^\Gamma-G^{\Gamma}_{\bar{t}}\rbrace(z,z_{0}).
\end{equation}
In view of the identity \eqref{ext} a simple calculation confirms $$S_\alpha(s)=\lim_{z\to z_0}S_\alpha(z,s).$$
For the automorphic Green's function we have the decomposition
\begin{equation}
G^{\Gamma}_{s}(z,z_{0})=
S_\alpha(z,s)+\frac{1}{1+\e^{\i\varphi}}G^{\Gamma}_{t}(z,z_{0})+\frac{\e^{\i\varphi}}{1+\e^{\i\varphi}}G^{\Gamma}_{\bar{t}}(z,z_{0}).
\end{equation}
Substituting this back into \eqref{eisen} we obtain the decomposition of $E^\alpha(z,s)$
\begin{equation}
\begin{split}
E^\alpha(z,s)=&E(z,s)-E(z_{0},s)\frac{S_\alpha(z,s)}{S_\alpha(s)}\\
&-\frac{E(z_{0},s)}{S_\alpha(s)}\,\left\{\frac{1}{1+\e^{\i\varphi}}G^{\Gamma}_{t}(z,z_{0})+\frac{\e^{\i\varphi}}{1+\e^{\i\varphi}}G^{\Gamma}_{\bar{t}}(z,z_{0})\right\},
\end{split}
\end{equation}
which shows that $E^\alpha(\cdot,s)\in D^{*}_\varphi$ since
\begin{equation}
\begin{split}
&\lim_{z\to z_{0}}\left\{E(z,s)-E(z_{0},s)\frac{S_\alpha(z,s)}{S_\alpha(s)}\right\}\\
=\,&E(z_{0},s)\left\{1-\frac{\lim_{z\to z_{0}}S_\alpha(z,s)}{S_\alpha(s)}\right\}=0.
\end{split}
\end{equation}

We also see that
\begin{equation}
\begin{split}
&(\Delta_\varphi+s(1-s))E^\alpha(z,s)\\
=\,&(\Delta+s(1-s))\left\{E(z,s)-E(z_{0},s)\frac{S_\alpha(z,s)}{S_\alpha(s)}\right\}\\
&-\frac{1}{1+e^{\i\varphi}}\frac{E(z_{0},s)}{S_\alpha(s)}(s(1-s)-\i)G^{\Gamma}_{t}(z,z_{0})\\
&-\frac{\e^{\i\varphi}}{1+e^{\i\varphi}}\frac{E(z_{0},s)}{S_\alpha(s)}(s(1-s)+\i)G^{\Gamma}_{\bar{t}}(z,z_{0}),
\end{split}
\end{equation}
and because of
\begin{equation}
(\Delta+s(1-s))E(z,s)=0,
\end{equation}
and
\begin{equation}
\begin{split}
(\Delta+s(1-s))\,S_\alpha(z,s)=&\frac{1}{1+e^{\i\varphi}}(-s(1-s)+\i)G^{\Gamma}_{t}(z,z_{0})\\
&+\frac{\e^{\i\varphi}}{1+e^{\i\varphi}}(-s(1-s)-\i)G^{\Gamma}_{\bar{t}}(z,z_{0}),
\end{split}
\end{equation}
which follows from the iterated resolvent identity 
$$(\lambda\mp\i)\frac{1}{\Delta+s(1-s)}\frac{1}{\Delta\pm\i}=\frac{1}{\Delta\pm\i}-\frac{1}{\Delta+s(1-s)},$$ 
we have
\begin{equation}
(\Delta_\varphi+s(1-s))E^\alpha(z,s)=0.
\end{equation}

In order to prove \eqref{Greenfe2} we rewrite $E^\alpha(z,s)$ as
\begin{equation}
\begin{split}
E^\alpha(z,s)=&E(z,s)-\frac{E(z_{0},s)}{S_\alpha(s)}G^{\Gamma}_{s}(z,z_{0})\\
=&\left\{\frac{1-2s}{E(z_{0},1-s)}-\frac{E(z_{0},s)}{s_\alpha(s)}\right\}G^{\Gamma}_{s}(z,z_{0})\\
&-\frac{1-2s}{E(z_{0},1-s)}G^{\Gamma}_{1-s}(z,z_{0}),
\end{split}
\end{equation}
where \eqref{Greenfe} was substituted. Consequently one gets the equation
\begin{equation}
\begin{split}
&\frac{E^\alpha(z,s)E(z_{0},1-s)}{1-2s}\\
=\,&\left\{1-\frac{E(z_{0},1-s)E(z_{0},s)}{(1-2s)S_\alpha(s)}\right\}G^{\Gamma}_{s}(z,z_{0})-G^{\Gamma}_{1-s}(z,z_{0}).
\end{split}
\end{equation}
So let
\begin{equation}
\begin{split}
\theta_\alpha(s)&=1-\frac{E(z_{0},1-s)E(z_{0},s)}{(1-2s)S_\alpha(s)}\\
&=1-\frac{\lim_{z\to z_{0}}(G^{\Gamma}_{s}-G^{\Gamma}_{1-s})(z,z_{0})}{S_\alpha(s)}\\
&=\frac{S_\alpha(s)-\lim_{z\to z_{0}} (G^{\Gamma}_{s}-G^{\Gamma}_{1-s})(z,z_{0})}{S_\alpha(s)}\\
&=\frac{S_\alpha(1-s)}{S_\alpha(s)}.
\end{split}
\end{equation}
\end{proof}
\begin{remark}
We note from \eqref{eisen} that the eigenvalues corresponding to residual Maass forms have an interpretation as residues of $E^\alpha(z,s)$ provided $E(z_{0},s)\neq0$.
\end{remark}
It is now a straightforward corollary of the previous lemma to derive the functional equation and therefore the scattering coefficient $\varphi_\alpha(s,z_{0})$ for the perturbed Eisenstein series $\E^\alpha(z,s)$.
\begin{cor}
\begin{equation}
E^\alpha(z,s)=\varphi_\alpha(s)E^\alpha(z,1-s),
\end{equation}
where the scattering coefficient is given by
\begin{equation}\label{scatt}
\varphi_\alpha(s)=\varphi(s)\frac{S_\alpha(1-s)}{S_\alpha(s)}.
\end{equation}
\end{cor}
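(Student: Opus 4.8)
The plan is to deduce the functional equation directly from the preceding Lemma, with essentially no new computation: the key identity is the intermediate relation obtained in the course of that proof, namely
$$\frac{E^\alpha(z,s)\,E(z_{0},1-s)}{1-2s}=\theta_\alpha(s)\,G^{\Gamma}_{s}(z,z_{0})-G^{\Gamma}_{1-s}(z,z_{0}),$$
together with the classical functional equation $E(z,s)=\varphi(s)E(z,1-s)$ for the unperturbed Eisenstein series. The point is that this identity is almost symmetric under $s\mapsto 1-s$, and exploiting that symmetry isolates $\varphi_\alpha$.

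Concretely, I would first record the displayed identity, then substitute $s\mapsto 1-s$ in it, using $1-2(1-s)=-(1-2s)$ and $\theta_\alpha(1-s)=S_\alpha(s)/S_\alpha(1-s)=\theta_\alpha(s)^{-1}$, to get
$$-\frac{E^\alpha(z,1-s)\,E(z_{0},s)}{1-2s}=\theta_\alpha(s)^{-1}\,G^{\Gamma}_{1-s}(z,z_{0})-G^{\Gamma}_{s}(z,z_{0}).$$
Multiplying this second relation by $-\theta_\alpha(s)$ reproduces exactly the right-hand side of the first relation, so equating the left-hand sides and clearing the common factor $1/(1-2s)$ yields
$$E^\alpha(z,s)\,E(z_{0},1-s)=\theta_\alpha(s)\,E(z_{0},s)\,E^\alpha(z,1-s).$$
Dividing by $E(z_{0},1-s)$ and using $E(z_{0},s)/E(z_{0},1-s)=\varphi(s)$ gives $E^\alpha(z,s)=\varphi(s)\theta_\alpha(s)\,E^\alpha(z,1-s)$, which is the assertion with $\varphi_\alpha(s)=\varphi(s)\,S_\alpha(1-s)/S_\alpha(s)$ as in \eqref{scatt}. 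A fully equivalent, more pedestrian route is to expand both sides of the claimed equation by \eqref{eisen}, cancel the Eisenstein terms via $\varphi(s)E(z,1-s)=E(z,s)$, rewrite $G^{\Gamma}_{1-s}$ in terms of $G^{\Gamma}_{s}$ by \eqref{Greenfe}, and observe that the remaining coefficient of $E(z,s)$ vanishes precisely by the functional equation \eqref{FE} for $S_\alpha$.

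There is no real obstacle here — the corollary is a formal consequence of the Lemma — so the only care needed is bookkeeping: one performs the manipulations first for $s$ in the region where $S_\alpha(s)$, $S_\alpha(1-s)$ and $E(z_{0},1-s)$ are all finite and nonzero, and then extends the resulting identity to all $s\in\CC$ by meromorphy. As a consistency check one notes $\varphi_\alpha(s)\varphi_\alpha(1-s)=\varphi(s)\varphi(1-s)=1$, which is exactly what the cusp asymptotics \eqref{asmpt} for $E^\alpha(z,s)$ and $E^\alpha(z,1-s)$ demand.
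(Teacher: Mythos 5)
Your proposal is correct and follows essentially the same route as the paper: both deduce the functional equation from the Green's-function relation established in the preceding Lemma (your intermediate identity is exactly the one obtained there), combined with its $s\mapsto 1-s$ counterpart, the symmetry $\theta_\alpha(1-s)=\theta_\alpha(s)^{-1}$, and the unperturbed relation $E(z_0,s)=\varphi(s)E(z_0,1-s)$. Your handling of the bookkeeping (working where $S_\alpha(s)$, $S_\alpha(1-s)$ and $E(z_0,1-s)$ are finite and nonzero, then extending by meromorphy) is, if anything, slightly more careful than the paper's own computation.
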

\begin{proof}
We use relation \eqref{Greenfe2} to derive the functional equation for $E^\alpha(z,s)$.\\
We have
\begin{equation}
\begin{split}
E^\alpha(z,s)=&-\frac{1-2s}{E(z_0,1-s)}\left[\theta_\alpha(s)G^\Gamma_s(z,z_0)-G^\Gamma_{1-s}(z,z_0)\right]\\
=&-\frac{1-2s}{\varphi(1-s)E(z_{0},s)}
\left[\theta_\alpha(s)G^\Gamma_s(z,z_0)-G^\Gamma_{1-s}(z,z_0)\right]\\
=&\,\varphi(s)\frac{1-2s}{E(z_0,s)}\theta_\alpha(s)
\left[\theta_\alpha(1-s)G^\Gamma_{1-s}(z,z_0)-G^\Gamma_s(z,z_0)\right]\\
=&\,\varphi(s)\theta_\alpha(s)E^\alpha(z,1-s)
\end{split}
\end{equation}
which is the desired result.
\end{proof}

\section{Bounding the spectral function}

We continue with the construction of a sequence of line segments $\lbrace[T_{N},T_{N}-\i\sigma]\rbrace_{N\in\NN}$ crossing the strip $\RR\times[0,-\i\sigma]$ on which the relative zeta function $S_\alpha(\tfrac{1}{2}+\i\rho)$ admits a uniform upper bound of order $e^{c(\Gamma,\alpha,z_{0})T_{N}^{2}\ln T_{N}}$ for some positive constant $c(\Gamma,\alpha,z_{0})$. We suspect that at least in the case of arithmetic groups it is possible to do much better and in fact achieve a polynomial bound. However, for a generic non-compact surface with one cusp the general bound  on the respective Eisenstein series (cf. Thm. 12.9(d) in \cite{Hj3}) gives the above bound which will suffice for our purposes. 

As in the compact case the proof uses the spectral expansion of $S_\alpha(s)$. The discrete part of the relative zeta function admits a uniform bound of polynomial growth for a suitably chosen sequence of line segments and the proof follows exactly the same lines as the proof of Proposition 9 in \cite{U}. The case of the existence of only finitely many cusp forms is trivial.

In order to obtain the analogous bound for the continuous part we require a bound on the scattering coefficient $\varphi(s)$ close to the critical line.
\begin{lem}\label{scatbound}
Let $\xi=s-\tfrac{1}{2}$ and $\sigma\geq\tfrac{1}{2}$. If $\;0\leq-\Re\xi\leq\sigma$ and $|\Im\xi|\to\infty$ then there exists a positive constant $C_{1}(\Gamma)$ s. t.
\begin{equation}
|\varphi(s)|\ll_{\Gamma,\sigma} e^{C_{1}(\Gamma)(\sigma+|\Im\xi|)}\prod_{\gamma_{j}<2|Im\xi|+2}\left|\frac{\xi-\eta_{j}-\i\gamma_{j}}{\xi+\eta_{j}+\i\gamma_{j}}\right|\left|\frac{\xi-\eta_{j}+\i\gamma_{j}}{\xi+\eta_{j}-\i\gamma_{j}}\right|.
\end{equation}
\end{lem}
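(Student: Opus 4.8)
# Proof Proposal for Lemma \ref{scatbound}

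The plan is to start from an explicit product representation of the scattering coefficient $\varphi(s)$ in terms of its zeros and poles, combined with the functional equation $\varphi(s)\varphi(1-s)=1$ and the known location of the nontrivial zeros of $\varphi$. For a general finite-volume surface with one cusp, $\varphi(s)$ extends to a meromorphic function of order at most $2$ (indeed, its logarithmic derivative is controlled by the Maass–Selberg relations and Weyl's law), so Hadamard factorization applies. I would write $\varphi(s)$ as a ratio over the poles $\eta_j \pm \i\gamma_j$ (with $\gamma_j = \Im\rho_j$ running over the spectral parameters) and the reflected zeros, times an exponential factor $e^{a+b\xi}$ coming from the genus-$1$ part of the Hadamard product, where $\xi = s - \tfrac12$. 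This is exactly the structure appearing on the right-hand side of the claimed bound.

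The key steps, in order, are as follows. First, invoke the estimate on $\varphi(s)$ near and to the left of the critical line from \cite{Hj3} (Theorem 12.9(d)), which already gives the qualitative growth but needs to be massaged into the stated product form. Second, separate the contribution of the ``low-lying'' poles with $\gamma_j < 2|\Im\xi|+2$ — these are the ones that can come close to the point $s$ and must be displayed explicitly in the product — from the ``high'' poles with $\gamma_j \geq 2|\Im\xi|+2$, whose combined contribution to $\log|\varphi(s)|$ is bounded by a convergent-type sum that one controls using Weyl's law for the counting function $\#\{j : \gamma_j \leq T\} \ll T^2$; the resulting bound is of the form $e^{O(|\Im\xi|\log|\Im\xi|)}$, which is absorbed into the factor $e^{C_1(\Gamma)(\sigma+|\Im\xi|)}$ after possibly adjusting $C_1$ — wait, more carefully: one needs the constant to be linear in $\sigma+|\Im\xi|$, so the logarithmic loss must actually be dominated by the explicit product, not the exponential prefactor; I would therefore show that the tail sum over high poles contributes only $O(\sigma + |\Im\xi|)$ to the logarithm by pairing each pole at $\eta_j + \i\gamma_j$ with the corresponding zero at $-\eta_j - \i\gamma_j$ and estimating $\log|(\xi - \eta_j - \i\gamma_j)/(\xi + \eta_j + \i\gamma_j)|$, which decays like $|\Im\xi|/\gamma_j^2$ once $\gamma_j$ is large. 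Third, handle the exponential Hadamard factor $e^{a+b\xi}$: since $\Re\xi$ is bounded (by $\sigma$) and $|\Im\xi|$ is the large parameter, $|e^{a+b\xi}| \ll e^{|b||\Im\xi| + |a|}$, which is of the claimed form with $C_1(\Gamma)$ depending on the Hadamard constants $a, b$ and on $\sigma$. Fourth, verify that the remaining explicit finite product over the low-lying poles is exactly the product written in the statement, noting that each conjugate pair $\eta_j \pm \i\gamma_j$ contributes the two displayed factors.

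The main obstacle I anticipate is controlling the tail sum over the high poles with the correct \emph{linear} (rather than $|\Im\xi|\log|\Im\xi|$) dependence on $|\Im\xi|$ — equivalently, ensuring that the logarithmic loss from Weyl's law is genuinely cancelled by the telescoping structure of the zero/pole pairing rather than being dumped into the exponential prefactor. This requires a careful dyadic decomposition of the poles by the size of $\gamma_j$: in the dyadic block $\gamma_j \sim 2^k |\Im\xi|$ there are $O(4^k |\Im\xi|^2)$ poles (by Weyl's law, after subtracting a smaller block), each contributing $O(|\Im\xi| / (4^k|\Im\xi|^2)) = O(1/(4^k|\Im\xi|))$ to the logarithm when $k \geq 1$, so the block total is $O(|\Im\xi|)$ and the geometric sum over $k \geq 1$ converges to $O(|\Im\xi|)$. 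A secondary, more technical point is justifying the Hadamard factorization and the order-$2$ growth statement uniformly enough to extract explicit constants; here I would lean on the standard references (e.g. \cite{Iw}, \cite{Hj3}) for the analytic continuation and functional equation of $\varphi$, and on the Maass–Selberg relations to bound $|\varphi(s)|$ on vertical lines with $\Re s$ slightly greater than $\tfrac12$, then reflect via the functional equation to reach the strip $0 \leq -\Re\xi \leq \sigma$.
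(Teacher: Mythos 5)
Your overall skeleton (an explicit zero/pole product for $\varphi$, a split of the product at $\gamma_j<2|\Im\xi|+2$, and trivial handling of the exponential and residual factors) matches the paper, which simply quotes Hejhal's Prop.~12.6 for the factorization $\varphi(s)=\varphi(\tfrac12)e^{A\xi}\prod_{k=1}^{M}\frac{\xi-\i\rho_{-k}}{\xi+\i\rho_{-k}}\prod_j\frac{\xi-\eta_j-\i\gamma_j}{\xi+\eta_j+\i\gamma_j}\,\frac{\xi-\eta_j+\i\gamma_j}{\xi+\eta_j-\i\gamma_j}$ rather than re-deriving a Hadamard product. The genuine gap is in your treatment of the tail $\gamma_j\ge 2|\Im\xi|+2$. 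Your dyadic argument does not close: in the block $\gamma_j\sim 2^k|\Im\xi|$ you count $O(4^k|\Im\xi|^2)$ poles, each contributing $O\bigl(4^{-k}|\Im\xi|^{-1}\bigr)$, so every block contributes $O(|\Im\xi|)$ --- the powers of $4$ cancel, there is no geometric decay in $k$, and the sum over all blocks diverges. More fundamentally, Weyl-type counting $\#\{j:\gamma_j\le R\}\ll R^2$ together with mere boundedness of $\eta_j$ cannot give convergence of the tail at all: with $\gamma_j\asymp\sqrt j$ and $\eta_j\asymp 1$ the paired terms behave like $|\Im\xi|\sum_j 1/j$.

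The missing ingredient, which is exactly what the paper uses, is Hejhal's Prop.~12.5: $\sum_{j\ge0}\eta_j/(1+\gamma_j^2)<+\infty$, i.e.\ the distances $\eta_j$ of the resonances from the critical line are summable against $(1+\gamma_j^2)^{-1}$. With this, one pairs the two conjugate factors into $1-\frac{4\xi\eta_j}{(\eta_j+\xi)^2+\gamma_j^2}$, checks that for $\gamma_j\ge2|\Im\xi|+2$ the subtracted quantity has modulus $<\tfrac12$ (so that $|\log(1-z)|\le2|z|$ applies), and bounds the whole tail of $\log|\varphi|$ by $\tfrac{32}{3}|\xi|\sum_j\eta_j/(1+\gamma_j^2)=C_1(\Gamma)|\xi|\le C_1(\Gamma)(\sigma+|\Im\xi|)$, which is the linear bound you were aiming for. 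Your first instinct (accepting an $e^{O(|\Im\xi|\log|\Im\xi|)}$ loss) would also have been too weak for the later application in Proposition \ref{seqbound}, as you yourself noted; replacing the Weyl-law count by the summability of $\eta_j/(1+\gamma_j^2)$ is the one idea your proposal is missing.
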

\begin{proof}
The scattering coefficient $\varphi(s)$ is (see \cite{Hj3}, Prop. 12.6, p. 157) of the general form
\begin{equation}\label{prod}
\varphi(s)=\varphi(\tfrac{1}{2})e^{A\xi}\prod_{k=1}^{M}\frac{\xi-\i\rho_{-k}}{\xi+\i\rho_{-k}}\prod_{j=0}^{\infty}\frac{\xi-\eta_{j}-\i\gamma_{j}}{\xi+\eta_{j}+\i\gamma_{j}}\;\frac{\xi-\eta_{j}+\i\gamma_{j}}{\xi+\eta_{j}-\i\gamma_{j}},\;A>0,
\end{equation}
where the infinite product is convergent whenever $\xi\neq\eta_{j}\pm\i\gamma_{j}$. For all $j$ we have $\eta_{j}>0$ and $\gamma_{j}\geq0$. We also have (cf. Prop. 12.5, p. 156 \cite{Hj3})
\begin{equation}
\sum_{j=0}^{+\infty}\frac{\eta_{j}}{1+\gamma_{j}^{2}}<+\infty.
\end{equation}
The first two terms are trivially bounded. We split the infinite product into two parts. The first term is estimated as follows (cf. \cite{Hj3}, p. 159, (**) and the fourth line from the bottom)
\begin{equation}
\begin{split}
&\left|\log\prod_{\gamma_{j}\geq2|Im\xi|+2}\left|\frac{\xi-\eta_{j}-\i\gamma_{j}}{\xi+\eta_{j}+\i\gamma_{j}}\;\frac{\xi-\eta_{j}+\i\gamma_{j}}{\xi+\eta_{j}-\i\gamma_{j}}\right|\right|\\
\leq&\sum_{\gamma_{j}\geq2|Im\xi|+2}\left|\log\left(1-\frac{4\xi\eta_{j}}{(\eta_{j}+\xi)^{2}+\gamma_{j}^{2}}\right)\right|.
\end{split}
\end{equation}
We note that for $|\Im\xi|$ large we have
\begin{equation}
\begin{split}
\frac{4|\xi|\eta_{j}}{|(\eta_{j}+\xi)^{2}+\gamma_{j}^{2}|}
&\leq\frac{4|\xi|\eta_{j}}{|\Re\lbrace(\eta_{j}+\xi)^{2}+\gamma_{j}^{2}\rbrace|}\\
&\leq\frac{4|\xi|\eta_{j}}{(\eta_{j}+\Re\xi)^{2}+\gamma_{j}^{2}-(\Im\xi)^{2}}\\
&\leq\frac{4|\xi|\eta_{j}}{\gamma_{j}^{2}-(\Im\xi)^{2}}
\leq\frac{4|\xi|\eta_{j}}{\tfrac{3}{4}\gamma_{j}^{2}+1}\\
&\leq\frac{4|\xi|\eta_{j}}{3(\Im\xi)^{2}+4}\leq\frac{4(\sigma+|\Im\xi|)}{3(\Im\xi)^{2}+4}<\tfrac{1}{2}
\end{split}
\end{equation}
where we have used that $\gamma_{j}\geq2|\Im\xi|+2$ implies $\tfrac{1}{4}\gamma_{j}^{2}\geq(\Im\xi)^{2}+1$. Note that for $|z|<\tfrac{1}{2}$ we have
\begin{equation}
|\log(1-z)|\leq\sum_{k=1}^{\infty}\frac{|z|^{k}}{k}\leq2|z|,
\end{equation}
and as a consequence we have the estimate
\begin{equation}
\begin{split}
&\sum_{\gamma_{j}\geq2|Im\xi|+2}\left|\log\left(1-\frac{4\xi\eta_{j}}{(\eta_{j}+\xi)^{2}+\gamma_{j}^{2}}\right)\right|\\
&\leq8|\xi|\sum_{\gamma_{j}\geq2|Im\xi|+2}\frac{\eta_{j}}{|(\eta_{j}+\xi)^{2}+\gamma_{j}^{2}|}\\
&\leq8|\xi|\sum_{\gamma_{j}\geq2|Im\xi|+2}\frac{\eta_{j}}{1+\tfrac{3}{4}\gamma_{j}^{2}}\\
&\leq\frac{32}{3}|\xi|\sum_{j=0}^{+\infty}\frac{\eta_{j}}{1+\gamma_{j}^{2}}.
\end{split}
\end{equation}
Let $C_{1}(\Gamma)=\tfrac{32}{3}\sum_{j=0}^{+\infty}\frac{\eta_{j}}{1+\gamma_{j}^{2}}$. Then
\begin{equation}
\prod_{\gamma_{j}\geq2|Im\xi|+2}\left|\frac{\xi-\eta_{j}-\i\gamma_{j}}{\xi+\eta_{j}+\i\gamma_{j}}\;\frac{\xi-\eta_{j}+\i\gamma_{j}}{\xi+\eta_{j}-\i\gamma_{j}}\right|\leq e^{C_{1}(\Gamma)|\xi|}\leq e^{C_{1}(\Gamma)(\sigma+|\Im\xi|)}.
\end{equation}
\end{proof}

To be able to control a residual term which, as we will see, arises from the continous part of the spectral function we also require a bound on the Eisenstein series on subsets of the strip $\tfrac{1}{2}\leq\Re s\leq\tfrac{3}{2}$.
\begin{lem}\label{Ebound}
Let $c_{0}>0$. For any $T\in\RR$ there exists $T_{0}$ with $|T-T_{0}|\leq c_{0}T_{0}^{-2}$ such that
\begin{equation}
|E(z_{0},\tfrac{1}{2}+t+\i T_{0})|\ll T_{0}^{2}e^{3T_{0}}
\end{equation}
for all $t\in[0,1]$.
\end{lem}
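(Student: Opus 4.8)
The plan is to reduce the asserted estimate to the standard bound for $E(z_0,\cdot)$ near the critical line, to propagate it across the strip $\tfrac12\le\Re s\le\tfrac32$ by a convexity argument, and to use the (very small) freedom in the position of $T_0$ to keep away from the spectral parameters of $X$. Recall that the poles of $s\mapsto E(z_0,s)$ are among the poles of $\varphi(s)$, so by the product formula \eqref{prod} the only poles in $\Re s\ge\tfrac12$ lie in the real interval $(\tfrac12,1]$ (the residual spectrum, including the simple pole at $s=1$); there are finitely many of them, call them $s_1,\dots,s_r$, and they all lie on the real axis. Consequently, for $|\Im s|\ge1$ the Eisenstein series $E(z_0,s)$ is holomorphic throughout the closed strip $\tfrac12\le\Re s\le\tfrac32$. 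The range $|T|\le1$ is harmless: there the admissible interval $|T-T_0|\le c_0T_0^{-2}$ has length bounded away from $0$, so one moves $T_0$ a fixed distance off the real axis, away from $s_1,\dots,s_r$, and bounds $|E(z_0,\tfrac12+t+\i T_0)|$, $t\in[0,1]$, by its maximum over a fixed compact subset of the region of holomorphy. Assume henceforth that $|T|\ge1$.

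For $|T|\ge1$ there are two boundary inputs. On the right edge, absolute convergence of \eqref{eisen11} gives
\begin{equation*}
|E(z_0,\tfrac32+\i t)|\le\sum_{\gamma\in\Gamma_\infty\backslash\Gamma}(\Im\gamma z_0)^{3/2}=O_{\Gamma,z_0}(1),
\end{equation*}
uniformly in $t$. Near the critical line one invokes the general bound for the Eisenstein series on a non-arithmetic finite-area surface, Thm.\ 12.9(d) in \cite{Hj3}; it has the same shape as the bound for $\varphi(s)$ in Lemma \ref{scatbound}, so that $|E(z_0,s)|$ is at most a power of $1+|\Im s|$ times $e^{O_\Gamma(|\Im s|)}$ times a product over the scattering parameters $\eta_j+\i\gamma_j$, and — writing $\xi=s-\tfrac12$ — each factor of that product equals $|(\xi-\eta_j)^2+\gamma_j^2|\,/\,|(\xi+\eta_j)^2+\gamma_j^2|$, which is at most $1$ as soon as $\Re\xi\ge0$. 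The degradation that forces a careful choice of $T_0$ comes not from this product but from the cusp-form eigenvalues $\tfrac14+\rho_j^2$, through the resolvent on which \cite{Hj3} rests: near these the bound gains a factor of size $1+|\Im s|$ over the distance from $\Im s$ to the nearest $\rho_j$. One therefore chooses $T_0$, subject to $|T-T_0|\le c_0T_0^{-2}$, at distance $\gg_\Gamma T_0^{-2}$ from every $\rho_j$ and every $\gamma_j$; such a $T_0$ exists because, by the Weyl law for the cuspidal spectrum and for the scattering determinant, an interval of length $\asymp c_0T^{-2}$ about $T$ contains only $O_\Gamma(1)$ of these parameters (indeed none, once $T$ is large), so a pigeonhole argument applied inside the inner half of that interval produces it.

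With $T_0$ so chosen, both edges of the strip $\tfrac12\le\Re s\le\tfrac32$ at height $T_0$ carry a bound $\ll_\Gamma(1+T_0)^{O(1)}e^{O_\Gamma(T_0)}$; after clearing the poles $s_1,\dots,s_r$ by the polynomial factor $\prod_{k=1}^{r}(s-s_k)$ the resulting function is holomorphic and of finite exponential type in the whole strip, and the Phragm\'en--Lindel\"of (three-lines) principle propagates the boundary bound to the interior. Dividing back and collecting the polynomial and exponential contributions gives $|E(z_0,\tfrac12+t+\i T_0)|\ll T_0^2e^{3T_0}$ for every $t\in[0,1]$, the factor $e^{3T_0}$ being inherited without improvement from the general bound of \cite{Hj3}. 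I expect the main obstacle to be precisely the tension between the extreme narrowness of the window $|T-T_0|\le c_0T_0^{-2}$ and the separation needed from the spectral parameters: one has to check quantitatively, via the two Weyl laws, that such a window is never saturated by cusp-form eigenvalues and scattering poles, so that a separation of order $T_0^{-2}$ is always attainable; the convexity step and the bookkeeping of constants are then routine.
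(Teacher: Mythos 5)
Your overall strategy (convexity across the strip plus a careful choice of $T_0$ inside the window $|T-T_0|\le c_0T_0^{-2}$) is reasonable in spirit, but it hinges on a claim that is not justified and, as stated, is false: that by the Weyl law an interval of length $\asymp c_0T^{-2}$ about an \emph{arbitrary} $T$ contains only $O_\Gamma(1)$ of the parameters $\rho_j,\gamma_j$ (``indeed none, once $T$ is large''), so that a pigeonhole argument yields $T_0$ at distance $\gg T_0^{-2}$ from all of them. The Weyl law $N(R)\sim cR^2$ (even with the remainder $O(R/\log R)$) controls counts only at unit scale, not at scale $T^{-2}$; nothing prevents clustering of eigenvalues in a window of length $T^{-2}$, and since the lemma must hold for \emph{every} $T$, you may take $T$ equal to a cusp-form parameter $\rho_j$ itself, in which case the window certainly contains spectral parameters no matter how large $T$ is. So the separation $\gg T_0^{-2}$ you need for the boundary bound cannot be extracted this way, and the quantitative step you yourself flag as the main obstacle is exactly where the argument breaks. (A smaller issue: even granting the separation, a blow-up factor of size $(1+|\Im s|)/\mathrm{dist}\gg T_0^{3}$ overshoots the stated polynomial $T_0^{2}$, so the bookkeeping would not deliver the lemma as written without further care.)

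The paper sidesteps eigenvalue counting entirely. It quotes from Hejhal (Thm.\ 12.9(d), Prop.\ 12.7) together with Iwaniec (10.13) a bound of the form $|E(z_0,\sigma+\i t)|\ll\sqrt{w(t)}\,e^{3|t|}$, uniform for $\tfrac12\le\sigma\le\tfrac32$, where the entire spectral degradation is packaged into a single weight $w\ge1$ satisfying the \emph{integrated} estimate $\int_{-R}^{R}w(t)\,dt\ll R^{2}$. One then averages: if $w(r)>\tfrac{c_1c_0^{-1}}{2}T^{4}$ for all $r$ in the window of length $2c_0T^{-2}$, the integral over that window alone would exceed $c_1T^{2}$, contradicting the integral bound; hence some $T_0$ in the window has $w(T_0)\ll T^{4}$, giving $|E(z_0,\tfrac12+t+\i T_0)|\ll T_0^{2}e^{3T_0}$. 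This mean-value/pigeonhole on the weight, rather than on the location of eigenvalues, is the missing idea: it is insensitive to clustering and delivers exactly the $T_0^{2}$ polynomial. If you want to salvage your write-up, replace the ``separation from the spectrum'' step by this averaging over the window against an integrable majorant of the local blow-up.
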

\begin{proof}
We have the following bounds (cf. Thm. 12.9.(d), p. 164, Prop. 12.7, p. 161 \cite{Hj3}, (10.13), p. 142 \cite{Iw})
\begin{equation}\label{bound1}
|E(z_{0},\sigma+\i t)|\ll\sqrt{w(t)}\,e^{3|t|},
\end{equation}
uniformly for $\tfrac{1}{2}\leq\sigma\leq\tfrac{3}{2}$, where $w$ satisfies 
\begin{equation}
\forall t\in\RR:\,w(t)\geq1
\end{equation}
and, making use of (10.13) in \cite{Iw} and running through the same argument as on pp. 161-62 in \cite{Hj3}, $w$ furthermore satisfies
\begin{equation}\label{bound2}
\int^{R}_{-R}w(t)dt\ll R^{2}.
\end{equation}
Using bound \eqref{bound2} we infer
\begin{equation}
\int_{T-c_{0}T^{-2}}^{T+c_{0}T^{-2}}w(r)dr\leq c_{1}T^{2}
\end{equation}
for some uniform constant $c_{1}>0$. Now suppose for a contradiction that for all $r$ with $|r-T|\leq c_{0}T^{-2}$ we have $w(r)>\frac{c_{1}c_{0}^{-1}}{2}T^{4}$. This implies
\begin{equation}
\int_{T-c_{0}T^{-2}}^{T+c_{0}T^{-2}}w(r)dr>2c_{0}T^{-2}\,\frac{c_{1}c_{0}^{-1}}{2}T^{4}=c_{1}T^{2}
\end{equation}
which is a contradiction to the above bound. So we conclude that there exists $T_{0}$ with $|T-T_{0}|\leq c_{0}T^{-2}$ such that $w(T_{0})\leq\frac{c_{1}c_{0}^{-1}}{2}T^{4}$. This implies by bound \eqref{bound1}
\begin{equation}
|E(z_{0},\tfrac{1}{2}+t+\i T_{0})|\ll T_{0}^{2}e^{3T_{0}}
\end{equation}
for $t\in[0,1]$.
\end{proof}

By using Lemmas \ref{scatbound} and \ref{Ebound} we can establish a bound on the relative zeta function on intervals crossing the strip $|\Im\rho|\leq\sigma$ in between its poles and zeros. We bound the sum over the discrete spectrum analogously as in \cite{U} (cf. Proposition 9, p. 17).
\begin{prop}\label{seqbound}
There exists a sequence $\lbrace T_{N}\rbrace_{N\in\NN}$ in $\RR_{+}$, $\lim_{N\to\infty}T_{N}=+\infty$, such that for all $N$ and $t\in[0,\sigma]$ we have the uniform bound
\begin{equation}
|S_\alpha(\tfrac{1}{2}+\i\rho_{N}(t))|\ll e^{C_{2}(\Gamma)T_{N}^{2}\ln T_{N}},
\end{equation}
where $\rho_{N}(t)=T_{N}-\i t$ and $C_{2}(\Gamma)$ is some positive constant.
\end{prop}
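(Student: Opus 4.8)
The plan is to bound $S_\alpha(\tfrac12+\i\rho)$ on line segments $\rho_N(t)=T_N-\i t$, $t\in[0,\sigma]$, by treating its two constituent pieces — the discrete (cuspidal plus residual) sum and the continuous contribution — separately, and then choosing $T_N$ to avoid the poles of both simultaneously. Recall that $S_\alpha(s)=\tfrac1\beta+m_\Gamma\psi(s)+\sum_{\gamma\in\Gamma\setminus\scrI}G_s(z_0,\gamma z_0)$ for $\Re s>1$, but near the critical line we must work instead with the spectral expansion of the regularised Green's function $(G^\Gamma_s-G^\Gamma_t)(z,z_0)$, which converges on $\tfrac12\le\Re s$ and exhibits $S_\alpha$'s poles at the unperturbed eigenvalues and on the critical line at the Eisenstein spectrum via the term $\frac1{2\pi}\int_{-\infty}^\infty \frac{E(z_0,\tfrac12+\i\rho')E(z_0,\tfrac12-\i\rho')}{\rho'^2-\rho^2}\,d\rho'$. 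The idea is to split this integral at $|\rho'|\le 2|\Im\xi|+2$ (where $\xi=s-\tfrac12$) versus the tail, exactly in the spirit of the compact-case argument in \cite{U}, Proposition 9.

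First I would handle the discrete part. As noted in the paragraph preceding the statement, the sum $\sum_{j=-M}^\infty \frac{|\varphi_j(z_0)|^2}{\rho_j^2-\rho^2}$ admits a uniform bound of polynomial growth away from a suitably chosen sequence of real parts $T_N$, by the same counting argument (Weyl's law for the $\rho_j$ together with the pigeonhole selection of gaps) used in \cite{U}; the finitely-many-cusp-forms case is trivial. This forces the first constraint on $T_N$: it must avoid $O(T_N)$-neighbourhoods of the discrete eigenvalues in a quantified sense, which can always be arranged on a sequence tending to infinity.

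Second, and this is where Lemmas \ref{scatbound} and \ref{Ebound} enter, I would bound the continuous contribution. The tail of the spectral integral (over $|\rho'|>2|\Im\xi|+2$) is controlled directly by the convergence estimate $\sum_j \frac{\eta_j}{1+\gamma_j^2}<\infty$ together with the pointwise size of $E(z_0,\cdot)$ from Lemma \ref{Ebound}. For the main range $|\rho'|\le 2|\Im\xi|+2$ one uses Lemma \ref{scatbound} to convert $|E(z_0,\tfrac12+\i\rho')|^2$ — or more precisely the scattering determinant factor appearing in the functional equation \eqref{FE} — into the explicit Blaschke-type product over $\gamma_j<2|\Im\xi|+2$; this product can be bounded by $e^{O(T_N^2\log T_N)}$ provided $T_N$ is chosen so that the real part stays bounded away from the poles $\xi=\eta_j\pm\i\gamma_j$ of the product, which is the content of the selection in Lemma \ref{Ebound} applied along the imaginary shift. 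Combining the two ranges gives $|S_\alpha(\tfrac12+\i\rho_N(t))|\ll e^{C_2(\Gamma)T_N^2\ln T_N}$ uniformly for $t\in[0,\sigma]$ once $T_N$ simultaneously satisfies the discrete-spacing constraint and the Eisenstein-spacing constraint; a final pigeonhole argument shows such a sequence $T_N\to\infty$ exists.

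The main obstacle is the continuous part: the factor of $(\sigma+|\Im\xi|)$-type growth in Lemma \ref{scatbound} is harmless, but the Blaschke product over $\gamma_j<2|\Im\xi|+2$ has roughly $O(T_N^2)$ factors (by the Weyl-type count for the poles $\rho_{-k}$ and the zeros $\eta_j+\i\gamma_j$ of $\varphi$), each of which, when $\xi$ is forced near some $\eta_j\pm\i\gamma_j$, can be as large as a small negative power of the gap; bounding the product therefore requires a quantitative lower bound on $\min_j|\xi\mp(\eta_j+\i\gamma_j)|$ along the chosen segment, which is exactly what the averaging argument of Lemma \ref{Ebound} furnishes after translating from bounds on $w(t)$ to bounds on the product. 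Reconciling the two independently-chosen sequences (discrete vs.\ continuous) into one, while keeping the exponent at $T_N^2\log T_N$ rather than something worse, is the delicate bookkeeping point; everything else follows the compact-case template in \cite{U}.
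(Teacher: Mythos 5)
Your overall architecture (spectral expansion, discrete part handled as in \cite{U}, resonance counting, pigeonhole choice of $T_N$, Lemmas \ref{scatbound} and \ref{Ebound}) matches the paper's, but there is a genuine gap at the heart of the continuous part: you never address what happens as $t\to 0$, i.e.\ when $\rho_N(t)=T_N-\i t$ approaches the real axis on which the continuous spectrum sits. The spectral expansion of $S_\alpha(\tfrac{1}{2}+\i\rho)$ is valid only for $\Im\rho<0$, and the factor $(r^2-\rho_N(t)^2)^{-1}$ develops a pole on (or arbitrarily close to) the contour of integration as $t\downarrow 0$; no splitting of the $r$-integral at $|r|\le 2|\Im\xi|+2$ versus the tail can yield a bound uniform in $t\in[0,\sigma]$ without first continuing the integral across the critical line. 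The paper does this by deforming the $r$-contour into a small bump $\gamma_N(t)$ of depth $\asymp T_N^{-1}$ around $r=T_N$ and collecting a residual term $\delta_N(t)\,E(z_0,\tfrac{1}{2}+\i\rho_N(t))E(z_0,\tfrac{1}{2}-\i\rho_N(t))/(2\i\rho_N(t))$, and it is precisely on this bump and in this residual term --- where the second Eisenstein factor is evaluated at $\Re s<\tfrac{1}{2}$ and must be rewritten via the functional equation --- that $\varphi$, hence Lemma \ref{scatbound} and the Blaschke-product bound $e^{O(T_N^2\ln T_N)}$, actually enter. Your proposal instead invokes Lemma \ref{scatbound} to control $|E(z_0,\tfrac{1}{2}+\i\rho')|^2$ in the main range of the real-line integral, which is not what that lemma does; on the critical line the correct tool is the mean-value bound $\int_0^T|E(z_0,\tfrac{1}{2}+\i r)|^2\,dr\ll T^2$, and this is also what controls the tail. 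Your proposed tail estimate (via $\sum_j\eta_j/(1+\gamma_j^2)<\infty$ plus pointwise size of $E$ from Lemma \ref{Ebound}) does not work, since Lemma \ref{Ebound} provides a pointwise bound only at specially selected heights $T_0$, not along the whole tail.

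Two smaller corrections. The quantitative lower bound $|\Im\xi\pm\gamma_j|\gg T_N^{-1}$ needed for the Blaschke product does not come from the averaging argument of Lemma \ref{Ebound}; it comes from the counting bound \eqref{resbound} applied to the union $K=\lbrace\gamma_j\rbrace\cup\lbrace\rho_i\rbrace$ of resonance real parts and eigenvalue parameters, choosing $\tau_N$ as midpoints of consecutive gaps --- a single pigeonhole, so there is no delicate issue of reconciling two independently chosen sequences, contrary to your closing concern. Lemma \ref{Ebound} is then used only to move $\tau_N$ slightly within its gap to a point $T_N$ at which one additionally has the pointwise bound $|E(z_0,\tfrac{1}{2}+t+\i T_N)|\ll T_N^2e^{3T_N}$, which is what controls the bump and the residual term; with these ingredients the paper's exponent $C_2(\Gamma)T_N^2\ln T_N$ comes from the $\ll T_N^2$ factors of size $\ll T_N^4$ in the product, exactly as you anticipated.
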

\begin{proof}
Recall that for $\Im\rho<0$ the relative zeta function $S_\alpha(\tfrac{1}{2}+\i\rho)$ is given by
\begin{equation}
\begin{split}
S_\alpha(\tfrac{1}{2}+\i\rho)=&\alpha^{-1}+\sum_{j}|\varphi_{j}(z_{0})|^{2}\left(\frac{1}{\rho_{j}^{2}-\rho^{2}}-\Re\left[\frac{1}{\rho_{j}^{2}-\rho(\eta)^{2}}\right]\right)\\
&+\frac{1}{4\pi}\int_{-\infty}^{+\infty}E(z_{0},\tfrac{1}{2}+\i r)E(z_{0},\tfrac{1}{2}-\i r)\\
&\quad\quad\quad\quad\left(\frac{1}{r^{2}-\rho^{2}}-\Re\left[\frac{1}{r^{2}-\rho(\eta)^{2}}\right]\right)dr.
\end{split}
\end{equation}
$S_\alpha(\tfrac{1}{2}+\i\rho)$ can be continued meromorphically to the full complex plane by shifting the contour of integration and collecting a residue. We will require a continuation to the real line in order to establish the desired bound. We first have to introduce some notation.

Define $\gamma:[-1,1]\mapsto\RR_{-}$
\begin{equation}
\gamma(t)=
\begin{cases}
\begin{split}
&-1, &\text{if}\;t&\in[-\tfrac{1}{3},\tfrac{1}{3}],\\
&-1+\tfrac{3}{2}(t-\tfrac{1}{3}), &\text{if}\;t&\in[\tfrac{1}{3},1],\\
&-1+\tfrac{3}{2}(\tfrac{1}{3}+t), &\text{if}\;t&\in[-1,-\tfrac{1}{3}].
\end{split}
\end{cases}
\end{equation}
We define a contour $\Gamma_{N}(t)$, where $\rho_{N}(t)=T_{N}-\i t$, $t\in[0,\sigma]$, by the parametrisation
\begin{equation}
\Gamma_{N}(t)(u)=
\begin{cases}
\begin{split}
u,\qquad&\text{if $|u-T_{N}|>\tfrac{1}{6}cT_{N}^{-1}$}\\ 
&\text{or if $|u-T_{N}|\leq\tfrac{1}{6}cT_{N}^{-1}$ and $t\geq\tfrac{1}{12}cT_{N}^{-1}$},\\
\\
u+\i T_{N}^{-1}&\gamma(6c^{-1}T_{N}(u-T_{N})),\\
\\
&\text{if $|u-T_{N}|\leq \tfrac{1}{6}cT_{N}^{-1}$
and $0\leq t<\tfrac{1}{12}cT_{N}^{-1}$},
\end{split}
\end{cases}
\end{equation}
for $u\in\RR_{+}$ and $c>0$ is a constant which we will determine later on in the proof. To simplify notation we will also denote this decomposition by
\begin{equation}
\Gamma_{N}(t)=(\RR_{+}\backslash[T_{N}-\tfrac{1}{6}cT_{N}^{-1},T_{N}+\tfrac{1}{6}cT_{N}^{-1}])\cup\gamma_{N}(t).
\end{equation}

Let 
\begin{equation}
\delta_{N}(t)=
\begin{cases}
1,\;\text{if}\;0\leq t<\tfrac{1}{12}cT_{N}^{-1}\\
0,\;\text{if\,}\; t\geq\tfrac{1}{12}cT_{N}^{-1}.
\end{cases}
\end{equation}
Then we have
\begin{equation}\label{rep}
\begin{split}
&S_\alpha(\tfrac{1}{2}+\i\rho_{N}(t))\\
=\;&\alpha^{-1}+\sum_{j}|\varphi_{j}(z_{0})|^{2}\left(\frac{1}{\rho_{j}^{2}-\rho_{N}(t)^{2}}-\Re\left[\frac{1}{\rho_{j}^{2}-\rho(\eta)^{2}}\right]\right)\\
&-\delta_{N}(t)\frac{E(z_{0},\tfrac{1}{2}+\i\rho_{N}(t))E(z_{0},\tfrac{1}{2}-\i\rho_{N}(t))}{2\i\rho_{N}(t)}\\
&+\frac{1}{2\pi}\int_{\Gamma_{N}(t)}E(z_{0},\tfrac{1}{2}+\i r)E(z_{0},\tfrac{1}{2}-\i r)\\ &\qquad\qquad\left(\frac{1}{r^{2}-\rho_{N}(t)^{2}}-\Re\left[\frac{1}{r^{2}-\rho(\eta)^{2}}\right]\right)dr.
\end{split}
\end{equation}

Recall that the set $\lbrace\gamma_{j}\rbrace_{j=0}^{\infty}$ denotes the real parts of the resonances of $E(z,\tfrac{1}{2}+\i\rho)$ (since $i\rho=\xi=s-\tfrac{1}{2}$). Let $K=\lbrace\gamma_{j}\rbrace_{j=0}^{\infty}\cup\lbrace\rho_{i}\rbrace_{i=0}^{\infty}
=\lbrace\kappa_{l}\rbrace_{l=0}^{\infty}$. One has the upper bound (cf. (7.11), p. 101 \cite{Iw})
\begin{equation}\label{resbound}
\#\lbrace j\mid \gamma_{j}\leq R\rbrace \ll R^{2},
\end{equation}
which implies
\begin{equation}
\#\lbrace l\mid\kappa_{l}\leq R\rbrace=\#\lbrace j\mid \gamma_{j}\leq R\rbrace+\#\lbrace i\mid \rho_{i}\leq R\rbrace\ll R^{2}.
\end{equation}
Therefore we can pick an infinite sequence $\tau_{N}=\tfrac{1}{2}(\kappa_{N}+\kappa_{N+1})$ with $|\kappa_{N}-\kappa_{N+1}|\geq c_{1}\tau_{N}^{-1}$. Take $c_{0}=\tfrac{1}{6}c_{1}$ in Lemma \ref{Ebound}. So for every $N$ there exists $T_{N}$ with $|T_{N}-\tau_{N}|\leq\tfrac{1}{6}c_{1}T_{N}^{-1}$ such that
\begin{equation}\label{bound3}
|E(z_{0},\tfrac{1}{2}+t+\i T_{N})|\ll T_{N}^{2}e^{3T_{N}}.
\end{equation}
We also note $|\kappa_{N}-\kappa_{N+1}|\geq cT_{N}^{-1}$ for some $0<c<c_{1}$ and $T_{N}$ sufficiently large. This implies for all $j$
\begin{equation}
|\rho_{N}(t)-\kappa_{j}|\geq|T_{N}-\kappa_{j}|\geq\tfrac{1}{3}|\kappa_{N}-\kappa_{N+1}|\geq\tfrac{1}{3}cT_{N}^{-1}.
\end{equation}

We now turn to estimating $S_\alpha(\tfrac{1}{2}+\i\rho_{N}(t))$ where we use the representation \eqref{rep}. We split the integral into two parts, where we can drop the real part in the regularisation term for simplicity, 
\begin{equation}
\begin{split}
&\int_{\Gamma_{N}(t)}E(z_{0},\tfrac{1}{2}+\i r)E(z_{0},\tfrac{1}{2}-\i r)\left(\frac{1}{r^{2}-\rho_{N}(t)^{2}}-\frac{1}{r^{2}-\rho(\eta)^{2}}\right)dr\\
=&\left\{\int_{|r-T_{N}|>cT_{N}^{-1}/6}+\int_{\gamma_{N}(t)}\right\}E(z_{0},\tfrac{1}{2}+\i r)E(z_{0},\tfrac{1}{2}-\i r)\\
&\qquad\qquad\left(\frac{1}{r^{2}-\rho_{N}(t)^{2}}-\frac{1}{r^{2}-\rho(\eta)^{2}}\right)dr.
\end{split}
\end{equation}
We bound the tail as follows
\begin{equation}
\begin{split}
&|\rho_{N}(t)^{2}-\rho(\eta)^{2}|\int_{T_{N}+cT_{N}^{-1}/6}^{\infty}\frac{\left|E(z_{0},\tfrac{1}{2}+\i r)\right|^{2}dr}{|r^{2}-\rho_{N}(t)^{2}||r^{2}-\rho(\eta)^{2}|}\\
\ll&\,T_{N}^{2}\left\{\int_{T_{N}+\tfrac{1}{6}cT_{N}^{-1}}^{T_{N}+1}+\sum_{k=1}^{\infty}\int_{T_{N}+k}^{T_{N}+k+1}\right\}\frac{\left|E(z_{0},\tfrac{1}{2}+\i r)\right|^{2}dr}{|r^{2}-\rho_{N}(t)^{2}||r^{2}-\rho(\eta)^{2}|}\\
\ll&\,T_{N}^{2}+T_{N}^{2}\sum_{k=2}^{\infty}\frac{(T_{N}+k)^{2}}
{|(T_{N}+k)^{2}-\rho_{N}(t)^{2}||(T_{N}+k)^{2}-\rho(\eta)^{2}|}
\end{split}
\end{equation}
where we have used the mean value bound on Eisenstein series
\begin{equation}
\int_{0}^{T}|E(z_{0},\tfrac{1}{2}+\i t)|^{2}dt\ll T^{2}
\end{equation}
which follows straight away from Prop. 7.2., p. 101 in \cite{Iw}. 

Let 
\begin{equation}
b(\upsilon,\omega,\xi)=\frac{\upsilon^2}{(\upsilon^2-\omega)(\upsilon^2+\xi)}.
\end{equation}

We continue with the estimate as follows
\begin{equation}
\begin{split}
&T_{N}^{2}\left\{\sum_{k=2}^{\left\lfloor T_{N}\right\rfloor}+\sum_{k=\left\lceil T_{N}\right\rceil}^{\infty}\right\}b(T_
N+k,\rho_N(t),\tfrac{1}{4})\\
\leq \;& T_N^3\max_{2\leq k\leq\left\lfloor T_N\right\rfloor}b(T_N+k,\rho_N(t),\tfrac{1}{4})
+T_{N}^{2}\sum_{k=\left\lceil T_{N}\right\rceil}^{\infty}b(T_N+k,\rho_N(t),\tfrac{1}{4})
\end{split}
\end{equation}
It follows from $\Re\lbrace\rho_N(t)^2\rbrace=T_N^2-t^2$ that 
$$\max_{2\leq k\leq\left\lfloor T_N\right\rfloor}b(T_N+k,\rho_N(t),\tfrac{1}{4})\ll T_N^{-1}.$$
On the other hand the sum can be bounded by an integral as follows
\begin{equation}
\begin{split}
&\sum_{k=\left\lceil T_{N}\right\rceil}^\infty b(T_N+k,T_N^2-t^2,\tfrac{1}{4})\\
\ll \;&\int_{\left\lceil T_{N}\right\rceil}^\infty b(T_N+x,T_N^2-t^2,\tfrac{1}{4})dx\\
= \;&T_N^{-1}\int_{T_N^{-1}\left\lceil T_{N}\right\rceil}^\infty b(1+y,1-T_N^{-2}t^2,\tfrac{1}{4}T_N^{-2})dy\\
\ll \;& T_N^{-1}
\end{split}
\end{equation}

We obtain the following bound for the tail
\begin{equation}
|\rho_{N}(t)^{2}-\rho(\eta)^{2}|\int_{T_{N}+cT_{N}^{-1}/6}^{\infty}\frac{\left|E(z_{0},\tfrac{1}{2}+\i r)\right|^{2}dr}{|r^{2}-\rho_{N}(t)^{2}||r^{2}-\rho(\eta)^{2}|}\ll T_N^2
\end{equation}


We bound the central part by
\begin{equation}
\begin{split}
&\int_{-1}^{1}|E(z_{0},\tfrac{1}{2}+\i[\gamma_{N}(t)](r))|^{2}\\
&\qquad\frac{|\varphi(\tfrac{1}{2}-\i[\gamma_{N}(t)](r))|}
{|[\gamma_{N}(t)](r)^{2}-\rho_{N}(t)^{2}||[\gamma_{N}(t)](r)^{2}-\rho(\eta)^{2}|}\left|\frac{d[\gamma_{N}(t)](r)}{dr}\right|dt\\
\ll&\,T_{N}^{2}\,\int_{T_{N}-cT_{N}^{-1}/6}^{T_{N}+cT_{N}^{-1}/6}w(r)e^{6|r|}dr\,\sup_{r\in \,\gamma_{N}(t)}\frac{|\varphi(\tfrac{1}{2}-\i r)|}{|r^{2}-\rho_{N}(t)^{2}||r^{2}-\rho(\eta)^{2}|}\\
\ll&\,T_{N}^{4}e^{6 T_{N}}\sup_{r\in \,\gamma_{N}(t)}\frac{|\varphi(\tfrac{1}{2}-\i r)|}{|r^{2}-\rho_{N}(t)^{2}||r^{2}-\rho(\eta)^{2}|}
\end{split}
\end{equation}
which follows from \eqref{bound1} and \eqref{bound2}.

For $r\in\gamma_{N}(t)$ we have
\begin{equation}
|r-\rho_{N}(t)|\geq\tfrac{1}{12}cT_{N}^{-1}
\end{equation}
and for $T_{N}$ large
\begin{equation}
|r+\rho_{N}(t)|=T_{N}+O(T_{N}^{-1}),\quad|r^{2}-\rho(\eta)^{2}|=T_{N}^{2}+O(1).
\end{equation}
Hence
\begin{equation}
\sup_{r\in\gamma_{N}(t)}\frac{1}{|r^{2}-\rho_{N}(t)^{2}||r^{2}-\rho(\eta)^{2}|}\ll T_{N}.
\end{equation}

Let $\xi=-\i r$ and $r\in\gamma_{N}(t)$. Assume $\gamma_{j}<2|\Im\xi|+2$. Then we have
\begin{equation}
|\xi+\eta_{j}\pm\i\gamma_{j}|=|\Re\xi+\eta_{j}+\i\Im\xi\pm\i\gamma_{j}|
\geq|\Im\xi\pm\gamma_{j}|\geq\tfrac{1}{3}cT_{N}^{-1}
\end{equation}
by the choice of the sequence $\lbrace T_{N}\rbrace_{N}$. 

Let $\eta=\sup_{j}\eta_{j}$. We know that $\eta<+\infty$ (cf. \cite{Hj3}, last line in the proof of Prop. 12.5, p. 157). We thus have
\begin{equation}
\begin{split}
|\xi-\eta_{j}+\i\Im\xi\pm\i\gamma_{j}|\leq \;&|\Re\xi|+\eta_{j}+|\Im\xi|+|\gamma_{j}|\\
\leq \;&\sigma+\eta+2|\Im\xi|+2\ll T_{N}.
\end{split}
\end{equation}
This implies
\begin{equation}\label{bound4}
\begin{split}
\prod_{\gamma_{j}<2|Im\xi|+2}\left|\frac{\xi-\eta_{j}-\i\gamma_{j}}{\xi+\eta_{j}+\i\gamma_{j}}\right|\left|\frac{\xi-\eta_{j}+\i\gamma_{j}}{\xi+\eta_{j}-\i\gamma_{j}}\right|
\ll&\,T_{N}^{4\#\lbrace j\mid\gamma_{j}<2|Im\xi|+2\rbrace}\\
\leq&\,T_{N}^{4\tilde{c}(\Gamma)(2(T_{N}+cT_{N}^{-1}/6)+2)^{2}}\\
\ll&\,e^{16\tilde{c}(\Gamma)T_{N}^{2}\ln T_{N}}
\end{split}
\end{equation}
for some positive constant $\tilde{c}(\Gamma)$, where the bound in the second line follows from \eqref{resbound}. Together with Lemma \ref{scatbound} we obtain
\begin{equation}
\sup_{r\in\gamma_{N}(t)}|\varphi(\tfrac{1}{2}-\i r)|\ll e^{16cT_{N}^{2}\ln T_{N}}.
\end{equation}
For the residual term we have
\begin{equation}
\begin{split}
&\left|\frac{E(z_{0},\tfrac{1}{2}+\i\rho_{N}(t))E(z_{0},\tfrac{1}{2}-\i\rho_{N}(t))}{2\rho_{N}(t)}\right|\\
=&\;\frac{|E(z_{0},\tfrac{1}{2}+t+\i T_{N})|^{2}|\varphi(\tfrac{1}{2}-t-\i T_{N})|}{2|T_{N}-\i t|}\\
\ll&\; T_{N}^{3}e^{6T_{N}+16cT_{N}^{2}\ln T_{N}}
\end{split}
\end{equation}
where we have used \eqref{bound3} and \eqref{bound4}. Since we know that the sum in \eqref{rep} is polynomially bounded in $T_{N}$ the result follows straightaway.
\end{proof}

\section{The trace formula}

The key idea in the proof of the trace formula is to exploit the properties of the relative zeta function $S_\alpha(s)$. In particular this means the zeros and poles of $S_\alpha(s)$ which are associated with the perturbed and unperturbed discrete spectrum. We also need to make use of its link with the perturbed Eisenstein series via the scattering coefficient 
\begin{equation}
\varphi_\alpha(s)=\varphi(s)\frac{S_\alpha(1-s)}{S_\alpha(s)}.
\end{equation}
Let $B(T)=[\i\sigma,-\i\sigma]\times[-T,T]$ such that $\partial B(T)$ does not contain any zeros or poles of $S_\alpha(s)$. Similarly to the compact case the strategy in the proof of the trace formula is to first of all prove a truncated version. We then absorb the resonances into an integral along the critical line. Finally we prove the necessary bounds to justify the limit $T\to\infty$. Throughout the proof we shall assume that $\alpha\in\RR\backslash\lbrace0\rbrace$ is generic in the sense that there are only finitely many new perturbed eigenvalues (i. e. ignoring possible inherited degenerate eigenvalues of $\Delta$). Proposition \ref{finite} shows that the exceptional set is at most countable. So once we have obtained the proof of the trace formula for generic $\alpha$ we can extend the result to any $\alpha$ by continuity. Here it will be crucial that any resonances near the critical line are controlled by Proposition \ref{lim}.

Let us begin by deriving an expression involving the resonances. We can express the sum over perturbed and unperturbed resonances in the strip $0<\Im\rho<\sigma$ in terms of the following integral. We also pick up the small perturbed eigenvalues, since those are encoded in the perturbation factor of the perturbed scattering coefficient $\varphi_\alpha(s)/\varphi(s)=\theta_\alpha(s)$.
\begin{lem}\label{formula2}
Choose $\sigma$, $T$ as above. Let $h\in H_{\sigma,\delta}$. Then
\begin{equation}
\begin{split}
&-\sum_{r_{j}\in B(T)}h(r_{j})+\sum_{r^{\alpha}_{j}\in B(T)}h(r^{\alpha}_{j}) -\sum_{\rho^{\alpha}_{j}\in(0,-\i\sigma)}h(\rho^{\alpha}_{j})\\
=\,&\frac{1}{2\pi\i}\left[\int_{-\i\sigma-T}^{-\i\sigma+T}-\int_{-T}^{T}\right]h(\rho)\frac{d}{d\rho}\log\theta_\alpha(\tfrac{1}{2}+\i\rho)d\rho\\
&+\frac{1}{2\pi\i}\left[\int_{-T}^{-\i\sigma-T}+\int_{-\i\sigma+T}^{T}\right]h(\rho)\frac{d}{d\rho}\log\theta_\alpha(\tfrac{1}{2}+\i\rho)d\rho.
\end{split}
\end{equation}
\end{lem}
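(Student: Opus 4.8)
The plan is to apply the argument principle to the meromorphic function $\theta_\alpha(\tfrac12+\i\rho) = S_\alpha(1-s)/S_\alpha(s)$ (with $s=\tfrac12+\i\rho$) on the rectangular contour $\partial B(T)$, and then convert the resulting contour integral of $h(\rho)\,\tfrac{d}{d\rho}\log\theta_\alpha$ into the stated combination of the four edge integrals. First I would recall from Theorem~\ref{prop} and the functional equation \eqref{FE} exactly which zeros and poles of $S_\alpha(s)$ lie in the strip $\tfrac12<\Re s<\tfrac12+\sigma$ (equivalently $0<\Im\rho<\sigma$) versus on the line $\Re s=\tfrac12$: the perturbed resonances (zeros of $S_\alpha(s)$ with $\Re s<\tfrac12$, seen via $1-s$ as zeros with $\Re(1-s)>\tfrac12$), the unperturbed resonances (poles of $S_\alpha(s)$ inherited from $G^\Gamma_s$, likewise reflected), and the small perturbed eigenvalues $\rho^\alpha_j\in(0,-\i\sigma)$ which show up because $\theta_\alpha$ is precisely the perturbation factor $\varphi_\alpha/\varphi$. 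Counting these with the correct signs — zeros of $\theta_\alpha$ count positively, poles negatively — gives the left-hand side: the $-\sum h(r_j)$ comes from poles of $\theta_\alpha$ at unperturbed quantities, $+\sum h(r^\alpha_j)$ from zeros at perturbed quantities, and $-\sum_{\rho^\alpha_j\in(0,-\i\sigma)}h(\rho^\alpha_j)$ from the residual/pseudo-cuspidal contribution inside the strip.

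The computational heart is the identity
\[
\sum_{\text{zeros/poles of }\theta_\alpha\text{ in }B(T)} (\pm)\,h(\zeta)
= \frac{1}{2\pi\i}\oint_{\partial B(T)} h(\rho)\,\frac{d}{d\rho}\log\theta_\alpha(\tfrac12+\i\rho)\,d\rho,
\]
valid since $h$ is analytic in $|\Im\rho|\le\sigma$ by the definition of $H_{\sigma,\delta}$, so no poles of $h$ contribute and every residue of $h\,(\log\theta_\alpha)'$ is just $\pm h$ evaluated at the zero/pole. Next I would decompose $\partial B(T)$ into its four sides: the two horizontal segments $[-\i\sigma-T,-\i\sigma+T]$ (top, traversed so as to contribute with the correct orientation) and $[-T,T]$ (bottom), and the two short vertical segments $[-T,-\i\sigma-T]$ and $[-\i\sigma+T, T]$ at the far left and right ends. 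Writing the oriented boundary integral as the difference of the two long horizontal integrals plus the two short vertical ones produces exactly the right-hand side displayed in the lemma; this is a bookkeeping step about orientations on a rectangle, and I would just be careful that the sign in front of $\bigl[\int_{-\i\sigma-T}^{-\i\sigma+T}-\int_{-T}^{T}\bigr]$ and of $\bigl[\int_{-T}^{-\i\sigma-T}+\int_{-\i\sigma+T}^{T}\bigr]$ matches the counterclockwise (or clockwise, depending on convention in $\rho$ versus $s$) traversal of $\partial B(T)$.

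The step I expect to be the main obstacle is the precise zero/pole accounting for $\theta_\alpha$ inside $B(T)$, i.e.\ justifying that the left-hand side is exactly those three sums with multiplicity and nothing else. One must use \eqref{scatt} to see $\theta_\alpha=\varphi_\alpha/\varphi$, invoke Theorem~\ref{prop}(iii)--(iv) to identify the zeros of $S_\alpha$ in $\Re s<\tfrac12$ with perturbed resonances and its inherited poles with unperturbed resonances, and crucially use that $\partial B(T)$ was chosen to avoid all zeros and poles of $S_\alpha(s)$ (and hence of $\theta_\alpha$), so the contour integral is well defined and the argument principle applies cleanly. One also has to handle the behaviour at $s=\tfrac12$ (the point $\rho=0$), where $S_\alpha$ has a pole of order $1$ or $2$: since $\theta_\alpha(s)=S_\alpha(1-s)/S_\alpha(s)$ is a ratio, the poles of numerator and denominator at $\rho=0$ cancel in order, so $\rho=0$ is neither a zero nor a pole of $\theta_\alpha$ and contributes nothing — this is the kind of point that needs to be checked rather than assumed. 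Once these identifications are in place, the rest is the routine contour manipulation described above.
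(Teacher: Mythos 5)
Your strategy coincides with the paper's: integrate $h(\rho)\,\frac{d}{d\rho}\log\theta_\alpha(\tfrac{1}{2}+\i\rho)$ around the boundary of the lower half-box $[0,-\i\sigma]\times[-T,T]$, collect $+h$ at zeros and $-h$ at poles of $\theta_\alpha$, and read off the four edge integrals; your observation that $\rho=0$ is a removable point of $\theta_\alpha$ because numerator and denominator have poles of equal order is correct and is implicitly part of the paper's accounting. (You also tacitly need evenness of $h$ to pass from residues at the reflected points $-r_j$, $-r_j^\alpha$ to the sums over $r_j$, $r_j^\alpha$.)

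There is, however, a genuine gap in the justification you give for the argument-principle step. You assert that ``$\partial B(T)$ was chosen to avoid all zeros and poles of $S_\alpha(s)$ (and hence of $\theta_\alpha$), so the contour integral is well defined and the argument principle applies cleanly.'' That cannot be arranged for the edge along the real axis: every unperturbed cuspidal eigenvalue $\rho_j\in(-T,T)$ is a simple pole of both $S_\alpha(\tfrac{1}{2}+\i\rho)$ and $S_\alpha(\tfrac{1}{2}-\i\rho)$ lying on that edge, and any real new eigenvalue $\rho^\alpha_j$ is a zero of both; the paper's condition on $\partial B(T)$ concerns the boundary of the \emph{full} box $[\i\sigma,-\i\sigma]\times[-T,T]$, which does not contain the real line. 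What actually makes the integral over $[-T,T]$ well defined, and prevents these points from contributing residues, is the cancellation in the ratio $\theta_\alpha(\tfrac{1}{2}+\i\rho)=S_\alpha(\tfrac{1}{2}-\i\rho)/S_\alpha(\tfrac{1}{2}+\i\rho)$: at $\rho_j$ the simple poles cancel, and at a real $\rho^\alpha_j$ the zeros cancel because $E(z_0,\tfrac{1}{2}+\i\rho^\alpha_j)=0$ together with \eqref{FE} gives $S_\alpha(\tfrac{1}{2}+\i\rho^\alpha_j)=S_\alpha(\tfrac{1}{2}-\i\rho^\alpha_j)$ — exactly the symmetry argument the paper's proof makes, and the same mechanism you invoked only at $\rho=0$. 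The same cancellation is also needed \emph{inside} the box: a small unperturbed eigenvalue gives a simple pole of the denominator at a point of $(0,-\i\sigma)$, which would naively produce a spurious zero of $\theta_\alpha$ and an extra $+h$ term on the left-hand side; it is cancelled by the matching simple pole of the numerator coming from the poles of $S_\alpha$ in $[0,\tfrac{1}{2})$ recorded in Theorem \ref{prop}(iv). With these cancellations supplied your residue count yields exactly the three stated sums; without them the step does not go through as claimed. (The phrase ``$\tfrac{1}{2}<\Re s<\tfrac{1}{2}+\sigma$, equivalently $0<\Im\rho<\sigma$'' is only a sign slip, since under $s=\tfrac{1}{2}+\i\rho$ that strip is $-\sigma<\Im\rho<0$; it does not affect the rest of your argument.)
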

\begin{proof}
We prove the result by contour integration along the boundary of the box $[0,-\i\sigma]\times[-T,T]$. Since we are integrating $h$ against the logarithmic derivative of the meromorphic function $\theta_\alpha\left(\tfrac{1}{2}+\i\rho\right)$, when we shift across the contour from the line $\Im\rho=-\sigma$ to the real line we collect a residue $-kh(\rho')$ at every pole $\rho'$ of order $k$ of $\theta_\alpha\left(\tfrac{1}{2}+\i\rho\right)$ and a residue $kh(\rho')$ at every zero $\rho'$ of order $k$. The function
\begin{equation}
\theta_\alpha(\tfrac{1}{2}+\i\rho)=\frac{S_\alpha(\tfrac{1}{2}-\i\rho)}{S_\alpha(\tfrac{1}{2}+\i\rho)}
\end{equation}
has no zeros or poles corresponding to the unperturbed eigenvalues $\lbrace\rho_{j}\rbrace_{j}$ since the corresponding poles cancel because of symmetry
\begin{equation}
0=S_\alpha(\tfrac{1}{2}+\i\rho_{j})^{-1}=S_\alpha(\tfrac{1}{2}-\i\rho_{j})^{-1}
\end{equation}
where the zeros are of order one. $S_\alpha(\tfrac{1}{2}-\i\rho)$ has poles corresponding to the unperturbed resonances at $\lbrace-r_{j}\rbrace_{j}$ inside the box $[0,-\i\sigma]\times[-T,T]$ and for each of them we collect a residue $-h(-r_{j})=-h(r_{j})$. It also has zeros corresponding to perturbed resonances at $\lbrace -r^{\alpha}_{j}\rbrace_{j}$ inside the box $[0,-\i\sigma]\times[-T,T]$. For each of them we pick up a residue $h(-r^{\alpha}_{j})=h(r^{\alpha}_{j})$. The function $S_\alpha(\tfrac{1}{2}+\i\rho)^{-1}$ has simple poles in the interval $(0,-\i\sigma)$ corresponding to the small perturbed eigenvalues at $\lbrace\rho^{\alpha}_{j}\rbrace_{j=-M}^{-1}$. For each of them we pick up a residue $-h(-\rho^{\alpha}_{j})=-h(\rho^{\alpha}_{j})$. The result follows from Cauchy's residue theorem where we count multiplicities separately.
\end{proof}

We use the previous Lemma to prove a truncated trace formula. In the following let $\sigma\geq\tfrac{1}{2}$, $\delta>0$ and $h\in H_{\sigma,\delta}$.
\begin{prop}
Let $\delta_{\Gamma}=1$ if $\lambda=\tfrac{1}{4}$ is not an eigenvalue of the Laplacian and $\delta_{\Gamma}=0$ otherwise. Then we have
\begin{equation}\label{prop16}
\begin{split}
&\sum_{\rho^{\alpha}_{j}\in B(T)}h(\rho^{\alpha}_{j})-\sum_{\rho_{j}\in B(T)}h(\rho_{j})\\
=\,&\frac{1}{2\pi}\left[\int_{-\i\sigma-T}^{-\i\sigma+T}+\int_{-T}^{-T-\i\sigma}+\int_{T-\i\sigma}^{T}\right]h(\rho) \frac{S'_\alpha}{S_\alpha}(\tfrac{1}{2}+\i\rho)d\rho\\
&+\tfrac{1}{2}\delta_{\Gamma}h(0)+\frac{1}{4\pi}\int_{-T}^{T}h(\rho)\frac{\theta'_\alpha}{\theta_\alpha}(\tfrac{1}{2}+\i\rho)d\rho.
\end{split}
\end{equation}
\end{prop}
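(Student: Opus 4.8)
The plan is to realise both sides of \eqref{prop16} through a single application of the residue theorem, in the same spirit as Lemma~\ref{formula2} but now applied to the \emph{eigenvalues} rather than to the resonances. One integrates
\[
h(\rho)\,\tfrac{d}{d\rho}\log S_\alpha(\tfrac12+\i\rho)\;=\;\i\,h(\rho)\,\tfrac{S'_\alpha}{S_\alpha}(\tfrac12+\i\rho)
\]
counterclockwise around the boundary of the lower half-box $R(T)=[0,-\i\sigma]\times[-T,T]$, taking the top edge (the real $\rho$-axis from $T$ to $-T$) as a principal value. The two vertical edges and the bottom edge of $R(T)$ are precisely the three integrals displayed in \eqref{prop16}; the top edge will be turned into the $\theta_\alpha$-integral, and the term $\tfrac12\delta_\Gamma h(0)$ will come out of the behaviour at the lone boundary point $\rho=0$.

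\emph{Evaluating the contour integral.} Since $h$ is holomorphic and $\ll(1+|\Re\rho|)^{-2-\delta}$ in $|\Im\rho|\le\sigma$, and since $\partial B(T)$ — hence the bottom and two vertical edges of $R(T)$ — carries no zeros or poles of $S_\alpha$, the residue theorem applies. By Theorem~\ref{prop}, inside the half-box $S_\alpha(\tfrac12+\i\rho)$ has only simple zeros, at the perturbed eigenvalues $\rho^\alpha_j$ with $\lambda^\alpha_j<\tfrac14$, and simple poles, at the unperturbed eigenvalues $\rho_j$ with $\lambda_j<\tfrac14$ (together with the possible zero in $\Re s>1$ when $\lambda^\alpha_{-M}<0$); the resonances lie in $\Re s<\tfrac12$, i.e. in the \emph{upper} half-plane, and do not enter — this is exactly the feature that distinguishes the present computation from that of Lemma~\ref{formula2}. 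A zero gives residue $+h$, a pole gives $-h$; an inherited degenerate eigenvalue of $-\Delta$, contributing with equal multiplicity to both spectra, leaves behind only the corresponding net simple zero or pole of $S_\alpha$. Thus the interior residues reproduce the part of $\sum_j h(\rho^\alpha_j)-\sum_j h(\rho_j)$ coming from eigenvalues below $\tfrac14$.

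\emph{The real axis and the point $\rho=0$.} On the top edge one meets, for each unperturbed eigenvalue $\lambda_j\ge\tfrac14$, the pair of simple poles at $\pm\rho_j$; for each perturbed eigenvalue $\lambda^\alpha_j\ge\tfrac14$ (if there are any, cf.\ Proposition~\ref{finite}) the pair of simple zeros at $\pm\rho^\alpha_j$; and at $\rho=0$ the pole of $S_\alpha$ at $s=\tfrac12$, of order $1$ when $\tfrac14\notin\sigma_d(-\Delta)$ and of order $2$ otherwise. The principal value contributes half of each residue; since $\pm\rho_j$ (resp.\ $\pm\rho^\alpha_j$) are symmetric and $h(\rho)=h(-\rho)$, the two halves recombine into a full $-h(\rho_j)$ (resp.\ $+h(\rho^\alpha_j)$), so that altogether
\[
\tfrac{1}{2\pi}\oint_{\partial R(T)}h(\rho)\,\tfrac{S'_\alpha}{S_\alpha}(\tfrac12+\i\rho)\,d\rho
=\sum_{\rho^\alpha_j\in B(T)}h(\rho^\alpha_j)-\sum_{\rho_j\in B(T)}h(\rho_j)-\tfrac12\,\delta_\Gamma\,h(0),
\]
the last term being the half-residue of the universal simple pole of $S_\alpha$ at $s=\tfrac12$ in the case $\delta_\Gamma=1$; when $\delta_\Gamma=0$ the pole is of order $2$ and attributable to the eigenvalue $\tfrac14$, whose half-residue $-h(0)$ is already the correct net contribution of that eigenvalue to the two sums. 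Splitting $\oint_{\partial R(T)}$ into the three edges of \eqref{prop16} and the top edge $\int_T^{-T}$, and using $\theta_\alpha(s)=S_\alpha(1-s)/S_\alpha(s)$ from section~4 (so $\tfrac{\theta'_\alpha}{\theta_\alpha}(\tfrac12+\i\rho)=-\tfrac{S'_\alpha}{S_\alpha}(\tfrac12-\i\rho)-\tfrac{S'_\alpha}{S_\alpha}(\tfrac12+\i\rho)$) together with $h(\rho)=h(-\rho)$ and the substitution $\rho\mapsto-\rho$ in the symmetric integral, one converts the top edge into $\tfrac{1}{4\pi}\int_{-T}^{T}h\,\tfrac{\theta'_\alpha}{\theta_\alpha}(\tfrac12+\i\rho)\,d\rho$, which is the last integral of \eqref{prop16}; rearranging the displayed identity then gives the statement.

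The work all sits in the real-axis bookkeeping. The hard part will be pinning down the exact order of the pole of $S_\alpha$ at $s=\tfrac12$ in the two cases (coming from the order-$2$ pole of $G^\Gamma_s$ there when $\tfrac14$ is an eigenvalue), checking that the principal-value half-residues at $\pm\rho_j$ really do recombine into full contributions (which uses $S_\alpha(\bar s)=\overline{S_\alpha(s)}$ for real $\alpha$ and the evenness of $h$), and — the genuinely delicate point — reconciling that a degenerate eigenvalue of $-\Delta$ of multiplicity $m$ produces only a \emph{simple} pole of $S_\alpha$ yet is to be counted with multiplicity $m$ on the unperturbed and $m-1$ on the perturbed side; these match precisely because $-\Delta_\varphi$ is a rank-one perturbation. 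That $\sigma$ (via \eqref{condition}) and $T$ can be chosen so that the relevant edges avoid all zeros and poles — using Proposition~\ref{seqbound} — is routine; the estimates needed to pass to $T\to\infty$ enter only afterwards.
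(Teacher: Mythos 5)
Your argument is correct, but it follows a genuinely different route from the paper. You integrate $h(\rho)\,\frac{d}{d\rho}\log S_\alpha(\tfrac12+\i\rho)$ once around the lower half-box $[0,-\i\sigma]\times[-T,T]$, taking the real axis as a principal value, so that the perturbed and unperturbed resonances (which sit in $\Im\rho>0$, i.e.\ $\Re s<\tfrac12$) never appear; the half-residues at the boundary singularities $\pm\rho_j$, $\pm\rho^\alpha_j$, $0$ recombine by evenness of $h$ and the reflection symmetry $S_\alpha(\tfrac12-\i\rho)=\overline{S_\alpha(\tfrac12+\i\rho)}$ for real $\rho$, and the top edge becomes the $\theta_\alpha$-integral via $\frac{\theta'_\alpha}{\theta_\alpha}(\tfrac12+\i\rho)=-\frac{S'_\alpha}{S_\alpha}(\tfrac12-\i\rho)-\frac{S'_\alpha}{S_\alpha}(\tfrac12+\i\rho)$, which is legitimate because the real-axis singularities cancel in the ratio $\theta_\alpha$. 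The paper instead keeps every singularity strictly off the contour: it integrates the symmetrized function $\Psi(s)=S_\alpha(s)S_\alpha(1-s)$ around the full box $[\i\sigma,-\i\sigma]\times[-T,T]$, which forces all resonances into the residue count, then removes them with Lemma~\ref{formula2} (itself a separate contour argument for $\theta_\alpha$ over the lower half-box), and finally recombines everything through $\Psi=S_\alpha^2\,\theta_\alpha$ and oddness of the side-edge integrands. What your version buys is economy — no $\Psi$, no resonance bookkeeping, no appeal to Lemma~\ref{formula2} — at the price of the principal-value analysis on the real axis, which must be carried out carefully: all poles of the logarithmic derivative are simple (even where $S_\alpha$ has the order-two pole at $\rho=0$, the log-derivative has residue $-2$, so the half-residue $-h(0)$ is exactly the once-counted contribution of the eigenvalue $\tfrac14$, matching the paper's $\delta_\Gamma$ dichotomy). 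Your remaining worries are handled exactly as in the paper: the orders of the pole at $s=\tfrac12$ and of the zeros/poles off the real axis are the content of Theorem~\ref{prop}, and inherited (degenerate) eigenvalues of $-\Delta$ are dealt with by the paper's convention that they cancel between the two spectral sums, so no additional rank-one argument is required inside this proposition.
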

\begin{proof}
We introduce the symmetric function
\begin{equation}
\Psi(s)=S_\alpha(s)S_\alpha(1-s).
\end{equation}
and claim that
\begin{equation}
\begin{split}
&\frac{1}{\pi\i}\left[\int_{-\i\sigma-T}^{-\i\sigma+T}+\int_{-\i\sigma+T}^{\i\sigma+T}\right]h(\rho)\frac{d}{d\rho}\log\Psi(\tfrac{1}{2}+\i\rho)d\rho\\
=\,&-2\delta_{\Gamma}h(0)+4\sum_{\rho^{\alpha}_{j}\in (-T,T)}h(\rho^{\alpha}_{j})-4\sum_{\rho_{j}\in B(T)}h(\rho_{j})-2\sum_{r_{j}\in B(T)}h(r_{j})\\
&+2\sum_{\rho^{\alpha}_{j}\in (0,-\i\sigma)}h(\rho^{\alpha}_{j})+2\sum_{r^{\alpha}_{j}\in B(T)}h(r^{\alpha}_{j}).
\end{split}
\end{equation}
The above identity is easily proven by contour integration along the boundary of the box $[\i\sigma,-\i\sigma]\times[-T,T]$. We simply have to count all zeros and poles of $\Psi(s)$ that lie inside the box. From Theorem \ref{prop} we know that $S_\alpha(\tfrac{1}{2}+\i\rho)$ has simple poles at $\lbrace\rho_{j}\rbrace_{\rho_{j}\in B(T)}$ and $\lbrace-\rho_{j}\rbrace_{\rho_{j}\in B(T)}$ corresponding to eigenvalues $\lbrace\lambda_{j}\rbrace_{\rho_{j}\in B(T)}$, $\lambda_{j}=\tfrac{1}{4}+\rho_{j}^{2}$. They have the same residues, ie. $-h(\rho_{j})=-h(-\rho_{j})$ and $-h(\rho_{-j})=-h(-\rho_{-j})$, because of evenness of $h$. So we collect residues $-2\sum_{\rho_{j}\in B(T)}h(\rho_{j})$ altogether. We also collect residues $-\sum_{r_{j}\in B(T)}h(r_{j})$ at the unperturbed resonances $\lbrace r_{j}\rbrace_{r_{j}\in B(T)}$ that lie inside the box.

Also from Theorem \ref{prop} we know that $S_\alpha(s)$ has zeros of order one at $\lbrace\rho^{\alpha}_{j}\rbrace_{\rho^{\alpha}_{j}\in (-T,T)}$ and $\lbrace-\rho^{\alpha}_{j}\rbrace_{\rho^{\alpha}_{j}\in (-T,T)}$ corresponding to the perturbed eigenvalues $\lbrace\lambda^{\alpha}_{j}\rbrace_{\rho^{\alpha}_{j}\in (-T,T)}$, where $\lambda^{\alpha}_{j}=\tfrac{1}{4}+{\rho^{\alpha}_{j}}^{2}\geq\tfrac{1}{4}$, since the condition for the existence of a zero on the critical line implies $E(z_{0},\tfrac{1}{2}+\i\rho^{\alpha}_{j})=0$ and therefore by \eqref{FE}
\begin{equation}
0=S_\alpha(\tfrac{1}{2}+\i\rho^{\alpha}_{j})=S_\alpha(\tfrac{1}{2}-\i\rho^{\alpha}_{j}),
\end{equation}
so the residues are the same and altogether we collect $2\sum_{\rho^{\alpha}_{j}\in(-T,T)}h(\rho^{\alpha}_{j})$. 

It is crucial to note that in the case of the small perturbed eigenvalues it is not a requirement that $E(z_{0},s)$ vanishes at the corresponding point. Therefore we may have only one zero of order one corresponding to a small eigenvalue $\tfrac{1}{4}+{\rho^{\alpha}_{-j}}^{2}$ at $\rho^{\alpha}_{-j}\in(0,-\i\sigma)$. Altogether we collect residues $\sum_{\rho^{\alpha}_{j}\in(0,-\i\sigma)}h(\rho^{\alpha}_{j})$. For the perturbed resonances $\lbrace r^{\alpha}_{j}\rbrace_{r^{\alpha}_{j}\in B(T)}$, which correspond to zeros of $S_\alpha(\tfrac{1}{2}+\i\rho)$ in the halfplane $\Im\rho>0$, we collect residues $\sum_{r^{\alpha}_{j}\in B(T)}h(r^{\alpha}_{j})$. Finally, if $\lambda=\tfrac{1}{4}$ is an eigenvalue we collect a residue $-2h(0)$ at $s=\tfrac{1}{2}$ since the pole is of order 2. If $\lambda=\tfrac{1}{4}$ is not an eigenvalue, then we collect a residue $-h(0)$ and we shall write this separately in the final summation over residues, since it has no interpretation with respect to an eigenvalue. 

By evenness of $h$ and $\Psi$ we can write
\begin{equation}
\begin{split}
&\frac{1}{2\pi\i}\int_{\partial B}h(\rho)\frac{d}{d\rho}\log\Psi(\tfrac{1}{2}+\i\rho)d\rho\\
=\,&\frac{1}{\pi\i}\left[\int_{-\i\sigma-T}^{-\i\sigma+T}+\int_{-\i\sigma+T}^{\i\sigma+T}\right]h(\rho)\frac{d}{d\rho}\log\Psi(\tfrac{1}{2}+\i\rho)d\rho
\end{split}
\end{equation}
By Cauchy's residue theorem the RHS equals the sum over the residues collected. By evenness of $h$ and $\Psi(\tfrac{1}{2}+\i\rho)=S_\alpha(\tfrac{1}{2}+\i\rho) S_\alpha(\tfrac{1}{2}-\i\rho)$ the sum over the residues equals exactly twice the sums over residues listed above. This proves the above claim. We now apply the previous Lemma in order to express the resonances which appear above in terms of an integral and a finite sum over the small perturbed eigenvalues. We have
\begin{equation}
\begin{split}
&-2\sum_{r_{j}\in B(T)}h(r_{j})+2\sum_{r^{\alpha}_{j}\in B(T)}h(r^{\alpha}_{j})\\
=\;&\frac{1}{\pi\i}\left[\int_{-\i\sigma-T}^{-\i\sigma+T}-\int_{-T}^{T}+\int_{-T}^{-\i\sigma-T}+\int_{-\i\sigma+T}^{T}\right]h(\rho)\frac{d}{d\rho}\log\theta_\alpha(\tfrac{1}{2}+\i\rho)d\rho\\
&+2\sum_{\rho^{\alpha}_{j}\in (0,-\i\sigma)}h(\rho^{\alpha}_{j}).
\end{split}
\end{equation}
After substituting and dividing through by $4$ we obtain
\begin{equation}\label{expr01}
\begin{split}
&\frac{1}{4\pi\i}\left[\int_{-\i\sigma-T}^{-\i\sigma+T}+\int_{-\i\sigma+T}^{\i\sigma+T}\right]h(\rho)\frac{d}{d\rho}
\log\Psi(\tfrac{1}{2}+\i\rho)d\rho\\
=&-\tfrac{1}{2}\delta_{\Gamma}h(0)+\sum_{\rho^{\alpha}_{j}\in B(T)}h(\rho^{\alpha}_{j})
-\sum_{\rho_{j}\in B(T)}h(\rho_{j})\\
&+\frac{1}{4\pi\i}\left[\int_{-\i\sigma-T}^{-\i\sigma+T}-\int_{-T}^{T}\right]h(\rho)
\frac{d}{d\rho}\log\theta_\alpha(\tfrac{1}{2}+\i\rho)d\rho\\
&+\frac{1}{4\pi\i}\left[\int_{-T}^{-\i\sigma-T}+\int_{-\i\sigma+T}^{T}\right]h(\rho)
\frac{d}{d\rho}\log\theta_\alpha(\tfrac{1}{2}+\i\rho)d\rho
\end{split}
\end{equation}
We can rewrite
\begin{equation}
\int_{T}^{\i\sigma+T}h(\rho)\frac{d}{d\rho}\log\Psi(\tfrac{1}{2}+\i\rho)d\rho
=\int_{-T}^{-\i\sigma-T}h(\rho)\frac{d}{d\rho}\log\Psi(\tfrac{1}{2}+\i\rho)d\rho
\end{equation}
since the integrand is odd. Therefore, noting the relation 
\begin{equation}
\Psi(s)=S_\alpha(s)^{2}\theta_\alpha(s),
\end{equation}
we have
\begin{equation}
\begin{split}
&\frac{1}{4\pi\i}\int_{-\i\sigma+T}^{\i\sigma+T}h(\rho)\frac{d}{d\rho}
\log\Psi(\tfrac{1}{2}+\i\rho)d\rho\\
&-\frac{1}{4\pi\i}\left[\int_{-T}^{-\i\sigma-T}+\int_{-\i\sigma+T}^{T}\right]h(\rho)
\frac{d}{d\rho}\log\theta_\alpha(\tfrac{1}{2}+\i\rho)d\rho\\
=&\;\frac{1}{2\pi\i}\left[\int_{-T}^{-T-\i\sigma}+\int_{T-\i\sigma}^{T}\right]h(\rho)\frac{d}{d\rho}\log S_\alpha(\tfrac{1}{2}+\i\rho)d\rho.
\end{split}
\end{equation}
Similarly we have
\begin{equation}
\begin{split}
&\frac{1}{4\pi\i}\int_{-\i\sigma-T}^{-\i\sigma+T}h(\rho)\frac{d}{d\rho}
\log\Psi(\tfrac{1}{2}+\i\rho)d\rho\\
&-\frac{1}{4\pi\i}\int_{-\i\sigma-T}^{-\i\sigma+T}h(\rho)
\frac{d}{d\rho}\log\theta_\alpha(\tfrac{1}{2}+\i\rho)d\rho\\
=\;&\frac{1}{2\pi\i}\int_{-\i\sigma-T}^{-\i\sigma+T}h(\rho)\frac{d}{d\rho}\log S_\alpha(\tfrac{1}{2}+\i\rho)d\rho.
\end{split}
\end{equation}
So finally we can rewrite equation \eqref{expr01} as
\begin{equation}
\begin{split}
&\frac{1}{2\pi\i}\left[\int_{-\i\sigma-T}^{-\i\sigma+T}+\int_{-T}^{-T-\i\sigma}+\int_{T-\i\sigma}^{T}\right]h(\rho)\frac{d}{d\rho}\log S_\alpha(\tfrac{1}{2}+\i\rho)d\rho\\
=\;&-\tfrac{1}{2}\delta_{\Gamma}h(0)+\sum_{\rho^{\alpha}_{j}\in B(T)}h\left(\rho^{\alpha}_{j}\right)-\sum_{\rho_{j}\in B(T)}h\left(\rho_{j}\right)\\
&-\frac{1}{4\pi\i}\int_{-T}^{T}h(\rho)\frac{d}{d\rho}\log\theta_\alpha(\tfrac{1}{2}+\i\rho)d\rho,
\end{split}
\end{equation}
which is the desired result.
\end{proof}

For the proof of the trace formula we need to be able to control the term containing the scattering coefficients. We state this result in the following proposition.
\begin{prop}\label{lim}
For any $h\in H_{\sigma,\,\delta}$
\begin{equation}
\frac{1}{4\pi}\int_{-\infty}^{+\infty}h(\rho)\left\{\frac{\varphi_\alpha'}{\varphi_\alpha}(\tfrac{1}{2}+\i\rho)-\frac{\varphi'}{\varphi}(\tfrac{1}{2}+\i\rho)\right\}d\rho
\end{equation}
converges absolutely.
\end{prop}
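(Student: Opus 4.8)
Put $\theta_\alpha(s):=\varphi_\alpha(s)/\varphi(s)$, so that by \eqref{scatt} one has $\theta_\alpha(s)=S_\alpha(1-s)/S_\alpha(s)$ and, on the line $\Re s=\tfrac12$, the integrand is $h(\rho)\,\tfrac{\theta_\alpha'}{\theta_\alpha}(\tfrac12+\i\rho)$ with
\begin{equation*}
\frac{\theta_\alpha'}{\theta_\alpha}(\tfrac12+\i\rho)=-\frac{S_\alpha'}{S_\alpha}(\tfrac12+\i\rho)-\frac{S_\alpha'}{S_\alpha}(\tfrac12-\i\rho)=-2\,\Re\frac{S_\alpha'}{S_\alpha}(\tfrac12+\i\rho),
\end{equation*}
where we used $S_\alpha(\bar s)=\overline{S_\alpha(s)}$. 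Two elementary facts drive the argument. First, $\theta_\alpha(s)\theta_\alpha(1-s)=1$, so $|\theta_\alpha(\tfrac12+\i\rho)|=1$ for $\rho\in\RR$, and differentiating this relation shows $\tfrac{\theta_\alpha'}{\theta_\alpha}(\tfrac12+\i\rho)$ is even; together with property (i) of $H_{\sigma,\delta}$ this makes the whole integrand even, so it suffices to integrate over $[0,\infty)$. Second, $\theta_\alpha(\tfrac12+\i\rho)$ is holomorphic and non-vanishing on $\RR$: by Theorem~\ref{prop} the poles of $S_\alpha$ on the critical line are the simple poles at $\tfrac12\pm\i\rho_j$ (eigenvalues $\geq\tfrac14$) and the double pole at $\tfrac12$ (when $\tfrac14$ is an eigenvalue), and the zeros on the line are the new eigenvalues; all of these occur symmetrically about $s\mapsto 1-s$ and cancel in $S_\alpha(1-s)/S_\alpha(s)$. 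Hence the integrand $h(\rho)\,\tfrac{\theta_\alpha'}{\theta_\alpha}(\tfrac12+\i\rho)$ is real, continuous on $\RR$, and the only issue is the tail.

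The key estimate is a bound on the total variation
\begin{equation*}
V(T):=\int_0^T\Bigl|\frac{\theta_\alpha'}{\theta_\alpha}(\tfrac12+\i t)\Bigr|\,dt=\operatorname{Var}_{[0,T]}\bigl(\arg\theta_\alpha(\tfrac12+\i\,\cdot)\bigr),
\end{equation*}
namely $V(T)\ll T^{2}$ (a factor of a power of $\log T$ could be inserted without harm). The plan here is to invoke a Cartan/Blaschke-type argument: because $|\theta_\alpha|\equiv 1$ on the critical line, every pole of $\theta_\alpha$ near the line at $\tfrac12-\varepsilon+\i\rho_0$ is matched by a zero at $\tfrac12+\varepsilon+\i\rho_0$, near which $\theta_\alpha(\tfrac12+\i\rho)$ equals, up to a unimodular factor, the Blaschke-type factor $\bigl(\i(\rho-\rho_0)-\varepsilon\bigr)/\bigl(\i(\rho-\rho_0)+\varepsilon\bigr)$, whose argument varies by $2\pi$ regardless of how small $\varepsilon>0$ is; poles and zeros on the line contribute $O(1)$ to the variation as well. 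Thus $V(T)$ is $\ll$ the number of poles and zeros of $\theta_\alpha$ — equivalently of $S_\alpha(s)$ and of $S_\alpha(1-s)$ — in the box $|\Re s-\tfrac12|\leq\tfrac12$, $|\Im s|\leq T$, plus the variation of $\arg\theta_\alpha$ along the two short horizontal sides, which the bounds of Section~5 (Lemma~\ref{scatbound} for $\varphi$ and Proposition~\ref{seqbound} for $S_\alpha$) make $\ll T\log T$. The count of poles and zeros is $\ll T^{2}$: eigenvalue poles are counted by Weyl's law for the discrete spectra of $-\Delta$ and $-\Delta_\varphi$; unperturbed resonances by $\#\{\gamma_j\leq R\}\ll R^{2}$ from \eqref{resbound}; the new eigenvalues by the number of zeros of $E(z_0,\tfrac12+\i t)$ on $[0,T]$, which is $\ll T\log T$ by Jensen's formula and the Eisenstein bounds \eqref{bound1}--\eqref{bound2}; and the perturbed resonances by Jensen's formula applied with the growth bound for $S_\alpha$ of Proposition~\ref{seqbound}. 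This gives $V(T)\ll T^{2}$.

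It then remains to put the pieces together: using $|h(\rho)|\ll(1+|\rho|)^{-2-\delta}$ on $|\Im\rho|\leq\sigma$ and splitting $[0,\infty)$ dyadically,
\begin{align*}
\int_{-\infty}^{\infty}|h(\rho)|\,\Bigl|\frac{\theta_\alpha'}{\theta_\alpha}(\tfrac12+\i\rho)\Bigr|\,d\rho
&\ll 1+\sum_{k\geq0}\Bigl(\sup_{2^{k}\leq t\leq2^{k+1}}|h(t)|\Bigr)\,V(2^{k+1})\\
&\ll 1+\sum_{k\geq0}2^{-k(2+\delta)}\,2^{2k}=1+\sum_{k\geq0}2^{-k\delta}<\infty,
\end{align*}
the leading ``$1$'' absorbing the compact piece near $\rho=0$, where the integrand is continuous. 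This proves the asserted absolute convergence.

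The substantive obstacle is the bound $V(T)\ll T^{2}$, and within it the point requiring care is the uniform $O(1)$ contribution of poles and zeros of $S_\alpha$ (perturbed and unperturbed resonances) that approach the critical line arbitrarily closely: this is exactly where one must use that $\theta_\alpha$ has modulus one on the line (via $\theta_\alpha(s)\theta_\alpha(1-s)=1$), so that such a zero is always balanced by a symmetric pole and the local argument variation stays $\leq 2\pi$ — without this balancing, a zero at distance $\varepsilon$ would contribute $\asymp|\log\varepsilon|$ — together with the quantitative counting and growth inputs of Section~5.
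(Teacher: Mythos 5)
Your setup is fine (unimodularity of $\theta_\alpha$ on the critical line, cancellation of its poles and zeros there, evenness, the dyadic reduction), but the proof has a genuine gap exactly where you flag the ``substantive obstacle'': the total-variation bound $V(T)\ll T^{2}$ is asserted, not proved, and the ingredients you invoke cannot be assembled from the paper's results. The ``Cartan/Blaschke-type argument'' presupposes a factorization of $\theta_\alpha$ (equivalently of $S_\alpha$) into Blaschke-type factors over its zero--pole pairs times an outer factor with controlled phase variation; no such factorization is established for $S_\alpha$ (Hejhal's product \eqref{prod} exists for $\varphi$, not for $S_\alpha$), and without it the claim that each pair contributes at most $2\pi$ to $\operatorname{Var}\arg\theta_\alpha$ does not follow. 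Likewise, the counting inputs are missing: a Jensen-formula count of perturbed resonances, and a Backlund-type conversion of modulus bounds into argument-variation bounds on the horizontal sides, both require upper (and pointwise lower) bounds for $S_\alpha$ on full disks or boxes of size $T$, whereas Proposition \ref{seqbound} only bounds $S_\alpha$ on a sparse sequence of segments of length $\sigma$ crossing the strip near the critical line (and with growth $e^{CT_N^2\ln T_N}$, not the $T\log T$ you quote for the sides); the zero count for $E(z_0,\tfrac12+\i t)$ on $[0,T]$ is also not in the paper. So the central estimate remains unsupported, and the argument as written does not close.

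For comparison, the paper avoids all counting: it integrates by parts, so that only $\log\theta_\alpha(\tfrac12+\i\rho)=\log\bigl(S_\alpha(\tfrac12-\i\rho)/S_\alpha(\tfrac12+\i\rho)\bigr)$ needs to be controlled, and it bounds this \emph{pointwise} by $2\pi$: from the spectral expansion, $\Im S_\alpha(\tfrac12+\i\rho)\geq 0$ for $\Re\rho>0$, $-\sigma\leq\Im\rho\leq 0$ (a Herglotz-type positivity), whence $\arg S_\alpha(\tfrac12+\i\rho)\in[0,\pi]$ there and $\bigl|\log\theta_\alpha\bigr|=2\bigl|\arg S_\alpha\bigr|\leq 2\pi$; then $\int |h'(\rho)|\,d\rho<\infty$ finishes the proof, the boundary term vanishing by the decay of $h$. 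Note also that what is used later is precisely this integrated-by-parts convergence, so the stronger statement you aim at (absolute integrability of $h\,\theta_\alpha'/\theta_\alpha$ itself) is not needed; your correct observation that $|\theta_\alpha|=1$ on the line is the same ingredient the paper uses, but it must be upgraded to the $2\pi$ bound on the argument via the positivity of $\Im S_\alpha$, not to a bound on the variation of the phase, which is a much harder (Weyl-law-type) statement.
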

\begin{proof}
We recall that the perturbed scattering coefficient is given by
\begin{equation}
\varphi_\alpha(s)=\varphi(s)\frac{S_\alpha(1-s)}{S_\alpha(s)}.
\end{equation}
Consequently,
\begin{equation}
\frac{\varphi_\alpha'}{\varphi_\alpha}(s)-\frac{\varphi'}{\varphi}(s)=\frac{d}{ds}\log\frac{S_\alpha(1-s)}{S_\alpha(s)},
\end{equation}
and therefore
\begin{equation}
\begin{split}
&\frac{1}{4\pi}\int_{-T}^{T}h(\rho)\left\{\frac{\varphi_\alpha'}{\varphi_\alpha}(\tfrac{1}{2}+\i\rho)-\frac{\varphi'}{\varphi}(\tfrac{1}{2}+\i\rho)\right\}d\rho\\
=\,&\frac{1}{4\pi\i}\int_{-T}^{T}h(\rho)\frac{d}{d\rho}\log\frac{S_\alpha(\tfrac{1}{2}-\i\rho)}{S_\alpha(\tfrac{1}{2}+\i\rho)}d\rho\\
=\,&\frac{1}{2\pi\i}h(T)\log\frac{S_\alpha(\tfrac{1}{2}-\i T)}{S_\alpha(\tfrac{1}{2}+\i T)}
-\frac{1}{4\pi\i}\int_{-T}^{T}h'(\rho)\log\frac{S_\alpha(\tfrac{1}{2}-\i\rho)}{S_\alpha(\tfrac{1}{2}+\i\rho)}d\rho.
\end{split}
\end{equation}
So in order to prove the existence of the limit as $T\to\infty$ we have to estimate
\begin{equation}
\log\frac{S_\alpha(\tfrac{1}{2}-\i\rho)}{S_\alpha(\frac{1}{2}+\i\rho)}
\end{equation}
for $\rho\in\RR$. We know that
\begin{equation}
S_\alpha(\tfrac{1}{2}+\i\rho)=r(\rho)\e^{i\theta(\rho)},
\end{equation}
where $\theta(\rho)=\arg S_\alpha(\tfrac{1}{2}+\i\rho)$ and $r(\rho)=\left|S_\alpha(\tfrac{1}{2}+\i\rho)\right|$. We recall the functional equation
\begin{equation}
S_\alpha(s)=S_\alpha(1-s)-\frac{E(z_{0},s)E(z_{0},1-s)}{1-2s},
\end{equation}
and see as a consequence
\begin{equation}
r(\rho)\sin\theta(\rho)=\Im S_\alpha(\tfrac{1}{2}+\i\rho)=\frac{1}{4\rho}\left|E(z_{0},\tfrac{1}{2}+\i\rho)\right|^{2}.
\end{equation}
This together with meromorphicity of $S_\alpha(s)$ implies $\left|\theta(\rho)\right|\leq\pi$. In fact this bound holds in the strip $-\sigma\leq\Im\rho\leq0$. To see this observe that for $\Re\rho>0$ we have
\begin{equation}\label{Im}
\begin{split}
&\Im S_\alpha(\tfrac{1}{2}+\i\rho)\\
=\;&-\frac{\Re\rho\;\Im\rho}{2\pi}\int_{0}^{+\infty}\frac{|E(z_{0},\tfrac{1}{2}+\i r)|^{2}dr}{(\Re\rho)^{2}-r^{2}-(\Im\rho)^{2})^{2}+4(\Re\rho)^{2}(\Im\rho)^{2}}\\
&-\Re\rho\;\Im\rho\sum_{j=-M}^{\infty}\frac{|\varphi_{j}(z_{0})|^{2}} {((\Re\rho)^{2}-\rho_{j}^{2}-(\Im\rho)^{2})^{2}+4(\Re\rho)^{2}(\Im\rho)^{2}}\\
\geq\;&0
\end{split}
\end{equation}
From the functional equation we see
\begin{equation}
\lim_{\Im\rho\to0}\Im S_\alpha(\tfrac{1}{2}+\i\rho)=\frac{|E(z_{0},\tfrac{1}{2}+\i\Re\rho)|^{2}}{2\Re\rho}\geq0.
\end{equation}
From this observation and since $S_\alpha(\tfrac{1}{2}+\i\rho)$ is meromorphic, it follows that $S_\alpha(\tfrac{1}{2}+\i\cdot)$ maps any smooth curve in the strip $-\sigma\leq\Im\rho\leq0$ which contains no poles of $S_\alpha(\tfrac{1}{2}+\i\rho)$ to a smooth curve in the upper halfplane. In particular this implies that the winding number of any such curve is 0. This implies $0\leq\arg S_\alpha(\tfrac{1}{2}+\i\rho)\leq\pi$ for $\Re\rho>0$ and $-\sigma\leq\Im\rho\leq0$ away from poles. In the same way one sees that $-\pi\leq\arg S_\alpha(\tfrac{1}{2}+\i\rho)\leq0$ for $\Re\rho<0$ and $-\sigma\leq\Im\rho\leq0$ away from poles.

So we have
\begin{equation}
\left|\log\frac{S_\alpha(\tfrac{1}{2}-\i\rho)}{S_\alpha(\tfrac{1}{2}+\i\rho)}\right|=\left|\log\frac{r(-\rho)\e^{-\i\theta(\rho)}}{r(\rho)\e^{\i\theta(\rho)}}\right|=2\left|\theta(\rho)\right|\leq 2\pi,
\end{equation}
where we have used
\begin{equation}
r(\rho)=|S_\alpha(\tfrac{1}{2}+\i\rho)|=|\overline{S_\alpha(\tfrac{1}{2}+\i\rho)}|
=|S_\alpha(\tfrac{1}{2}-\i\rho)|=r(-\rho).
\end{equation}
and the fact that $S_\alpha(\tfrac{1}{2}+\i\rho)S_\alpha(\tfrac{1}{2}-\i\rho)$ is analytic on the real line.

Now we can estimate the tail as follows
\begin{equation}
\begin{split}
&\frac{1}{4\pi}\left|\int_{T}^{\infty}h(\rho)\left\{\frac{\varphi_\alpha'}{\varphi_\alpha}(\tfrac{1}{2}+\i\rho)-\frac{\varphi'}{\varphi}(\tfrac{1}{2}+\i\rho)\right\}d\rho\right|\\
=\,&\frac{1}{4\pi}\left|\int_{T}^{\infty}h(\rho)\frac{d}{d\rho}\log\frac{S_\alpha(\tfrac{1}{2}-\i\rho)}{S_\alpha(\tfrac{1}{2}+\i\rho)}d\rho\right|\\
\leq\,&\frac{1}{4\pi}\left|h(T)\right|\left|\log\frac{S_\alpha(\tfrac{1}{2}-\i T)}{S_\alpha(\tfrac{1}{2}+\i T)}\right|\\
&+\frac{1}{4\pi}\left|\int_{T}^{\infty}h'(\rho)\log\frac{S_\alpha(\tfrac{1}{2}-\i\rho)}{S_\alpha(\tfrac{1}{2}+\i\rho)}d\rho\right|\\
\leq\,&\frac{1}{4\pi}\left|h(T)\right|\left|\log\frac{S_\alpha(\tfrac{1}{2}-\i T)}{S_\alpha(\tfrac{1}{2}+\i T)}\right|\\
&+\frac{1}{4\pi}\int_{T}^{\infty}|h'(\rho)|\left|\log\frac{S_\alpha(\tfrac{1}{2}-\i\rho)}{S_\alpha(\tfrac{1}{2}+\i\rho)}\right|d\rho.
\end{split}
\end{equation}
We only have to estimate one tail because the relation
\begin{equation}
\frac{\varphi'(s)}{\varphi(s)}=\frac{\varphi'(1-s)}{\varphi(1-s)}
\end{equation}
implies that the integrand is even on the critical line. As an immediate consequence of the decay of $h$ and the above bound on the logarithm we conclude
\begin{equation}
\lim_{T\to\infty}\frac{1}{4\pi}\left|h(T)\right|\left|\log\frac{S_\alpha(\tfrac{1}{2}-\i T)}{S_\alpha(\tfrac{1}{2}+\i T)}\right|=0.
\end{equation}
Since $h\in H_{\sigma,\delta}$ we have the estimate
\begin{equation}
|h'(\rho)|\ll(1+|\Re\rho|)^{-2-\delta}
\end{equation}
uniformly in the strip $|\Im\rho|\leq\sigma-\epsilon$ for any $0<\epsilon<\sigma$. So we obtain
\begin{equation}
\lim_{T\to\infty}\frac{1}{4\pi}\int_{T}^{\infty}|h'(\rho)|\left|\log\frac{S_\alpha(\tfrac{1}{2}-\i\rho)}{S_\alpha(\tfrac{1}{2}+\i\rho)}\right|d\rho=0.
\end{equation}
We conclude
\begin{equation}
\begin{split}
&\lim_{T\to\infty}\frac{1}{4\pi}\int_{-T}^{T}h(\rho)\left\{\frac{\varphi_\alpha'}{\varphi_\alpha}(\tfrac{1}{2}+\i\rho)-\frac{\varphi'}{\varphi}(\tfrac{1}{2}+\i\rho)\right\}d\rho\\
=&\frac{1}{4\pi}\int_{-\infty}^{\infty}h(\rho)\left\{\frac{\varphi_\alpha'}{\varphi_\alpha}(\tfrac{1}{2}+\i\rho)-\frac{\varphi'}{\varphi}(\tfrac{1}{2}+\i\rho)\right\}d\rho.
\end{split}
\end{equation}
\end{proof}

With the previous work we are able to justify the existence of the limit $T\to\infty$ of the truncated trace formula \eqref{prop16}.
\begin{cor}
Let $\lbrace T_{n}\rbrace$ be any sequence in $\RR_{+}$ in between eigenvalues $\lbrace\rho_{j}\rbrace_{j=0}^{+\infty}$, $\lbrace\rho_{j}^{\alpha}\rbrace_{j=0}^{+\infty}$ and real parts of resonances $\lbrace\Re r_{j}\rbrace_{j=0}^{+\infty}$ and accumulating at $+\infty$. Then the limit
\begin{equation}\label{bdterms}
\frac{1}{2\pi\i}\lim_{n\to\infty}\left\{\int_{-\i\sigma+T_{n}}^{T_{n}}+\int_{-T_{n}}^{-\i\sigma-T_{n}}\right\}h(\rho)\frac{d}{d\rho}\log S_\alpha(\tfrac{1}{2}+\i\rho)d\rho
\end{equation}
exists.
\end{cor}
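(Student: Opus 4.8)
The plan is to extract \eqref{bdterms} from the truncated trace formula \eqref{prop16} evaluated along the sequence $\{T_{n}\}$, and then to show that every other term in that identity tends to a limit as $n\to\infty$. Since the $T_{n}$ are chosen between the $\rho_{j}$, the $\rho^{\alpha}_{j}$ and the real parts of the resonances, the box boundary $\partial B(T_{n})$ contains no zero or pole of $S_\alpha$, so \eqref{prop16} applies with $T=T_{n}$. Using $\frac{1}{2\pi\i}\frac{d}{d\rho}\log S_\alpha(\tfrac12+\i\rho)=\frac{1}{2\pi}\frac{S_\alpha'}{S_\alpha}(\tfrac12+\i\rho)$ and $\frac{1}{4\pi\i}\frac{d}{d\rho}\log\theta_\alpha(\tfrac12+\i\rho)=\frac{1}{4\pi}\frac{\theta_\alpha'}{\theta_\alpha}(\tfrac12+\i\rho)$, I would rearrange \eqref{prop16} into
\begin{equation}
\begin{split}
&\frac{1}{2\pi\i}\left\{\int_{-\i\sigma+T_{n}}^{T_{n}}+\int_{-T_{n}}^{-\i\sigma-T_{n}}\right\}h(\rho)\frac{d}{d\rho}\log S_\alpha(\tfrac{1}{2}+\i\rho)\,d\rho\\
=\;&\left[\sum_{\rho^{\alpha}_{j}\in B(T_{n})}h(\rho^{\alpha}_{j})-\sum_{\rho_{j}\in B(T_{n})}h(\rho_{j})\right]-\tfrac{1}{2}\delta_{\Gamma}h(0)\\
&-\frac{1}{4\pi\i}\int_{-T_{n}}^{T_{n}}h(\rho)\frac{d}{d\rho}\log\theta_\alpha(\tfrac{1}{2}+\i\rho)\,d\rho-\frac{1}{2\pi\i}\int_{-\i\sigma-T_{n}}^{-\i\sigma+T_{n}}h(\rho)\frac{d}{d\rho}\log S_\alpha(\tfrac{1}{2}+\i\rho)\,d\rho,
\end{split}
\end{equation}
so it suffices to prove convergence of the four summands on the right.

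For the bracketed difference I would invoke Weyl's law $\#\{j:\rho_{j}\le R\}\ll R^{2}$ for the Laplacian together with the same polynomial bound for the perturbed operator (the new eigenvalues interlace the old ones, so the perturbed counting function is dominated by twice the unperturbed one up to $O(1)$), combined with the decay $|h(\rho)|\ll(1+|\Re\rho|)^{-2-\delta}$ from property (iii) of $H_{\sigma,\delta}$; these make $\sum_{j\ge-M}h(\rho_{j})$ and $\sum_{j\ge-M}h(\rho^{\alpha}_{j})$ absolutely convergent. Because $\sigma$ satisfies \eqref{condition}, the strip $|\Im\rho|\le\sigma$ already contains all the finitely many small perturbed eigenvalue parameters, so $B(T_{n})$ eventually captures the whole of both spectra and the bracket converges to $\sum_{j\ge-M}h(\rho^{\alpha}_{j})-\sum_{j\ge-M}h(\rho_{j})$. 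The term $\tfrac12\delta_{\Gamma}h(0)$ is a constant. For the $\theta_\alpha$-integral, the identity $\theta_\alpha=\varphi_\alpha/\varphi$ gives $\frac{d}{d\rho}\log\theta_\alpha(\tfrac12+\i\rho)=\i\bigl(\frac{\varphi_\alpha'}{\varphi_\alpha}-\frac{\varphi'}{\varphi}\bigr)(\tfrac12+\i\rho)$, so that summand equals $-\frac{1}{4\pi}\int_{-T_{n}}^{T_{n}}h(\rho)\bigl\{\frac{\varphi_\alpha'}{\varphi_\alpha}(\tfrac12+\i\rho)-\frac{\varphi'}{\varphi}(\tfrac12+\i\rho)\bigr\}\,d\rho$, which converges by Proposition \ref{lim}.

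The remaining piece is the horizontal integral along $\Im\rho=-\sigma$, and this is where the main work lies. Writing $\rho=u-\i\sigma$ gives $s=\tfrac12+\i\rho=\tfrac12+\sigma+\i u$ with $\Re s=\tfrac12+\sigma>1$, so $S_\alpha(s)=\tfrac{1}{\beta}+m_\Gamma\psi(s)+\sum_{\gamma\in\Gamma\setminus\scrI}G_{s}(z_{0},\gamma z_{0})$ converges absolutely, the geodesic sum and its $s$-derivative being bounded uniformly along the line and $\psi(s)=\tfrac{1}{2\pi}\log s+O(|s|^{-1})$, $\psi'(s)=O(|s|^{-1})$. Hence $S_\alpha'(\tfrac12+\i\rho)$ is bounded along the line while $|S_\alpha(\tfrac12+\i\rho)|\to\infty$ as $|u|\to\infty$; since \eqref{condition}, together with Theorem \ref{prop}(i) (which allows at most one zero of $S_\alpha$ with $\Re s>1$ and places it on the real axis), keeps the line $\Re s=\tfrac12+\sigma$ zero-free, $S_\alpha$ is bounded below there and $\bigl|\frac{S_\alpha'}{S_\alpha}(\tfrac12+\i\rho)\bigr|\ll1$ uniformly on the line. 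With $|h(\rho)|\ll(1+|u|)^{-2-\delta}$ along the line this makes the horizontal integral converge absolutely as $n\to\infty$. Assembling the four limits yields the existence (and an explicit expression) of \eqref{bdterms}. The chief obstacle is thus the control of this horizontal integral --- i.e. the zero-freeness of $\Re s=\tfrac12+\sigma$ and the boundedness of $S_\alpha'/S_\alpha$ there --- which is precisely what the size hypothesis \eqref{condition}, together with the absolutely convergent expansion of $S_\alpha$ in $\Re s>1$, is designed to deliver; by contrast, the exponential-type bound of Proposition \ref{seqbound} on the vertical segments is not needed for the mere existence of the limit.
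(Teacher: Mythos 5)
Your overall strategy coincides with the paper's: both proofs rearrange the truncated identity \eqref{prop16} along the admissible sequence, dispose of the eigenvalue sums by the standard counting bound plus the decay of $h$, and handle the $\theta_\alpha$-term by Proposition \ref{lim}. Where you genuinely diverge is in the horizontal integral along $\Im\rho=-\sigma$. The paper does \emph{not} attempt to bound $S_\alpha'/S_\alpha$ pointwise; it integrates by parts, kills the boundary terms with the decay of $h$, and then uses only two facts: $S_\alpha(\tfrac12+\i\rho)=m_\Gamma\psi(\tfrac12+\i\rho)+O(1)$ on the line, giving $\log S_\alpha=O(\log\log|\rho|)$, and the boundedness of $\arg S_\alpha$ in the strip (from the positivity of $\Im S_\alpha$ established in the proof of Proposition \ref{lim}), so that $\int h'(\rho)\log S_\alpha\,d\rho$ converges absolutely. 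You instead bound the logarithmic derivative directly, which requires two additional inputs you assert but do not prove: (a) a uniform bound on the $s$-derivative of the Green's-function remainder $\sum_{\gamma\in\Gamma\setminus\scrI}G_s(z_0,\gamma z_0)$ along $\Re s=\tfrac12+\sigma$ (true, e.g.\ by differentiating the integral representation of the free Green's function or by a Cauchy estimate in a slightly narrower substrip, since $\tfrac12+\sigma>1$), and (b) zero-freeness of the whole line together with a uniform lower bound on $|S_\alpha|$; note that Theorem \ref{prop}(i) alone does not exclude the single real zero from sitting exactly at $s=\tfrac12+\sigma$ --- what actually rules out any zero on the line is the Lemma following \eqref{pretrace}, since \eqref{condition} forces $|G^{\Gamma\setminus\scrI}_{1/2+\i\rho}(z_0,z_0)|<|1+m_\Gamma\beta\psi(\tfrac12+\i\rho)|/|\beta|$ there. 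With those two points filled in, your argument is sound and even gives absolute convergence of the horizontal integral; the paper's integration-by-parts route buys the same conclusion from weaker information (modulus growth and bounded argument only), which is why it avoids differentiating the geodesic sum altogether.
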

\begin{proof}
We have the identity
\begin{equation}
\begin{split}
&\frac{1}{2\pi\i}\left\{\int_{-\i\sigma+T_{n}}^{T_{n}}+\int_{-T_{n}}^{-\i\sigma-T_{n}}\right\}h(\rho)\frac{d}{d\rho}\log S_\alpha(\tfrac{1}{2}+\i\rho)d\rho\\
=\;&\sum_{\rho^{\alpha}_{j}\in B(T_{n})}h(\rho^{\alpha}_{j})-\sum_{\rho_{j}\in B(T_{n})}h(\rho_{j})\\
&-\frac{1}{2\pi\i}\int_{-\i\sigma-T_{n}}^{-\i\sigma+T_{n}}h(\rho)\frac{d}{d\rho}\log S_\alpha(\tfrac{1}{2}+\i\rho)d\rho\\
&-\tfrac{1}{2}\delta_{\Gamma}h(0)-\frac{1}{4\pi}\int_{-T_{n}}^{T_{n}}h(\rho)\frac{\theta_\alpha'}{\theta_\alpha}(\tfrac{1}{2}+\i\rho)d\rho.
\end{split}
\end{equation}
The standard upper bound on the number of eigenvalues and the decay of $h$ imply that the sums over the eigenvalues converge absolutely. By Proposition \ref{lim} the limit
\begin{equation}
\lim_{T\to\infty}\frac{1}{4\pi}\int_{-T}^{T}h(\rho)\frac{\theta_\alpha'}{\theta_\alpha}(\tfrac{1}{2}+\i\rho)d\rho
=\frac{1}{4\pi}\int_{-\infty}^{\infty}h(\rho)\frac{\theta_\alpha'}{\theta_\alpha}(\tfrac{1}{2}+\i\rho)d\rho
\end{equation}
exists. Finally, for $\Im\rho=-\sigma$, we have as a consequence of Lemma 9
\begin{equation}
S_\alpha(\tfrac{1}{2}+\i\rho)=m\psi(\tfrac{1}{2}+\i\rho)+O(1)
\end{equation}
which implies, for $\Im\rho=-\sigma$ and $|\rho|$ large, using boundedness of $\arg S_\alpha(\tfrac{1}{2}+\i\rho)$ away from poles,
\begin{equation}\label{Sbound}
\log S_\alpha(\tfrac{1}{2}+\i\rho)=O(\log\log|\rho|).
\end{equation}
From an integration by parts we have
\begin{equation}
\begin{split}
&\int_{-\i\sigma-T}^{-\i\sigma+T}h(\rho)\frac{d}{d\rho}\log S_\alpha(\tfrac{1}{2}+\i\rho)d\rho\\
=\;&h(-\i\sigma+T)\log S_\alpha(\tfrac{1}{2}+\sigma+\i T)\\
&-h(-\i\sigma-T)\log S_\alpha(\tfrac{1}{2}+\sigma-\i T)\\
&-\int_{-\i\sigma-T}^{-\i\sigma+T}h'(\rho)\log S_\alpha(\tfrac{1}{2}+\i\rho)d\rho
\end{split}
\end{equation}
which implies
\begin{equation}\label{limbound}
\begin{split}
&\lim_{T\to\infty}\int_{-\i\sigma-T}^{-\i\sigma+T}h(\rho)\frac{d}{d\rho}\log S_\alpha(\tfrac{1}{2}+\i\rho)d\rho\\
=\,&\lim_{T\to\infty}\int_{-\i\sigma-T}^{-\i\sigma+T}h'(\rho)\log S_\alpha(\tfrac{1}{2}+\i\rho)d\rho
\end{split}
\end{equation}
since
\begin{equation}
\begin{split}
&\lim_{T\to+\infty}h(-\i\sigma+T)\log S_\alpha(\tfrac{1}{2}+\sigma+\i T)\\
=\,&\lim_{T\to+\infty}h(-\i\sigma-T)\log S_\alpha(\tfrac{1}{2}+\sigma-\i T)=0.
\end{split}
\end{equation}
Now the integral on the RHS of \eqref{limbound} clearly converges because of \eqref{Sbound} and $h\in H_{\sigma,\delta}$. Consequently the limit of the boundary term exists as we stretch the box to infinity.
\end{proof}

We can now apply Proposition \ref{seqbound} making use of the particular test function, which is constructed in Lemma 11 in \cite{U}, to derive a bound on a sequence of integrals which will eventually lead to the required bound on the corresponding sequence of boundary terms \eqref{bdterms} which we require in order to show that its limit vanishes.
\begin{prop}\label{prop24}
There exists $\lbrace T_{N(j)}\rbrace_{j}\subset\RR_{+}$, $\lim_{j\to\infty}T_{N(j)}=\infty$, such that for any $\epsilon>0$ we have
\begin{equation}
\int_{T_{N(j)}-\i\sigma}^{T_{N(j)}}\big|\log|S_\alpha(\tfrac{1}{2}+\i\rho)|\big||d\rho|\ll_{\epsilon}T_{N(j)}^{2+\epsilon}.
\end{equation}
\end{prop}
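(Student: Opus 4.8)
\emph{Proof plan.} The plan is to bound the integrand pointwise wherever possible, using that $\log|S_\alpha|$ equals $-\Im$ of a holomorphic map into a strip, so that the Schwarz--Pick lemma applies. First I would record the structural facts: for generic $\alpha$ there are only finitely many new eigenvalues (Proposition~\ref{finite}), so by Theorem~\ref{prop} the function $S_\alpha(s)$ is holomorphic and nonvanishing in the open half-strip $\Pi=\{\,\tfrac12<\Re s<\tfrac12+\sigma,\ \Im s>C_0\,\}$ for $C_0$ large (its eigenvalue poles sit on $\Re s=\tfrac12$ and its resonance zeros and poles in $\Re s<\tfrac12$, hence outside $\Pi$); moreover, as established in the proof of Proposition~\ref{lim}, $\arg S_\alpha(s)\in[0,\pi]$ throughout $\Pi$. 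Hence $\Phi(s):=-\i\log S_\alpha(s)$ is a well-defined holomorphic map $\Pi\to W$, where $W=\{\,0<\Re w<\pi\,\}$, with $\Re\Phi=\arg S_\alpha$ and $\Im\Phi=-\log|S_\alpha|$.

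Fix a large $T=T_N$ from the sequence of Proposition~\ref{seqbound} and parametrise the segment as $s=\tfrac12+t+\i T$, $t\in[0,\sigma]$; put $\eta=cT^{-1}$. For the central range $t\in[\eta,\sigma-\eta]$ the hyperbolic length in $\Pi$ of the horizontal sub-arc is $\ll\log(1/\eta)\ll\log T$, since it avoids both walls of the strip by $\eta$ and lies far above the lower edge $\Im s=C_0$. By the Schwarz--Pick lemma its $\Phi$-image has hyperbolic length $\ll\log T$ in $W$, hence Euclidean length $\ll\log T$ (the hyperbolic density of $W$ is bounded below); in particular $\Im\Phi$ varies by $\ll\log T$ along the sub-arc. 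At its right end $t=\sigma-\eta$ one has $|\Im\Phi|=\big|\log|S_\alpha(\tfrac12+\sigma-\eta+\i T)|\big|\ll\log\log T$, because $S_\alpha(s)=m_\Gamma\psi(s)+O(1)$ near the wall $\Re s=\tfrac12+\sigma$ (cf.\ the estimate behind \eqref{Sbound}). Therefore $\big|\log|S_\alpha(\tfrac12+t+\i T)|\big|\ll\log T$ uniformly for $t\in[\eta,\sigma-\eta]$, contributing $\ll\log T$ to the integral.

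The two short end ranges are handled by direct estimation. The range $t\in[\sigma-\eta,\sigma]$ contributes $\ll\eta\log\log T=o(1)$ by the same near-wall bound. For $t\in[0,\eta]$, abutting the critical line, I would factor $S_\alpha=h(s)\prod_k(s-p_k)\prod_k(s-z_k)^{-1}$ in a disc of bounded radius about $\tfrac12+\i T$, where $p_k$ (unperturbed eigenvalues and resonances) and $z_k$ (perturbed resonances, plus finitely many real points) run over the poles and zeros of $S_\alpha$ there and $h$ is holomorphic and nonvanishing. Since $\int_0^{\eta}\big|\log|t-a|\big|\,dt\ll\eta(1+|\log\eta|)$ uniformly in $a$, and there are $\ll T^{2}$ of the $p_k,z_k$ in a box of bounded height (Weyl's law and the resonance count \eqref{resbound}), the product factors contribute $\ll T^{2}\eta\log T\ll T\log T$; and $\log|h|$ is controlled by Harnack's inequality, using the upper bound $|S_\alpha|\ll e^{CT^{2}\ln T}$ in the disc — obtained from the estimates entering the proof of Proposition~\ref{seqbound} applied throughout the disc, via the representation~\eqref{rep} and Lemmas~\ref{scatbound},~\ref{Ebound} — together with $\big|\log|h|\big|\ll T^{2}$ at the interior reference point $\tfrac12+\tfrac\sigma2+\i T$, where $|\log|S_\alpha||\ll\log T$ by the previous paragraph; this piece then contributes $\ll\eta\,T^{2}\ln T\ll T\ln T$. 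Summing, $\int_{T_N-\i\sigma}^{T_N}\big|\log|S_\alpha(\tfrac12+\i\rho)|\big|\,|d\rho|\ll T_N\ln T_N\ll_\epsilon T_N^{2+\epsilon}$, comfortably within the claim. The step I expect to be the crux is the honest two-dimensional bound $|S_\alpha|\ll e^{CT^{2}\ln T}$ on a neighbourhood of the segment near the critical line — upgrading Proposition~\ref{seqbound} from the segment itself — and the correct accounting of the accumulation of eigenvalue poles there; this is also where the freedom to pass to a subsequence $\{T_{N(j)}\}$, and the test function of \cite{U}, Lemma~11, become useful.
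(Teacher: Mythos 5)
Your proposal is a genuinely different route from the paper, and its first half is fine, but as written it has a real gap exactly where you flag ``the crux''. The factorisation-plus-Harnack treatment of the range $t\in[0,\eta]$ needs an upper bound of the shape $e^{CT^{2}\ln T}$ for $S_\alpha$ (after dividing out nearby zeros and poles) on a full two-dimensional neighbourhood of $\tfrac12+\i T$, and nothing in the paper supplies this: Proposition~\ref{seqbound} is proved only on the one-dimensional segments $[T_N,T_N-\i\sigma]$, and its proof does not simply ``apply throughout the disc''. A disc of bounded radius about $\tfrac12+\i T$ contains of order $T$ eigenvalue poles of $S_\alpha$, and it also crosses into $\Re s<\tfrac12$, where the continuation of $S_\alpha$ (cf.\ the residual term in \eqref{rep} and the functional equation \eqref{FE}) brings in $E(z_0,s)E(z_0,1-s)$ and hence $\varphi(s)$ off the critical line; the subsequence $\lbrace T_{N(j)}\rbrace$ only guarantees separation from the ordinates $\kappa_l$ in the \emph{vertical} direction, so the bounds of Lemmas~\ref{scatbound} and \ref{Ebound} degenerate at points of the disc lying close to the eigenvalues and resonances inside it. Moreover, to run Harnack on $\log|h|$ you need a one-sided bound for $h=S_\alpha\prod_k(s-p_k)/\prod_k(s-z_k)$ on the disc, i.e.\ a lower bound for $\prod_k|s-z_k|$ or a maximum-principle argument on a circle avoiding all zeros and poles by a quantitative margin, together with a count of the perturbed resonances $z_k$ near height $T$; none of this is set up. So the $[0,\eta]$ contribution is not established by the proposal.

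The irony is that your own central-range tool already closes this gap, making the excursion through \eqref{rep} unnecessary. The whole segment $\lbrace\tfrac12+t+\i T:\,0<t\le\sigma\rbrace$ lies in the region $\Re s\ge\tfrac12$, where by Theorem~\ref{prop} and the computation \eqref{Im} in the proof of Proposition~\ref{lim} the function $S_\alpha$ is holomorphic, zero-free, and satisfies $0\le\arg S_\alpha\le\pi$ (its poles sit on the boundary line $\Re s=\tfrac12$). Hence $\Phi=-\i\log S_\alpha$ maps the open half-strip $\lbrace\tfrac12<\Re s<\tfrac12+\sigma,\ \Im s>0\rbrace$ into the width-$\pi$ strip, and Schwarz--Pick applies for \emph{every} $t>0$: the hyperbolic distance from $\tfrac12+t+\i T$ to a reference point at bounded distance from the right edge is $\ll\log(1/t)+O_\sigma(1)$, and at that reference point $\big|\log|S_\alpha|\big|\ll\log\log T$ by the asymptotics behind \eqref{log}. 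This yields the pointwise bound $\big|\log|S_\alpha(\tfrac12+t+\i T)|\big|\ll\log(1/t)+\log\log T$, whose $t$-integral converges at $t=0$, so the integral in the Proposition is $\ll_\sigma\log\log T$ for every large $T$ --- far stronger than $T^{2+\epsilon}$, with no subsequence, no $\eta$-splitting, and no two-dimensional bound needed. For comparison, the paper's own proof is entirely different and never bounds $\log|S_\alpha|$ pointwise: it combines the already-established convergence of the boundary integrals from the truncated trace formula with the special test function of Lemma~11 in \cite{U}, whose derivative satisfies $|\Re h_\epsilon'|\gg T_{N(j)}^{-2-\epsilon}$ with fixed sign on the segment, and with the one-sided normalisation $\log\lbrace c(\Gamma)E_j|S_\alpha|\rbrace\le0$ furnished by Proposition~\ref{seqbound}, thereby converting a convergent signed integral into the stated bound on the absolute integral; this is why the subsequence and the $e^{CT_N^2\ln T_N}$ bound are essential there but would be superfluous in your (repaired) approach.
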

\begin{proof}
Consider the sequence $\lbrace T_{N}\rbrace_{N}$ of Proposition \ref{seqbound}. We choose the subsequence $\lbrace T_{N(j)}\rbrace_{j}$ as in Lemma 11 in \cite{U}. Fix $\epsilon>0$. We pick the test function $h_{\epsilon}\in H_{\sigma, \epsilon}$ given in this lemma (the explicit construction is given in the appendix of \cite{U})
and we recall its properties
\begin{equation}
\overline{h_\epsilon(\rho)}=h_\epsilon(\bar{\rho})
\end{equation}
and uniformly in $\rho\in[T_{N(j)},T_{N(j)}-\i\sigma]$
\begin{equation}
|\Re h'_\epsilon(\rho)|\gg T_{N(j)}^{-2-\epsilon}.
\end{equation}

We have by integration by parts
\begin{equation}\label{byparts}
\begin{split}
&\left\{\int_{T_{N(j)}-\i\sigma}^{T_{N(j)}}+\int_{-T_{N(j)}}^{-T_{N(j)}-\i\sigma}\right\}h_{\epsilon}(\rho)\frac{d}{d\rho}\log  S_\alpha(\tfrac{1}{2}+\i\rho)d\rho\\
=\;&h_{\epsilon}(T_{N(j)})\log S_\alpha(\tfrac{1}{2}+\i T_{N(j)})\\
&-h_{\epsilon}(T_{N(j)}-\i\sigma)\log S_\alpha
(\tfrac{1}{2}+\sigma+\i T_{N(j)})\\
&+h_{\epsilon}(-T_{N(j)}-\i\sigma)\log S_\alpha
(\tfrac{1}{2}+\sigma-\i T_{N(j)})\\
&-h_{\epsilon}(-T_{N(j)})\log S_\alpha(\tfrac{1}{2}-\i T_{N(j)})\\
&-\left\{\int_{T_{N(j)}-\i\sigma}^{T_{N(j)}}+\int_{-T_{N(j)}}^{-T_{N(j)}-\i\sigma}\right\}h_{\epsilon}'(\rho)\log S_\alpha(\tfrac{1}{2}+\i\rho)d\rho.
\end{split}
\end{equation}
We know that the LHS converges. Recall
\begin{equation}\label{log}
\begin{split}
|S_\alpha(\tfrac{1}{2}+\sigma\pm\i T_{N(j)})|&\sim\psi(\tfrac{1}{2}+\sigma\pm\i T_{N(j)})\\
&\sim\log T_{N(j)},\qquad j\to\infty.
\end{split}
\end{equation}
We have
\begin{equation}
|h_{\epsilon}(\rho)|\ll(1+|\Re\rho|)^{-2-\epsilon}
\end{equation}
uniformly for $|\Im\rho|\leq\sigma$. It follows that the second and third term converge to zero as $j\to\infty$. For the first term and very similarly for the fourth term we have
\begin{equation}
\begin{split}
&h_{\epsilon}(T_{N(j)})\log S_\alpha(\tfrac{1}{2}+\i T_{N(j)})\\
=\;&h_{\epsilon}(T_{N(j)})\log|S_\alpha(\tfrac{1}{2}+\i T_{N(j)})|\\
&+\i h_{\epsilon}(T_{N(j)})\arg S_\alpha(\tfrac{1}{2}+\i T_{N(j)}).
\end{split}
\end{equation}
We can rewrite the fifth term as
\begin{equation}
\begin{split}
&\left\{\int_{T_{N(j)}-\i\sigma}^{T_{N(j)}}+\int_{-T_{N(j)}}^{-T_{N(j)}-\i\sigma}\right\}h_{\epsilon}'(\rho)\log |S_\alpha(\tfrac{1}{2}+\i\rho)|d\rho\\
&+\i\left\{\int_{T_{N(j)}-\i\sigma}^{T_{N(j)}}+\int_{-T_{N(j)}}^{-T_{N(j)}-\i\sigma}\right\}h_{\epsilon}'(\rho)\arg S_\alpha(\tfrac{1}{2}+\i\rho)d\rho
\end{split}
\end{equation}

Recall that $\arg S_\alpha(\tfrac{1}{2}+\i\rho)$ is bounded away from poles in $-\sigma\leq\Im\rho\leq0$. This implies $|\arg S_\alpha(\tfrac{1}{2}+\i\rho)|\ll1$ for $\rho\in[T_{N}(j),T_{N}(j)-\i\sigma]$, as we recall that the sequence $\lbrace T_{N(j)}\rbrace_{j}$ is chosen such that the intervals do not contain any poles of $S_\alpha(\tfrac{1}{2}+\i\rho)$. It hence follows
\begin{equation}
|h_{\epsilon}(\pm T_{N(j)})||\arg(S_\alpha(\tfrac{1}{2}\pm\i T_{N(j)}))|\ll T_{N(j)}^{-2-\epsilon}
\end{equation}
and
\begin{equation}
\left\{\int_{T_{N(j)}-\i\sigma}^{T_{N(j)}}+\int_{-T_{N(j)}}^{-T_{N(j)}-\i\sigma}\right\}h_{\epsilon}'(\rho)\arg S_\alpha(\tfrac{1}{2}+\i\rho)d\rho\ll T_{N}^{-2-\epsilon}.
\end{equation}

Combining all this we conclude that the limit of
\begin{equation}
\begin{split}
&\lbrace h_{\epsilon}(T_{N(j)})-h_{\epsilon}(-T_{N(j)})\rbrace\log|S_\alpha(\tfrac{1}{2}+\i T_{N(j)})|\\
&-\left\{\int_{T_{N(j)}-\i\sigma}^{T_{N(j)}}+\int_{-T_{N(j)}}^{-T_{N(j)}-\i\sigma}\right\}h_{\epsilon}'(\rho)\log |S_\alpha(\tfrac{1}{2}+\i\rho)|d\rho
\end{split}
\end{equation}
which we can rewrite as
\begin{equation}\label{rearr}
\begin{split}
=\;&\int_{T_{N(j)}-\i\sigma}^{T_{N(j)}}\lbrace h'_{\epsilon}(\rho)-h'_{\epsilon}(-\bar{\rho})\rbrace\log |S_\alpha(\tfrac{1}{2}+\i\rho)|d\rho\\
=\;&-2\i\int_{0}^{\sigma}\Re h_{\epsilon}'(T_{N(j)}+\i(r-\sigma))\log
|S_\alpha(\tfrac{1}{2}+\sigma-r+\i T_{N(j)})|dr
\end{split}
\end{equation}
exists as $j\to\infty$. We have used evenness of $h_{\epsilon}$ and $h'_{\epsilon}(-\bar{\rho})=-h'_{\epsilon}(\bar{\rho}) =-\overline{h'_{\epsilon}(\rho)}$. 
From Proposition \ref{seqbound} we know that there exists a positive constant $c(\Gamma)$ such that 
\begin{equation}
\log \lbrace c(\Gamma)e^{-16C_{2}(\Gamma)T_{N(j)}^{2}\ln T_{N(j)}}|S_\alpha(\tfrac{1}{2}+\i\rho)|\rbrace\leq0
\end{equation} 
for all $j$ and $\rho\in[T_{N(j)},T_{N(j)}-\i\sigma]$. From Lemma 25 we have for $\rho\in[T_{N(j)}-\i\sigma,T_{N(j)}]$ and $T_{N(j)}$ large that $-\Re h_{\epsilon}'(\rho)=|\Re h_{\epsilon}'(\rho)|>\!\!>T_{N(j)}^{-2-\epsilon}$ uniformly in $j$ and $\rho$. Let $E_j=e^{-16C_{2}(\Gamma)T_{N(j)}^{2}\ln T_{N(j)}}$. Hence
\begin{equation}
\begin{split}
&T_{N(j)}^{-2-\epsilon}\int_{0}^{\sigma}\left|\log\left\{ c(\Gamma)E_j|S_\alpha(\tfrac{1}{2}+\sigma-r+\i T_{N(j)})|\right\}\right|dr\\
\ll\;&\int_{0}^{\sigma}|\Re h_{\epsilon}'(T_{N(j)}+\i(r-\sigma))|\\
&\;\times\left|\log\left\{ c(\Gamma)E_j|S_\alpha(\tfrac{1}{2}+\sigma-r+\i T_{N(j)})|\right\}\right|dr\\
=\;&\Bigg|\int_{0}^{\sigma}\Re h_{\epsilon}'(T_{N(j)}+\i(r-\sigma))\\
&\;\times\log\left\{ c(\Gamma)E_j|S_\alpha(\tfrac{1}{2}+\sigma-r+\i T_{N(j)})|\right\} dr\Bigg|
\end{split}
\end{equation}
which converges as $j\to\infty$. It follows for any $\delta>0$
\begin{equation}
\begin{split}
&\int_{T_{N(j)}-\i\sigma}^{T_{N(j)}}\big|\log|S_\alpha(\tfrac{1}{2}+\i\rho)|\big||d\rho|\\
=&\int_{0}^{\sigma}\big|\log\lbrace c(\Gamma)E_j|S_\alpha(\tfrac{1}{2}+\sigma-r+\i T_{N(j)})\rbrace\big|dr\\
&+O(T_{N(j)}^{2}\ln T_{N(j)})\\
\ll&\,T_{N(j)}^{2+\delta}.
\end{split}
\end{equation}
\end{proof}

We can now apply Proposition \ref{prop24} to derive the vanishing of the sequence of boundary terms \eqref{bdterms}.
\begin{thm}\label{thm25}
Let $\delta,\epsilon>0$ and $\lbrace T_{N(j)}\rbrace_{j}$ as above. Then for any $h\in H_{\sigma+\epsilon,\delta}$
\begin{equation}
\lim_{T_{N(j)}\to\infty}\int_{T_{N(j)}-\i\sigma}^{T_{N(j)}}h(\rho)\frac{S'_\alpha}{S_\alpha}(\tfrac{1}{2}+\i\rho)d\rho=0.
\end{equation}
\end{thm}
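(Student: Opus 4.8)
The plan is to integrate by parts, moving the derivative off $\log S_\alpha$ onto $h$, and then feed the resulting integral into the mean bound of Proposition \ref{prop24}. Since $\frac{d}{d\rho}\log S_\alpha(\tfrac12+\i\rho)=\i\,\frac{S'_\alpha}{S_\alpha}(\tfrac12+\i\rho)$, and since by the choice of $\{T_{N(j)}\}_j$ the segment $[T_{N(j)}-\i\sigma,T_{N(j)}]$ carries no pole of $S_\alpha(\tfrac12+\i\rho)$ — and, for generic $\alpha$ and $j$ large, no zero either, since by Theorem \ref{prop} every zero of $S_\alpha$ with $\Im\rho\leq0$ is either on the imaginary $\rho$-axis or is one of the finitely many new eigenvalues on the real $\rho$-axis — integration by parts is legitimate and produces the boundary term $\tfrac1\i\big[h(\rho)\log S_\alpha(\tfrac12+\i\rho)\big]_{T_{N(j)}-\i\sigma}^{T_{N(j)}}$ together with $-\tfrac1\i\int_{T_{N(j)}-\i\sigma}^{T_{N(j)}}h'(\rho)\log S_\alpha(\tfrac12+\i\rho)\,d\rho$.

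For the integral I would write $\log S_\alpha=\log|S_\alpha|+\i\arg S_\alpha$. Because $h\in H_{\sigma+\epsilon,\delta}$ is holomorphic with decay $(1+|\Re\rho|)^{-2-\delta}$ on the slightly wider strip $|\Im\rho|\leq\sigma+\epsilon$, Cauchy's estimate gives $|h'(\rho)|\ll_\epsilon(1+|\Re\rho|)^{-2-\delta}$ uniformly for $|\Im\rho|\leq\sigma$; this enlargement of the strip is exactly what is needed to control $h'$ up to the boundary $\Im\rho=-\sigma$. On the segment $\Re\rho=T_{N(j)}$, so $|h'(\rho)|\ll_\epsilon T_{N(j)}^{-2-\delta}$. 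The $\arg$-part is then negligible: by the bound $|\arg S_\alpha(\tfrac12+\i\rho)|\ll1$ for $-\sigma\leq\Im\rho\leq0$ away from poles (established in the proof of Proposition \ref{lim}) it is $\ll_\epsilon\sigma\,T_{N(j)}^{-2-\delta}\to0$. For the modulus part I would invoke Proposition \ref{prop24}: fixing $0<\epsilon'<\delta$ one gets $\big|\int_{T_{N(j)}-\i\sigma}^{T_{N(j)}}h'(\rho)\log|S_\alpha(\tfrac12+\i\rho)|\,d\rho\big|\ll_\epsilon T_{N(j)}^{-2-\delta}\int_{T_{N(j)}-\i\sigma}^{T_{N(j)}}\big|\log|S_\alpha(\tfrac12+\i\rho)|\big|\,|d\rho|\ll_{\epsilon,\epsilon'}T_{N(j)}^{\epsilon'-\delta}\to0$.

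It remains to dispose of the two boundary terms. At $\rho=T_{N(j)}-\i\sigma$, i.e.\ $s=\tfrac12+\sigma+\i T_{N(j)}$, one has $|S_\alpha(\tfrac12+\sigma+\i T_{N(j)})|\sim\log T_{N(j)}$ as recalled in \eqref{log} (the Green-function sum in the expansion of $S_\alpha$ being $O(1)$ far out in $\Re s>1$), so $\log S_\alpha(\tfrac12+\sigma+\i T_{N(j)})=O(\log\log T_{N(j)})$, and since $|h(T_{N(j)}-\i\sigma)|\ll T_{N(j)}^{-2-\delta}$ this term vanishes in the limit. At $\rho=T_{N(j)}$ the argument of $S_\alpha$ is $O(1)$, while the modulus is bounded from above by Proposition \ref{seqbound} with $t=0$, giving $\log|S_\alpha(\tfrac12+\i T_{N(j)})|\leq C_2(\Gamma)T_{N(j)}^{2}\log T_{N(j)}+O(1)$, and from below by the functional equation $S_\alpha(\tfrac12+\i T)=\Re S_\alpha(\tfrac12+\i T)+\i\,|E(z_0,\tfrac12+\i T)|^2/(4T)$, which yields $|S_\alpha(\tfrac12+\i T_{N(j)})|\geq|E(z_0,\tfrac12+\i T_{N(j)})|^2/(4T_{N(j)})$; together with the choice of the sequence (so that $T_{N(j)}$ stays away from the zeros of $E(z_0,\tfrac12+\i\cdot)$) this keeps $\big|\log|S_\alpha(\tfrac12+\i T_{N(j)})|\big|\ll T_{N(j)}^{2+o(1)}$, and multiplying by $|h(T_{N(j)})|\ll T_{N(j)}^{-2-\delta}$ gives $\ll T_{N(j)}^{-\delta+o(1)}\to0$. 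Collecting the three vanishing pieces proves the theorem.

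The genuine difficulty has already been absorbed into Proposition \ref{prop24}: the $O(T^{2+\epsilon})$ mean bound for $\log|S_\alpha|$ across the segment rests on the pole-avoiding sequence of Proposition \ref{seqbound}, hence on the exponential bound $e^{C_2(\Gamma)T_N^{2}\log T_N}$ for the relative zeta function extracted via Lemmas \ref{scatbound} and \ref{Ebound}, and on the special test function of Lemma 11 in \cite{U}. Given that input, the only delicate point left here is the lower bound for $|S_\alpha(\tfrac12+\i T_{N(j)})|$ at the endpoint on the critical line — i.e.\ ensuring, via the functional equation, that $|E(z_0,\tfrac12+\i T_{N(j)})|$ is not super-exponentially small there — while everything else is a routine integration by parts exploiting the polynomial decay of $h$ and of $h'$.
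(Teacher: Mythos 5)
Your skeleton (integrate by parts, kill the endpoint at $\rho=T_{N(j)}-\i\sigma$ via the $\log\log$ estimate \eqref{log}, control the $\arg S_\alpha$ contributions by boundedness of the argument away from poles, and feed $\int h'\log|S_\alpha|$ into Proposition \ref{prop24}) is exactly the paper's, but there is a genuine gap at the endpoint $\rho=T_{N(j)}$ on the real axis. To control the boundary term $h(T_{N(j)})\log S_\alpha(\tfrac12+\i T_{N(j)})$ you need a \emph{lower} bound on $|S_\alpha(\tfrac12+\i T_{N(j)})|$, and you claim it follows from $\Im S_\alpha(\tfrac12+\i T)=|E(z_0,\tfrac12+\i T)|^2/(4T)$ ``together with the choice of the sequence (so that $T_{N(j)}$ stays away from the zeros of $E(z_0,\tfrac12+\i\cdot)$)''. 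Nothing of the sort is available: the sequence $\{T_{N(j)}\}$ is constructed (Lemma \ref{Ebound}, Proposition \ref{seqbound}, Lemma 11 of \cite{U}) only so as to avoid the poles of $S_\alpha(\tfrac12+\i\rho)$ and to provide \emph{upper} bounds on $w(T_N)$, hence on $|E(z_0,\cdot)|$ and $|S_\alpha|$; it is not chosen with any reference to zeros of $E(z_0,\tfrac12+\i\cdot)$, and even exact avoidance of such zeros gives no quantitative lower bound. A priori $|E(z_0,\tfrac12+\i T_{N(j)})|$, and simultaneously $\Re S_\alpha(\tfrac12+\i T_{N(j)})$, could be super-exponentially small at these points; no pointwise lower bound for Eisenstein values (or for $|S_\alpha|$ on the critical line) is proved anywhere in the paper, nor is one known in this generality. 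So your estimate $\big|\log|S_\alpha(\tfrac12+\i T_{N(j)})|\big|\ll T_{N(j)}^{2+o(1)}$ is unsupported, and the endpoint term is not controlled.

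The paper avoids ever needing such a lower bound by a symmetrization you dropped: it integrates by parts over the \emph{pair} of segments $[T_{N(j)}-\i\sigma,T_{N(j)}]$ and $[-T_{N(j)},-T_{N(j)}-\i\sigma]$ simultaneously. Since $\alpha$ is real, $S_\alpha(\bar s)=\overline{S_\alpha(s)}$, hence $|S_\alpha(\tfrac12+\i T)|=|S_\alpha(\tfrac12-\i T)|$, and evenness of $h$ (condition (i) in the definition of $H_{\sigma,\delta}$) makes the two endpoint contributions $h(\pm T_{N(j)})\log|S_\alpha(\tfrac12\pm\i T_{N(j)})|$ cancel identically; only the bounded $\arg$-parts survive, and the two remaining integrals combine into $\int_{T_{N(j)}-\i\sigma}^{T_{N(j)}}\{h'(\rho)-h'(-\bar\rho)\}\log|S_\alpha(\tfrac12+\i\rho)|\,d\rho$ (identity \eqref{rearr}), which is $\ll T_{N(j)}^{-2-\delta}\cdot T_{N(j)}^{2+\epsilon'}\to0$ by Proposition \ref{prop24}. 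Note that what is thereby established --- and what the derivation of the trace formula actually uses, cf.\ \eqref{bdterms} --- is the vanishing of the symmetric sum of the two boundary integrals rather than of the single integral in isolation; proving the single-segment statement along your route would require precisely the pointwise lower bound on $|S_\alpha(\tfrac12+\i T_{N(j)})|$ that neither you nor the paper possesses.
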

\begin{proof}
We follow the same lines as in Proposition \ref{prop24}. In exactly the same way as in the proof above we obtain the identity \eqref{rearr} for $h\in H_{\sigma+\epsilon,\delta}\subset H_{\sigma,\delta}$. So
\begin{equation}
\begin{split}
&\lim_{j\to\infty}\left\{\int_{T_{N(j)}-\i\sigma}^{T_{N(j)}}+\int_{-T_{N(j)}}^{-T_{N(j)}-\i\sigma}\right\}h(\rho)\frac{d}{d\rho}\log S_\alpha(\tfrac{1}{2}+\i\rho)d\rho\\
=&\lim_{j\to\infty}\int_{T_{N(j)}-\i\sigma}^{T_{N(j)}}\lbrace h'(\rho)-h'(-\bar{\rho})\rbrace\log|S_\alpha(\tfrac{1}{2}+\i\rho)|d\rho.
\end{split}
\end{equation}
The limit vanishes since
\begin{equation}
\begin{split}
&\int_{T_{N(j)}-\i\sigma}^{T_{N(j)}}|\lbrace h'(\rho)-h'(-\bar{\rho})\rbrace|\big|\log|S_\alpha(\tfrac{1}{2}+\i\rho)|\big||d\rho|\\
\ll\;&T_{N(j)}^{-2-\delta}\int_{T_{N(j)}-\i\sigma}^{T_{N(j)}}\big|\log|S_\alpha(\tfrac{1}{2}+\i\rho)|\big|d\rho|.
\end{split}
\end{equation}
where we have used Proposition \ref{prop24} and observe that by Cauchy's theorem $h\in H_{\sigma+\epsilon,\delta}$ implies
\begin{equation}
|h'(\rho)|\ll(1+|\Re\rho|)^{-2-\delta}
\end{equation}
uniformly in $|\Im\rho|\leq\sigma$.
\end{proof}

As an application of Theorem \ref{thm25} we can now prove the trace formula. Let $h\in H_{\sigma,\delta}$ for any $\sigma>\tfrac{1}{2}$ and $\delta>0$. We will make a specific choice of $\sigma$ below. We observe that
\begin{equation}
\begin{split}
&\int_{-\i\sigma-T}^{-\i\sigma+T}h(\rho)\frac{S'_\alpha}{S_\alpha}(\tfrac{1}{2}+\i\rho)d\rho\\
=\,&h(-\i\sigma+T)\log S_\alpha(\tfrac{1}{2}+\sigma+\i T)\\
&-h(-\i\sigma-T)\log S_\alpha(\tfrac{1}{2}+\sigma-\i T)\\
&-\int_{-\i\sigma-T}^{-\i\sigma+T}h'(\rho)\log S_\alpha(\tfrac{1}{2}+\i\rho)d\rho
\end{split}
\end{equation}
and, because of \eqref{log}, \eqref{Im} and the decay of $h$,
\begin{equation}
\lim_{T\to\infty}h(-\i\sigma\pm T)\log S_\alpha(\tfrac{1}{2}+\sigma\pm\i T)=0.
\end{equation}
Thus we can take $T\to\infty$ in \eqref{prop16} to obtain
\begin{equation}\label{pretrace}
\begin{split}
&\sum_{j\geq-M}h(\rho^{\alpha}_{j})-\sum_{j\geq-M}h(\rho_{j})\\
=\;&-\frac{1}{2\pi\i}\int_{-\i\sigma-\infty}^{-\i\sigma+\infty}h'(\rho)\log S_\alpha(\tfrac{1}{2}+\i\rho)d\rho\\
&+\tfrac{1}{2}\delta_{\Gamma}h(0)+\frac{1}{4\pi}\int_{-\infty}^{+\infty}h(\rho)\frac{\theta'_\alpha}{\theta_\alpha}(\tfrac{1}{2}+\i\rho)d\rho.
\end{split}
\end{equation}
where we have used Theorem \ref{thm25}, Proposition \ref{lim}, the existence of the limit \eqref{limbound}, the standard bound on the unperturbed cuspidal spectrum and our choice of generic $\alpha$ which ensures that the sum over perturbed eigenvalues is finite. Since $S_\alpha(s)$ depends continuously on $\alpha\neq0$ and Proposition \ref{lim} gives us control over the resonances we can extend \eqref{pretrace} to the countable set of non-generic values of $\alpha$. In particular Proposition \ref{lim} and the fact that the zeros of of $S_\alpha(s)$ move continuously under variation of $\alpha\neq0$ imply that the trace over unperturbed eigenvalues will converge absolutely.

The first term on the RHS can now be split as follows
\begin{equation}
\begin{split}
&-\frac{1}{2\pi\i}\int_{-\i\sigma-\infty}^{-\i\sigma+\infty}h'(\rho)\log S_\alpha(\tfrac{1}{2}+\i\rho)d\rho\\
=&\frac{1}{2\pi\i}\int_{-\i\sigma-\infty}^{-\i\sigma+\infty}h(\rho)\frac{m_\Gamma\beta\psi'(\tfrac{1}{2}+\i\rho)}{1+m_\Gamma\beta\psi(\tfrac{1}{2}+\i\rho)}d\rho\\
&+\frac{1}{2\pi\i}\int_{-\i\sigma-\infty}^{-\i\sigma+\infty}h'(\rho)\log\left\{1+\frac{G_{1/2+\i\rho}^{\Gamma\setminus\scrI}(z_0,z_0)}{1+m_\Gamma\beta\psi(\tfrac{1}{2}+\i\rho)}\right\}d\rho
\end{split}
\end{equation}
where
$$G_{1/2+\i\rho}^{\Gamma\setminus\scrI}(z_0,z_0)=\sum_{\gamma\in\Gamma\setminus\scrI}G_{1/2+\i\rho}(z_0,\gamma z_0).$$
The following bound justifies expanding the logarithm on the r.h.s. into a power series for sufficiently large $\sigma$.
\begin{lem}
Let $\sigma\geq\frac{1}{2}$ be large enough such that \eqref{condition} is satisfied. Then
\begin{equation}
\left|\frac{G_{1/2+\i\rho}^{\Gamma\setminus\scrI}(z_0,z_0)}{1+m_\Gamma\beta\psi(\tfrac{1}{2}+\i\rho)}\right|<1.
\end{equation}
\end{lem}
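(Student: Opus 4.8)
The plan is to bound numerator and denominator separately on the horizontal line $\Im\rho=-\sigma$, i.e. for $s=\tfrac{1}{2}+\i\rho$ with $\Re s=\tfrac{1}{2}+\sigma$. For the denominator, recall the classical asymptotics of the digamma function: for $\Re s$ large and bounded imaginary part, $\psi(s)=\tfrac{1}{2\pi}\log s+O(1/|s|)$, so $|1+m_\Gamma\beta\psi(\tfrac{1}{2}+\i\rho)|\gg \log(\tfrac{1}{2}+\sigma)$ once $\sigma$ is large, uniformly in $\Re\rho$. (One must of course check that the implied constant depends only on $\Gamma$, $\alpha$, $z_0$ — it does, since $m_\Gamma$ and $\beta$ are fixed — and that the digamma term dominates for all $\Re\rho$ because $|\tfrac{1}{2}+\sigma+\i\Re\rho|\ge\tfrac{1}{2}+\sigma$.)

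For the numerator, the key point is to estimate $G_{1/2+\i\rho}^{\Gamma\setminus\scrI}(z_0,z_0)=\sum_{\gamma\in\Gamma\setminus\scrI}G_{1/2+\i\rho}(z_0,\gamma z_0)$ in the half-plane $\Re s>1$. Using the standard formula for the free Green's function on $\HH$ in terms of the Legendre function $Q_{s-1}$ (equivalently the hypergeometric representation, cf. \cite{Iw}, Ch.\ 1), one has $|G_s(z_0,\gamma z_0)|\ll e^{-(\Re s)\, d(z_0,\gamma z_0)}$ for $\Re s$ bounded away from $1$ and the distance $d(z_0,\gamma z_0)=l_{\gamma,z_0}$ bounded below. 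Since the orbit $\{\gamma z_0:\gamma\in\Gamma\setminus\scrI\}$ is discrete, the number of $\gamma$ with $l_{\gamma,z_0}\le R$ grows at most like $e^R$ (lattice point counting on a finite-volume surface), so the series $\sum_\gamma e^{-(\Re s) l_{\gamma,z_0}}$ converges and is $\ll e^{-c(\tfrac12+\sigma)^{1/2}}\cdot(\text{something})$ — more precisely one pulls out the shortest distance $l_0=\min_{\gamma\in\Gamma\setminus\scrI}l_{\gamma,z_0}>0$ and bounds the tail geometrically to get $|G_{1/2+\i\rho}^{\Gamma\setminus\scrI}(z_0,z_0)|\ll C(\Gamma,z_0)\,e^{-\sigma l_0}$, or at any rate a bound that decays with $\sigma$. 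The quantitative shape of condition \eqref{condition}, namely $(\tfrac12+\sigma)^{1/2}\log(\tfrac12+\sigma)>C(\Gamma,\alpha,z_0)$, signals that the relevant comparison is between the $\log(\tfrac12+\sigma)$ lower bound on the denominator and an upper bound on the numerator of the form $C(\Gamma,\alpha,z_0)(\tfrac12+\sigma)^{-1/2}$; dividing, one gets a ratio $\ll C(\Gamma,\alpha,z_0)\big/\big((\tfrac12+\sigma)^{1/2}\log(\tfrac12+\sigma)\big)$, which is $<1$ exactly when \eqref{condition} holds.

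So the proof is: (i) write $s=\tfrac12+\sigma+\i\Re\rho$, $\Re s=\tfrac12+\sigma\ge 1$; (ii) lower-bound $|1+m_\Gamma\beta\psi(s)|\gg_{\Gamma,\alpha,z_0}\log(\tfrac12+\sigma)$ via the asymptotic expansion of $\psi$; (iii) upper-bound $|G_s^{\Gamma\setminus\scrI}(z_0,z_0)|\ll_{\Gamma,z_0}(\tfrac12+\sigma)^{-1/2}$ by summing the pointwise decay estimate for the automorphic Green's function over the orbit and using the exponential lattice-point bound; (iv) divide and invoke \eqref{condition} to conclude the quotient is $<1$. Steps (i)–(ii) and (iv) are routine; the substantive step is (iii).

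\medskip

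\textbf{Main obstacle.} The delicate point is step (iii): getting a clean, uniform-in-$\Re\rho$ bound on $G_{1/2+\i\rho}(z_0,\gamma z_0)$ that decays in $\sigma=\Re\xi$ at the rate $(\tfrac12+\sigma)^{-1/2}$ suggested by \eqref{condition}, and then summing over the full orbit $\Gamma\setminus\scrI$ while keeping track of the constant. This requires the explicit special-function bound $Q_{s-1}(\cosh l)\ll (\Re s)^{-1/2}(\sinh l)^{-1/2}e^{-(\Re s)l}$ (or the equivalent hypergeometric estimate) valid uniformly as $\Re s\to\infty$, together with the counting estimate $\#\{\gamma\in\Gamma\setminus\scrI: l_{\gamma,z_0}\le R\}\ll e^{R}$; one must verify the resulting series $\sum_{\gamma}(\Re s)^{-1/2}e^{-(\Re s)l_{\gamma,z_0}}$ is dominated by its first term up to a constant depending only on $\Gamma$ and $z_0$, which is where the genuine content lies.
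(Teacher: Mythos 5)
Your argument is correct and follows essentially the same route as the paper: bound the numerator on the line $\Im\rho=-\sigma$ by a quantity decaying like $\sigma^{-1/2}$, bound the denominator below by $\log(\tfrac{1}{2}+\sigma)$ via the digamma asymptotics, and let condition \eqref{condition} absorb the implied constants. The "substantive step (iii)" you isolate (the Legendre-function decay of the free Green's function summed over the orbit with the $e^{R}$ lattice-point count) is precisely the content of Lemma 5 in \cite{U}, which the paper simply cites, so apart from the minor correction that the exponential decay rate is $e^{-(\Re s-\frac12)l}$ rather than $e^{-(\Re s)l}$, nothing further is needed.
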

\begin{proof}
For $\Im\rho=-\sigma$ we have the bound (cf. Lemma 5 in \cite{U}, which works for any discrete subgroup)
\begin{equation}\label{bound on remainder}
|G_{1/2+\i\rho}^{\Gamma\setminus\scrI}(z_0,z_0)|\ll \frac{1}{\sqrt{\sigma}}.
\end{equation}
Also note $|\psi(\tfrac{1}{2}+\i\rho)|\gg\log(\tfrac{1}{2}+\sigma)$ for $\Im\rho=-\sigma$. So there exists a constant $C(\Gamma,\alpha,z_0)>0$ which depends on the implied constant in \eqref{bound on remainder}, such that \eqref{condition} implies the result. 
\end{proof}

Following the arguments of the compact case (cf. \cite{U}, eq. (5.47)-(5.58), pp. 17-19) we obtain
\begin{equation}
\begin{split}
&-\frac{1}{2\pi\i}\int_{-\i\sigma-\infty}^{-\i\sigma+\infty}h'(\rho)\log S_\alpha(\tfrac{1}{2}+\i\rho)d\rho\\
=&\frac{1}{2\pi}\int_{-\i\nu-\infty}^{-\i\nu+\infty}h(\rho)\frac{m_\Gamma\beta\psi'(\tfrac{1}{2}+\i\rho)}{1+m_\Gamma\beta\psi(\tfrac{1}{2}+\i\rho)}d\rho\\
&+\sum_{k=1}^{\infty}\,\beta^{k}\sum_{\gamma_{1},\cdots,\gamma_{k}\in\Gamma\setminus\scrI}
\int_{l_{\gamma_{1},z_{0}}}^{\infty}\cdots\int_{l_{\gamma_{k},z_{0}}}^{\infty}\frac{g_{\beta,k}(t_{1}+...+t_{k})\prod_{n=1}^{k}dt_{n}}{\prod_{n=1}^{k}\sqrt{\cosh t_{n}-\cosh l_{\gamma_{n},z_{0}}}}
\end{split}
\end{equation}
where $\nu=\nu(\alpha)$ is chosen as above and
\begin{equation}
g_{\beta,k}(t)=\frac{(-1)^{k}}{2\pi\i k}\int_{-\i\nu-\infty}^{-\i\nu+\infty}\frac{h'(\rho)e^{-\i\rho t}d\rho}{(1+m_{\Gamma}\beta\psi(\tfrac{1}{2}+\i\rho))^{k}}.
\end{equation}

Because of the identity
\begin{equation}
\varphi_\alpha(s)=\theta_\alpha(s)\varphi(s)
\end{equation}
we rewrite the second term as
\begin{equation}
\frac{1}{4\pi}\int_{-\infty}^{\infty}h(\rho)\left\{\frac{\varphi'_\alpha}{\varphi_\alpha}(\tfrac{1}{2}+\i\rho)-\frac{\varphi'}{\varphi}(\tfrac{1}{2}+\i\rho)\right\}d\rho.
\end{equation}
This concludes the proof of Theorem \ref{thm3}.

\bibliographystyle{plain}

\end{document}